\newtheorem{theorem}{Theorem}
\newtheorem{lemma}[theorem]{Lemma}
\newtheorem{proposition}[theorem]{Proposition}
\newtheorem{corollary}[theorem]{Corollary}
\newcommand{\Z}{\mathbb Z}
\newcommand{\N}{\mathbb N}
\newcommand{\R}{\mathbb R}
\newcommand{\E}{\mathrm E}
\newcommand{\Var}{\mathrm{Var}}
\newcommand{\CV}{\mathrm{CV}}
\DeclarePairedDelimiterX\condexpdelim[2]{[}{]}{#1\mathclose{}\, \delimsize\vert\,\mathopen{}#2}
\newcommand{\condexp}[3][]{\E\condexpdelim[#1]{#2}{#3}}
\newcommand{\dif}{\mathop{}\mathrm{d}}
\DeclareMathOperator{\cost}{cost}
\newcommand{\ineq}[1]{inequality~\labelcref{#1}}
\definecolor{myred}{HTML}{DB001F}
\definecolor{myorange}{HTML}{F68C21}
\definecolor{mygreen}{HTML}{238224}
\definecolor{myblue}{HTML}{3840FE}
\definecolor{mypurple}{HTML}{730087}
\pgfplotsset{
 compat=1.17, height=7cm, width=.65\textwidth, axis x line=bottom, axis y line=left, clip mode=individual, samples=1000, clip=false,
 axis line style={->, thick},
 x label style={at={(ticklabel* cs:1)}, anchor=west},
 y label style={at={(ticklabel* cs:0.99)}, rotate=-90, anchor=west}
}
\newcommand{\thirdoffive}[5]{#3}
\newcommand{\lcthirdoffive}[5]{\MakeLowercase{#3}}
\newcommand{\lcnameref}[1]{\begingroup\let\thirdoffive\lcthirdoffive\nameref{#1}\endgroup}
\title{An Improved Greedy Algorithm for Stochastic Online Scheduling on Unrelated Machines}
\author{Sven Jäger\thanks{This work was primarily done while affiliated with: Technische Universität Berlin, Straße des 17.~Juni 136, 10623 Berlin, Germany. Present address: Technische Universität Kaiserslautern, Paul-Ehrlich-Straße 14, 67663 Kaiserslautern, Germany}}
\begin{document}

\maketitle

\begin{abstract}
 Most practical scheduling applications involve some uncertainty about the arriving times and lengths of the jobs. Stochastic online scheduling is a well-established model capturing this. Here the arrivals occur online, while the processing times are random. For this model, \citeauthor{GMUX21} recently devised an efficient policy for non-preemptive scheduling on unrelated machines with the objective to minimize the expected total weighted completion time. We improve upon this policy by adroitly combining greedy job assignment with $\alpha_j$-point scheduling on each machine. In this way we obtain a $(3+\sqrt 5)(2+\Delta)$-competitive deterministic and an $(8+4\Delta)$-competitive randomized stochastic online scheduling policy, where $\Delta$ is an upper bound on the squared coefficients of variation of the processing times. We also give constant performance guarantees for these policies within the class of all fixed-assignment policies. The $\alpha_j$-point scheduling on a single machine can be enhanced when the upper bound $\Delta$ is known a priori or the processing times are known to be $\delta$-NBUE for some $\delta \ge 1$. This implies improved competitive ratios for unrelated machines but may also be of independent interest.
\end{abstract}

\section{Introduction} \label{sec:introduction}

We consider stochastic scheduling on $m$ unrelated machines with the objective of minimizing the expected total weighted completion time of all jobs. Each job has to be scheduled uninterruptedly on one of the machines, and at any point in time, each machine can process at most one job. We assume that the time that machine~$i$ needs to process job~$j$ is a random variable~$\bm p_{ij}$ and that the processing times of different jobs are stochastically independent. The distribution of each $\bm p_{ij}$ is known beforehand, but the actual processing time of job~$j$ becomes only known when it is completed. In addition, each job~$j$ has a deterministic weight~$w_j \ge 0$ and release date~$r_j \ge 0$, before which it cannot be started. The goal is to find a non-anticipative \emph{scheduling policy} that minimizes the expected total weighted completion time for the given processing time distributions. Any decision of such a policy at some time~$t$ must depend only on the input data and the information about processing times gathered up to time~$t$. A formal definition of a scheduling policy is given by \citet{MRW84}; see also \cite{MR85}. According to this definition, a scheduling policy is, in analogy to a schedule, specific to an instance of the stochastic scheduling problem. In contrast, we regard a scheduling policy as a general rule that works for an arbitrary number of jobs and meets the technical requirements of \citeauthor{MRW84}'s definition for every possible problem instance. In the $3$-field notation the considered problem is denoted as $\mathrm R \mid \bm p_{ij} \sim \mathrm{stoch},\ r_j \mid \E[\sum w_j \bm C_j]$. A special class of scheduling policies are \emph{fixed-assignment policies} whose assignment of jobs to machines is independent of the processing time realizations.

Stochastic processing times model one aspect of uncertainty found in many practical applications. Another aspect in most real-world problems is that the scheduler does not know jobs to be released in the future. This is captured by the online time model. A scheduling policy whose decisions at any time do not depend on jobs released after that time is called a stochastic online scheduling policy or, for short, \emph{online policy}. We additionally demand that jobs cannot be hold in a central queue but be immediately assigned to some machine at their arrival. This requirement, known as \emph{immediate dispatch}, occurs in practical applications such as large computing systems~\cite{AA07}.

The performance of an online policy is measured relative to an offline scheduling policy, which knows all job weights, release dates, and processing time distributions from the beginning, but also only learns the actual processing times over time. An online policy~$\Pi$ is called \emph{$\rho$-competitive} if for any problem instance~$(F_{ij}, w_j, r_j)$ and any scheduling policy~$\Pi'$
\[\E_{\bm p_{ij} \sim F_{ij}}\biggl[\sum_j w_j \bm C_j^{\Pi}\biggr] \le \rho \cdot \E_{\bm p_{ij} \sim F_{ij}}\biggl[\sum_j w_j \bm C_j^{\Pi'}\biggr].\] Here $F_{ij}$ denotes the probability distribution of the processing time~$\bm p_{ij}$. We also consider \emph{randomized online policies}, which have access to an arbitrary number of independent random variables~$\bm X = (\bm X_1,\dotsc,\bm X_k)$. Such a policy $\Pi_{\bm X}$ is called \emph{$\rho$-competitive} if for all problem instances~$(F_{ij}, w_j, r_j)$ and any scheduling policy~$\Pi'$
\[\E_{\bm p_{ij} \sim F_{ij},\bm X}\biggl[\sum_j w_j \bm C_j^{\Pi_{\bm X}}\biggr] \le \rho \cdot \E_{\bm p_{ij} \sim F_{ij}}\biggl[\sum_j w_j \bm C_j^{\Pi'}\biggr].\] This definition amounts to the oblivious adversary model for the stochastic scheduling problem.

\subsection{Related work}
\paragraph{Stochastic scheduling}

The first work on analyzing non-optimal scheduling policies with respect to their multiplicative worst-case performance guarantee was done by \citet{MSU99}. They developed a job-based list scheduling policy with performance guarantee~$3-1/m+ \max\{1,\ (1-1/m) \cdot \Delta\}$ for $m$ identical machines, where $\Delta$ is an upper bound on the squared coefficients of variation of the jobs' processing times, i.e., $\Delta \ge \CV[\bm p_j]^2 \coloneqq \Var[\bm p_j]/\E[\bm p_j]^2$ for all processing times~$\bm p_j$. This policy solves a linear programming relaxation, from whose solution the job order used for list scheduling is derived. It needs access to the release dates, weights, and expected processing times of all jobs at the beginning and can therefore not be executed online. The same applies to an efficient randomized fixed-assignment policy for unrelated machines by \citet{SSU16}. This policy randomly rounds an LP solution in order to determine the assignment of jobs to machines---an idea adopted from \citet{SS02}. It has a performance guarantee of $2+\Delta+\varepsilon$ for arbitrary $\varepsilon > 0$. \Citeauthor{SSU16} also proved that the \emph{adaptivity gap} for this problem, i.e.\ the worst-case ratio of the expected objective values of an optimal fixed-assignment policy and an optimal adaptive policy, is at least $\Delta/2$.

\paragraph{Stochastic online scheduling}
\Citet{MUV04, MUV06} introduced the stochastic online scheduling model considered in this paper. They proved that on a single machine, applying the machine-based \emph{weighted shortest expected processing time} (\emph{WSEPT}) rule to suitably modified release dates is $(2+\delta)$-competitive for \emph{$\delta$-NBUE processing times}, i.e.\ processing times~$\bm p_j$ satisfying $\E[\bm p_j - t \mid \bm p_j > t] \le \delta \cdot \E[\bm p_j]$ for all $t \ge 0$. The idea of list scheduling with modified release dates is borrowed from an online algorithm by \citet{MS04}. When applied to the original release dates, the WSEPT rule is asymptotically optimal when the weights and processing times are uniformly bounded, as shown by \citet{CLQS06}.
For identical machines \citeauthor{MUV06} designed an online fixed-assignment policy that assigns each job greedily to a machine minimizing the increase of some lower bound on the expected total weighted completion time. The jobs assigned to each machine are then scheduled according to the machine-based WSEPT rule with modified release dates. The competitive ratio of this policy is bounded by $1+\max\{1 + \delta/\alpha,\ \alpha + (m-1)/(2m) \cdot (1+\Delta)\} \le 1 + \max\{1 + \delta/\alpha,\ \alpha + (2m-1)/m \cdot \delta\}$, where $\alpha \ge 0$ can be chosen arbitrarily. Optimizing the second term for $\alpha$ yields the performance guarantee~$3/2 + (2m-1)/(2m) \cdot \delta + \sqrt{4\delta^2+1}/2$.

\Citet{Sch08} transferred the idea of (delayed) list scheduling in order of $\alpha_j$-points, having proved successful in deterministic scheduling~\cite{PSW98, SS02, Sku06}, to stochastic scheduling. Here, the $\alpha_j$-point of a job~$j$ for $\alpha_j \in (0, 1]$ is the first point in time when an $\alpha_j$-fraction of its deterministic counterpart has been processed in a virtual preemptive schedule. \Citeauthor{Sch08} applied this technique either with $\alpha_j$ chosen independently and uniformly at random from $(0, 1]$ or with $\alpha_j \coloneqq \phi - 1$, where $\phi = (1+\sqrt 5)/2$ is the golden ratio. In this way he succeeded in developing the currently best known randomized and deterministic efficient scheduling policies for identical machines. These are $(2+\Delta)$-competitive and $(1 + \max\{\phi, (\phi+1)/2 \cdot (1+\Delta)\})$-competitive, respectively.

An online policy for unrelated machines was found only recently by \citet*{GMUX20, GMUX21}. It is a $(184/51 \cdot (2-g(\Delta)) \cdot (2+\Delta))$-competitive fixed-assignment policy, where
\begin{equation}
 g(\Delta) \coloneqq \begin{cases*} \frac{2-\sqrt{\Delta}}{2} &if $\Delta \le 1$;\\ \frac{1}{\Delta+1} &if $\Delta \ge 1$. \end{cases*} \label{eq:def g}
\end{equation}
Each job is assigned at its release date to a machine minimizing some upper bound on the cost increase incurred by adding this job to a virtual schedule that also takes into account some jobs that might be assigned to it later.\footnote{In an earlier, faulty version a simpler greedy assignment rule was used.} On each machine the jobs are scheduled by delayed job-based list scheduling according to the start times of the machine-based WSPT schedule of their deterministic counterparts.

\paragraph{Other uncertainty models}

A variety of different models for scheduling under uncertainty have evolved, ranging from robust scheduling to random-order models. In this overview we restrict to models with potential uncertainty in the set of jobs arriving and the processing times. Both arrivals and processing times can be either known, stochastic, or adversarial (possibly restricted to a given uncertainty set). Another important feature of a model for uncertainty is the performance criterion used. This can either be absolute or relative to the ex post optimal solution. For example, minimizing the (absolute) worst-case objective value of all possible scenarios amounts to robust scheduling, see e.g.~\cite{BJPR21}, while minimizing the (absolute) expected objective boils down to stochastic scheduling. On the other hand, online algorithms are evaluated based on their (relative) worst-case competitive ratio; see e.g.~\cite{PST04}. For stochastic processing times, a relative performance measure has been introduced by \citet{SSS06}, see also \cite{SS06}. When randomization is allowed, one can distinguish further according to when the random values become known to the adversary (oblivious vs.\ adaptive adversary).

This results in a variety of possible combinations of models of uncertainty used for the processing times and job arrivals. Clearly, any worst-case guarantee for an adversarial model implies the same guarantee for the corresponding stochastic model. In this paper we consider stochastic processing times and adversarial arrivals. Going into the direction of more uncertainty, one arrives at non-clairvoyant online scheduling, which is usually investigated in the preemptive setting; see \cite{IKM18}. The same is also true when more stochastic information is added by replacing the worst-case arrivals by random arrivals; see e.g.~\cite{CCPW92} for identical machines. To our knowledge, no models with random durations and arrivals have been studied for unrelated machines, so that our results are also the best known results in this setting. Finally, the considered model generalizes deterministic online scheduling with immediate dispatch. For this model \citeauthor{GMUX21}'s policy applied to deterministic jobs yields a $7.216$-competitive deterministic online algorithm. When the immediate dispatch requirement is dropped, there is a recent $3$-competitive deterministic online algorithm due to \citet{BKL21}. It requires the repeated optimal solution of a generalization of the offline scheduling problem and is thus not efficient. When randomization is allowed, there are a $(4/\ln 2)$-competitive efficient~\cite{HSSW97, CPS+96} and a $(1+1/\ln(2))$-competitive inefficient online algorithm~\cite{BKL21} without immediate dispatch. On the other hand, there exist lower bounds of $2$ for deterministic online algorithms~\cite{HV96} and of $\mathrm{e}/(\mathrm e - 1)$ for randomized online algorithms~\cite{SV02}, which carry over to the stochastic online model. Theoretically, the remaining gaps can be narrowed down arbitrarily for any fixed number of machines by means of a competitive-ratio approximation scheme due to \citet{LMMW16}.

\subsection{Results and methodology}

Our main contribution is a simple deterministic greedy rule for assigning jobs to unrelated machines that, combined with $\alpha_j$-scheduling on each machine, leads to efficient fixed-assignment online policies with immediate dispatch for non-preemptive stochastic scheduling with the currently best known competitiveness bounds. In this framework, different choices of the $\alpha_j$ are possible. By choosing $\alpha_j$ independently and uniformly at random from $(0, 1]$, i.e., by executing \citeauthor{Sch08}'s RSOS policy on each machine, we obtain a randomized online policy that is $(8+4\Delta)$-competitive within the class of all scheduling policies and $8$-competitive within the class of fixed-assignment policies. Similarly, following \citeauthor{Sch08}'s DSOS rule on each machine results in a deterministic online policy that is $((3+\sqrt 5) (2+\Delta))$-competitive within all scheduling policies and $(2 (3 + \sqrt 5))$-competitive within the class of fixed-assignment policies. 

Since the considered problem has an unbounded adaptivity gap, our fixed-assignment policies cannot have a bounded competitive ratio for arbitrary processing time distributions. However, this is the case when restricting to instances with bounded coefficients of variation. Still, the execution of these policies is possible without a priori knowledge of such a bound and always results in a schedule whose expected cost is bounded in terms of the best possible upper bound---the maximum coefficient of variation of the received instance. However, if a bound on the coefficients of variation is known in advance, the choice of the $\alpha_j$ can be adapted to this, resulting in enhanced competitiveness results. All our bounds as well as the bound due to \citeauthor{GMUX21} are illustrated in \cref{fig:performance unrelated}.
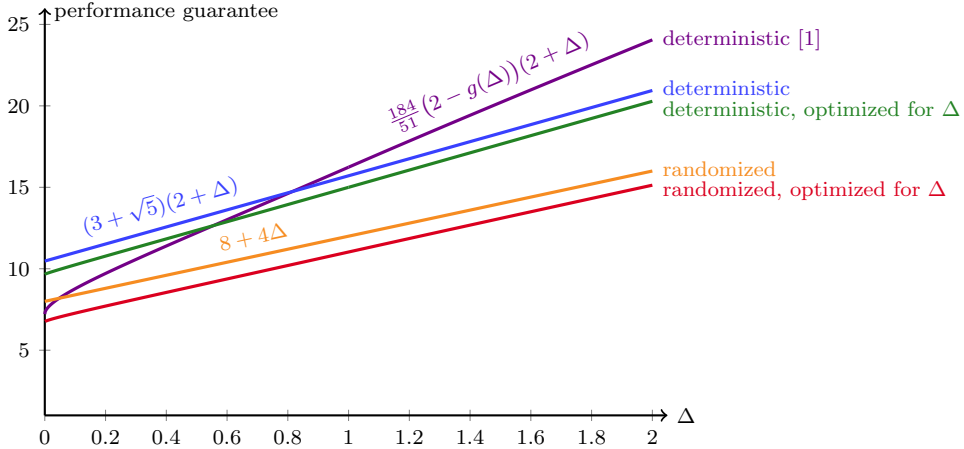
\begin{figure}
 \Crefname{corollary}{Cor.}{Cor.}
 \centering
 \begin{tikzpicture}[font=\footnotesize]
  \begin{axis}[xmin=0, xmax=2.05, ymin=1, ymax=26, xlabel={$\Delta$}, ylabel={performance guarantee}]
   \addplot[very thick, mypurple, domain=0:1] {184/51*(1+sqrt(x)/2)*(2+x)};
   \addplot[very thick, mypurple, domain=1:2] {184/51*(2-1/(x+1))*(2+x)} node[right] {deterministic \cite{GMUX21}}  node[above, sloped, pos=0.5] {$\frac{184}{51}\bigl(2-g(\Delta)\bigr)(2+\Delta)$};
   \addplot[myblue, very thick] coordinates {(0, {2*(3+sqrt(5))}) (2, {4*(3+sqrt(5))})} node[yshift=1pt, right] {deterministic} node [sloped, pos=0.2, above] {$(3+\sqrt 5) (2+\Delta)$};
   \addplot[mygreen, very thick, domain=0:1] (x, {1/2*(2+x)*(4+sqrt(x)+sqrt(32-8*sqrt(x)+x))});
   \addplot[mygreen, very thick, domain=1:2] plot (x, {2*(2+x)*(1+(2*(2+x))/(sqrt(8+12*x+5*x*x)-x))}) node[right, yshift=-3.5pt] {deterministic, optimized for $\Delta$};
   \addplot[myorange, very thick] coordinates {(0, 8) (2, 16)} node [yshift=1pt, right] {randomized} node [above, sloped, pos=0.35] {$8+4\Delta$};
   \addplot[very thick, myred] table[x expr=(\coordindex+1)/1000, y index=0] {performance_rand.dat} node[right, yshift=-2pt] {randomized, optimized for $\Delta$};
  \end{axis}
 \end{tikzpicture}
 \caption{Performance guarantees of different online policies with immediate dispatch for unrelated machines within the class of all scheduling policies as a function of upper bound~$\Delta$ for the squared coefficients of variation.}
\label{fig:performance unrelated}
\end{figure}

The approach to obtain these results can be described on a high level as follows: Whenever a job is released, it is assigned to a machine minimizing the immediate increase of a specific easily computable surrogate cost function that approximates the expected sum of weighted completion times of the jobs assigned to the machine. The sum of the surrogate costs resulting from this greedy assignment is compared to the optimal scheduling\footnote{The existence of optimal policies has been shown only for identical machines~\cite{MRW84}. We believe that this can be generalized to unrelated machines and use the term ``optimal policy'' also in this setting. Formally completely correct would be to compare to the infimum of the expected objective value over all scheduling policies.} and the optimal fixed-assignment policy for the entire instance. Finally, on each machine the jobs are scheduled according to a single-machine policy, for which the resulting expected objective value can be bounded in terms of the surrogate cost. The choice of a good surrogate cost function and single-machine policy is crucial in this approach. They should allow for a strong analysis of both the greedy assignment and the single-machine rule. The best known policies for a single machine are the identical-machine policies by \citeauthor{Sch08}, described above, applied to a single machine. They were analyzed by comparing their expected outcomes to the sum of weighted mean busy times plus half processing times of the preemptive WSPT schedule for the deterministic counterparts (see \cref{sec:RSOS DSOS}). So by using this quantity as surrogate cost, we can simply adopt the analysis of \citeauthor{Sch08} for our single-machine rule. Our main insight is that this surrogate cost function is also perfectly suited for the analysis of the greedy assignment rule.

The enhancements for a known upper bound~$\Delta$ on the squared coefficients of variation are obtained by improving \citeauthor{Sch08}'s single-machine policies in this case. This is achieved by generalizing the techniques of \citet{GQS+02} to stochastic scheduling. Similar techniques have been applied to several other scheduling problems; see \cite{Sku06, CW09, Sku16}. Our policies coincide with \Citeauthor{GQS+02}'s online algorithms when $\Delta = 0$ (deterministic processing times), and our performance guarantees continuously generalize their guarantees to all $\Delta \ge 0$. For each such $\Delta$ these are the best known performance guarantees of any online policy as well as of any efficient policy for single-machine scheduling instances with squared coefficients of variation bounded by $\Delta$.

If the processing times are $\delta$-NBUE, then the squared coefficients of variation are bounded by $2\delta - 1$~\cite{MUV06}, so that we obtain linear bounds in terms of the parameter~$\delta$ as well. In the case of ordinary NBUE processing times ($\delta = 1$), we can bound the coefficients of variation only by $\Delta = 1$, for which all our policies outperform the previously known online policy, see \cref{fig:performance unrelated}. Yet the choice of the $\alpha_j$ in the single-machine rule can also be adapted directly to the parameter $\delta$, resulting in better bounds than those obtained by adapting the $\alpha_j$ to $\Delta = 2 \delta - 1$.

\subsection{Outline}

\Citeauthor{Sch08}'s RSOS and DSOS policy are applied to a single machine in \cref{sec:RSOS DSOS}. Some concepts introduced therein will be reused in all subsequent \lcnamecrefs{sec:unrelated}. In \cref{sec:unrelated}, the greedy assignment rule for unrelated machines is developed and its combination with the single-machine policies from the preceding \lcnamecref{sec:RSOS DSOS} is analyzed. This yields the main results of this paper, illustrated by the blue and orange lines in \cref{fig:performance unrelated}. The green and red curves are obtained by enhancements of the $\alpha_j$-scheduling rules applied on each machine when the coefficients of variation are bounded. These will be discussed in \cref{seq:refined policies}. Finally, some relations and comparisons to existing research are pointed out in \cref{sec:conclusion}. To keep the main part from becoming too long, we defer some technical proofs to \cref{apx:technical} and explain necessary adaptions to the analysis for $\delta$-NBUE processing times in \cref{apx:refined delta}.

\section{\texorpdfstring{\Citeauthor{Sch08}}{Schulz}'s RSOS and DSOS policy on a single machine} \label{sec:RSOS DSOS}

\Citet{Sch08} analyzed the RSOS and the DSOS policy for an arbitrary number of identical parallel machines and did not bother to refine the analysis in the special case of a single machine. By tailoring Schulz's analysis to this case, one can establish constant competitive ratios. Since this fact is not mentioned explicitly in the literature and does not seem to have become part of the folklore (see e.g.~\cite{Vre12}), we will state the two bounds in this \lcnamecref{sec:RSOS DSOS}. They arise as special cases of the more general analyses conducted in \cref{seq:refined policies}. Nevertheless, since the general analysis is quite technical and \citeauthor{Sch08}'s proof is not very detailed in parts, we present a detailed and simple proof for the bound for the RSOS policy. This leads on the one hand, together with \cref{sec:unrelated}, to a self-contained elegant proof  for one variant of our main result that exposes the main ideas. On the other hand, it serves as a preparation for the more technical analyses in \cref{seq:refined policies}.

We consider $n$ jobs~$j$ with processing times $\bm p_j$ to be scheduled on a single machine. Let $\mathrm{S}$ be a preemptive schedule of the deterministic jobs with weights~$w_j$, release dates~$r_j$, and processing times~$\bar p_j \coloneqq \E[\bm p_j]$, $j \in [n]$. For every $j \in [n]$ and $t \ge 0$ let
\[I_j^{\mathrm{S}}(t) \coloneqq \begin{cases*} 1 &if job~$j$ is being processed at time~$t$  in $\mathrm{S}$;\\ 0 &else. \end{cases*}\]
We assume that the functions~$I_j^{\mathrm S}$ are left-continuous. The \emph{mean busy time} of a job~$j$ in $\mathrm{S}$ is \[M_j^{\mathrm{S}} \coloneqq \frac{1}{\bar p_j} \cdot \int_0^\infty I_j^{\mathrm{S}}(t) \cdot t\dif t.\] Obviously, this is a lower bound on the completion time of $j$.

Let $\mathrm{S^p}$ be the schedule obtained by applying the \emph{preemptive weighted shortest processing time} (\emph{preemptive WSPT}) rule to the deterministic jobs. Assuming that $w_1/\bar p_1 \ge \cdots \ge w_n/\bar p_n$, whenever some jobs are available, the available job with smallest index is being processed. Note that the processed job always has maximum ratio of weight over its \emph{total} processing time and not over its current remaining processing time. \Citet{Goe96, Goe97} proved that for every preemptive schedule~$\mathrm S$ the mean busy time vector~$M^{\mathrm S}$ is a feasible solution to the following linear program:
{\renewcommand{\arraystretch}{1.25}
 \[\begin{array}{rr>{\displaystyle}l}
  (\mathrm{LP}) & \operatorname{minimize} &\sum_{j=1}^n w_j M_j\\
  &\text{subject to} &\sum_{j \in S} \bar p_j M_j \ge \sum_{j \in S} \bar p_j \cdot \left(r_{\min}(S) + \frac 1 2 \sum_{j \in S} \bar p_j\right)\quad \forall S \subseteq [n]
 \end{array}\]
}%
Moreover, the vector~$M^{\mathrm{S^p}}$ derived from the preemptive WSPT schedule is an optimal solution. Here, $r_{\min}(S) \coloneqq \min_{j \in S} r_j$ for $S \subseteq [n]$. Let $\Pi$ now be a non-preemptive scheduling policy for the stochastic jobs. The \emph{mean busy time} of a job~$j \in [n]$ under $\Pi$ is simply
\[\bm M_j^\Pi \coloneqq \bm C_j^{\Pi} - \frac{\bm p_j}{2},\]
where $\bm C_j^\Pi$ denotes the completion time of $j$ under $\Pi$. The word \textit{mean} here refers to the processing interval in each realization and does not denote an expected value.

\begin{lemma}\label{lem:lower bound}
 For every non-preemptive scheduling policy~$\Pi$ it holds that
 \[\sum_{j=1}^n w_j M_j^{\mathrm{S^p}} \le \E\left[\sum_{j=1}^n w_j \bm M_j^{\Pi}\right].\]
\end{lemma}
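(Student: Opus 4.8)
The plan is to relate the left-hand side, which is the optimal value of $(\mathrm{LP})$, to a feasible point of $(\mathrm{LP})$ constructed from the policy~$\Pi$. More precisely, I would show that the vector of \emph{expected} mean busy times $\bigl(\E[\bm M_j^\Pi]\bigr)_{j \in [n]}$ is feasible for $(\mathrm{LP})$; since $M^{\mathrm{S^p}}$ is an optimal (hence minimum-cost) solution and all weights $w_j$ are nonnegative, this immediately gives $\sum_j w_j M_j^{\mathrm{S^p}} \le \sum_j w_j \E[\bm M_j^\Pi] = \E[\sum_j w_j \bm M_j^\Pi]$, which is the claim.

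So the real work is verifying the constraint $\sum_{j \in S} \bar p_j \E[\bm M_j^\Pi] \ge \sum_{j \in S} \bar p_j\bigl(r_{\min}(S) + \tfrac12 \sum_{j \in S}\bar p_j\bigr)$ for each fixed $S \subseteq [n]$. First I would fix a realization of the processing times and a scenario of the policy, and argue the pathwise analogue: for the non-preemptive schedule produced by $\Pi$ in that scenario, $\sum_{j \in S} \bm p_j \bm M_j^\Pi \ge \sum_{j \in S}\bm p_j\bigl(r_{\min}(S) + \tfrac12\sum_{j \in S}\bm p_j\bigr)$. This is the standard ``parallel inequality'': order the jobs of $S$ by nondecreasing completion time under $\Pi$; no job in $S$ can start before $r_{\min}(S)$, and since the machine processes at most one job at a time, the machine must have spent at least $\sum_{k \le \ell}\bm p_{j_k}$ units of time on jobs of $S$ by the time $j_\ell$ completes, so $\bm C_{j_\ell}^\Pi \ge r_{\min}(S) + \sum_{k \le \ell}\bm p_{j_k}$, whence $\bm M_{j_\ell}^\Pi = \bm C_{j_\ell}^\Pi - \bm p_{j_\ell}/2 \ge r_{\min}(S) + \sum_{k<\ell}\bm p_{j_k} + \bm p_{j_\ell}/2$; summing over $\ell$ and using $\sum_\ell\bigl(\sum_{k<\ell}\bm p_{j_k}+\bm p_{j_\ell}/2\bigr)\bm p_{j_\ell} = \tfrac12\bigl(\sum_{k}\bm p_{j_k}\bigr)^2$ gives the pathwise bound.

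The main obstacle is the passage from this pathwise inequality to the inequality for expectations, because $\bm p_j$ and $\bm M_j^\Pi$ are correlated and $\bigl(\sum_{j\in S}\bm p_j\bigr)^2$ is a nonlinear function whose expectation exceeds $\bigl(\sum_{j\in S}\bar p_j\bigr)^2$ by the variance. Here I would invoke the independence of the processing times together with the non-anticipativity of $\Pi$: whether and when job $j$ runs is determined by the $\bm p_k$ for jobs completed strictly before, so conditioning on everything except $\bm p_j$, one can replace $\bm p_j$ inside $\bm M_j^\Pi = \bm C_j^\Pi - \bm p_j/2$ and inside the cross terms by its expectation in a way that only helps. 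Concretely, I expect to take the expectation of the pathwise inequality and then argue that $\E\bigl[\sum_{j\in S}\bm p_j\bm C_j^\Pi\bigr] \ge r_{\min}(S)\sum_{j\in S}\bar p_j + \tfrac12\E\bigl[\bigl(\sum_{j\in S}\bm p_j\bigr)^2\bigr] \ge r_{\min}(S)\sum_{j\in S}\bar p_j + \tfrac12\bigl(\sum_{j\in S}\bar p_j\bigr)^2 + \tfrac12\sum_{j\in S}\Var[\bm p_j]$, and that $\E\bigl[\sum_{j\in S}\bm p_j\bm M_j^\Pi\bigr] = \E\bigl[\sum_{j\in S}\bm p_j\bm C_j^\Pi\bigr] - \tfrac12\sum_{j\in S}\E[\bm p_j^2]$; combining, the $\Var$ term exactly cancels the $-\tfrac12\sum(\E[\bm p_j^2]-\bar p_j^2)$, leaving $\sum_{j\in S}\bar p_j\E[\bm M_j^\Pi] \ge r_{\min}(S)\sum_{j\in S}\bar p_j + \tfrac12\bigl(\sum_{j\in S}\bar p_j\bigr)^2$, which is exactly the $(\mathrm{LP})$ constraint for $S$. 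The one subtlety to handle carefully is justifying that $\E\bigl[\sum_{j\in S}\bm p_j\bm C_j^\Pi\bigr]$ can be lower-bounded after swapping in expected processing times — this is where the independence assumption and the left-continuity/non-anticipativity conventions are used, essentially the same argument as in \citet{MSU99}.
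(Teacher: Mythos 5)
Your approach is essentially the paper's: show that $(\E[\bm M_j^\Pi])_{j\in[n]}$ is feasible for $(\mathrm{LP})$ via the pathwise parallel inequality, then pass to expectations using the independence of the $\bm p_j$ and the non-anticipativity of $\Pi$, with the variance terms cancelling exactly. One bookkeeping slip in your sketched expectation chain: the intermediate bound $\E[\sum_{j\in S}\bm p_j\bm C_j^\Pi] \ge r_{\min}(S)\sum_{j\in S}\bar p_j + \tfrac12\E[(\sum_{j\in S}\bm p_j)^2]$ discards the term $\tfrac12\sum_{j\in S}\E[\bm p_j^2]$ that your own pathwise derivation provides (since $\sum_{\ell}\bm p_{j_\ell}\sum_{k\le\ell}\bm p_{j_k} = \tfrac12(\sum_k\bm p_{j_k})^2 + \tfrac12\sum_k\bm p_{j_k}^2$), and as literally written the chain then falls short of the $(\mathrm{LP})$ constraint by exactly that amount; keeping the term, everything cancels as you describe. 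The paper sidesteps this bookkeeping by substituting $\bm M_j^\Pi = \bm S_j^\Pi + \tfrac12\bm p_j$ and working with start times, for which $\E[\bm p_j\bm S_j^\Pi] = \bar p_j\,\E[\bm S_j^\Pi]$ holds cleanly by non-anticipativity — you may find that formulation slightly easier to write up.
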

The \lcnamecref{lem:lower bound} follows from the fact that the vector~$(\E[\bm M_j^\Pi])_{j \in [n]}$ is a feasible solution to $(\mathrm{LP})$, which is a special case of \cite[Lemma~1]{Sch08}. The proof we provide here resembles the proof of \cite[Theorem~3.1]{MSU99}. 
\begin{proof}
 Let $\Pi$ be a non-preemptive scheduling policy. We show that the vector $(\E[\bm M_j^\Pi])_{j \in [n]}$ is a feasible solution to $(\mathrm{LP})$. As $M^{\mathrm{S^p}}$ is an optimal solution, this implies that \[\sum_{j=1}^n w_j M_j^{\mathrm{S^p}} \le \sum_{j=1}^n w_j \E[\bm M_j^\Pi] = \E\left[\sum_{j=1}^n w_j \bm M_j^{\Pi}\right].\]
 Let $S \subseteq [n]$. Then, as each schedule provides a feasible solution to $(\mathrm{LP})$, we have realizationwise
 \[\sum_{j \in S} \bm p_j \bm M_j^\Pi \ge \sum_{j \in S} \bm p_j \cdot \left(r_{\mathrm{min}}(S) + \frac 1 2 \cdot \sum_{j \in S} \bm p_j\right).\]
 As $\Pi$ is non-preemptive, we can decompose each mean busy time into $\bm M_j^\Pi = \bm S_j^\Pi + 1/2 \cdot \bm p_j$, where $\bm S_j^\Pi$ denotes the start time of $j$. Therefore,
 \begin{align*}
  \sum_{j \in S} \bm p_j \bm S_j^\Pi &\ge \sum_{j \in S} \bm p_j \cdot \left(r_{\mathrm{min}}(S) + \frac 1 2 \cdot \sum_{j \in S} \bm p_j\right) - \frac 1 2 \cdot \sum_{j \in S} \bm p_j^2 \\
  &= \sum_{j \in S} \bm p_j \cdot r_{\min}(S) + \frac 1 2 \cdot \sum_{\substack{i,j \in S\\i\neq j}} \bm p_i \bm p_j.
 \end{align*}
 Taking the expectation results in 
 \begin{align*}\sum_{j \in S} \E[\bm p_j] \cdot \E[\bm S_j^\Pi] &\ge \sum_{j \in S} \E[\bm p_j] \cdot r_{\min}(S) + \frac 1 2 \cdot \sum_{\substack{i,j \in S\\i\neq j}} \E[\bm p_i] \cdot \E[\bm p_j] \\ &= \sum_{j \in S} \E[\bm p_j] \cdot \left(r_{\min}(S) + \frac 1 2 \cdot \sum_{j \in S} \E[\bm p_j]\right) - \frac 1 2 \cdot \sum_{j \in S} \E[\bm p_j]^2,\end{align*}
 where we used that $\bm p_i, \bm p_j$ are independent for $i \neq j$ and that $\bm p_j$ and $\bm S_j^\Pi$ are independent because $\Pi$ is non-anticipative. By adding $1/2 \cdot \sum_{j \in S} \E[\bm p_j]^2$, we obtain the result.
\end{proof}

The \lcnamecref{lem:lower bound} implies that $\sum_{j=1}^n w_j \cdot (M_j^{\mathrm{S^p}} + \bar p_j/2)$ is a lower bound on the expected total weighted completion time resulting from an optimal scheduling policy.

For a job~$j \in [n]$ and $\alpha \in (0, 1]$ the \emph{$\alpha$-point} $C_j(\alpha)$ of $j$ is the first point in time when an $\alpha$-fraction of its deterministic counterpart has been completed in 
$\mathrm{S^p}$, i.e.,
\[C_j(\alpha) \coloneqq \min\biggl\{t \ge 0 \biggm| \int_0^t I_j^{\mathrm{S^p}}(s)\dif s \ge \alpha \bar p_j\biggr\}.\]
Each function~$\alpha \mapsto C_j(\alpha)$ for $j \in [n]$ is piecewise affine linear, where each piece has slope~$\bar p_j$ and the image is the union of all time intervals during which $j$'s deterministic counterpart is being processed in $\mathrm{S^p}$. Therefore, applying the substitution~$t \coloneqq C_j(\alpha)$ yields \cite[eq.~(3.1)]{GQS+02}:
\begin{equation}
 \int_0^1 C_j(\alpha)\dif \alpha = \frac{1}{\bar p_j} \cdot \int_0^{\infty} I_j^{\mathrm{S^p}}(t) \cdot t\dif t = M_j^{\mathrm{S^p}}. \label{eq:mean busy time alpha point}
\end{equation}

The RSOS policy proceeds as follows: Whenever a job~$j$ is released, the policy draws $\bm \alpha_j \in (0, 1]$ uniformly at random. It continually virtually constructs the preemptive WSPT schedule for the deterministic counterparts of the released jobs. Whenever the $\bm \alpha_j$-point of some job~$j$ is reached, the job is appended to a FIFO queue for processing on the real machine.

\begin{proposition} \label{prop:RSOS}
 On a single machine the RSOS policy satisfies
 \[\E\biggl[\,\sum_{j=1}^n w_j \bm M_j^{\mathrm{RSOS}}\biggr] \le 2 \cdot \sum_{j=1}^n w_j M_j^{\mathrm{S^p}} \le 2 \cdot \E\biggl[\,\sum_{j=1}^n w_j \bm M_j^{\mathrm{OPT}}\biggr].\]
\end{proposition}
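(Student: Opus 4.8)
The second inequality is immediate from \cref{lem:lower bound} applied to $\Pi = \mathrm{OPT}$, so all the work lies in the first inequality. Since the weights are nonnegative, it suffices to prove the job-wise estimate $\E[\bm M_j^{\mathrm{RSOS}}] \le 2 M_j^{\mathrm{S^p}}$ for every $j \in [n]$ and then sum it up with weights~$w_j$. The plan is to first bound the RSOS start time of~$j$ realizationwise, and then to take expectations, exploiting that the $\bm\alpha_k$ are mutually independent and---since the virtual preemptive WSPT schedule~$\mathrm{S^p}$ uses only the deterministic counterparts~$\bar p_k$---independent of all realized processing times.

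For the realizationwise bound, note that~$j$ enters the FIFO queue at the fixed time~$C_j(\bm\alpha_j)$, which depends only on~$\bm\alpha_j$. By the FIFO rule every job processed before~$j$ on the real machine has entered the queue no later than~$j$, hence has $C_k(\bm\alpha_k) \le C_j(\bm\alpha_j)$; so after time~$C_j(\bm\alpha_j)$ the machine must complete at most $\sum_{k \ne j,\ C_k(\bm\alpha_k) \le C_j(\bm\alpha_j)} \bm p_k$ units of work before it can start~$j$ (possibly less, which only helps). This gives
\[\bm M_j^{\mathrm{RSOS}} = \bm S_j^{\mathrm{RSOS}} + \frac{\bm p_j}{2} \le C_j(\bm\alpha_j) + \frac{\bm p_j}{2} + \sum_{\substack{k \ne j \\ C_k(\bm\alpha_k) \le C_j(\bm\alpha_j)}} \bm p_k.\]
Next I would condition on $\bm\alpha_j = \alpha$ and take the expectation over everything else. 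Independence of~$\bm p_j$ from~$\bm\alpha_j$ turns the middle term into~$\bar p_j/2$, and independence of~$\bm p_k$ from~$(\bm\alpha_k, \bm\alpha_j)$ turns the sum into $\sum_{k \ne j} \bar p_k \Pr[C_k(\bm\alpha_k) \le C_j(\alpha)]$. The key observation is that, as~$\bm\alpha_k$ is uniform on~$(0,1]$, the probability $\Pr[C_k(\bm\alpha_k) \le C_j(\alpha)]$ equals the fraction of~$k$ processed by time~$C_j(\alpha)$ in~$\mathrm{S^p}$, i.e.\ $\bar p_k \Pr[C_k(\bm\alpha_k) \le C_j(\alpha)] = \int_0^{C_j(\alpha)} I_k^{\mathrm{S^p}}(s) \dif s$. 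Summing over $k \ne j$, using $\sum_k I_k^{\mathrm{S^p}}(s) \le 1$ (at most one job at a time) and $\int_0^{C_j(\alpha)} I_j^{\mathrm{S^p}}(s) \dif s = \alpha \bar p_j$ (definition of the $\alpha$-point), one gets
\[\sum_{k \ne j} \bar p_k \Pr[C_k(\bm\alpha_k) \le C_j(\alpha)] \le \int_0^{C_j(\alpha)} \bigl(1 - I_j^{\mathrm{S^p}}(s)\bigr) \dif s = C_j(\alpha) - \alpha \bar p_j,\]
so that $\E[\bm M_j^{\mathrm{RSOS}} \mid \bm\alpha_j = \alpha] \le 2 C_j(\alpha) + (\tfrac{1}{2} - \alpha) \bar p_j$. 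Integrating over $\alpha \in (0,1]$ and invoking $\int_0^1 C_j(\alpha) \dif\alpha = M_j^{\mathrm{S^p}}$ from \labelcref{eq:mean busy time alpha point} together with $\int_0^1 (\tfrac{1}{2} - \alpha) \dif\alpha = 0$ gives $\E[\bm M_j^{\mathrm{RSOS}}] \le 2 M_j^{\mathrm{S^p}}$, which completes the argument.

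The conceptual heart is the middle passage, where the FIFO queueing delay is rewritten as $\int_0^{C_j(\alpha)} \sum_{k \ne j} I_k^{\mathrm{S^p}}(s) \dif s$ and then bounded via the single-machine capacity constraint; the remaining computations are routine. The step I would be most careful about is the realizationwise start-time bound, namely that only jobs with $C_k(\bm\alpha_k) \le C_j(\bm\alpha_j)$ can precede~$j$ on the real machine and that idle time there does not break the inequality. The passage to expectations likewise needs the mutual independence of the~$\bm\alpha_k$ and their independence from the processing times to be invoked cleanly.
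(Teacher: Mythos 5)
Your proposal is correct and follows essentially the same route as the paper: the same realizationwise FIFO bound on $\bm M_j^{\mathrm{RSOS}}$, conditioning on $\bm\alpha_j$, identifying $\Pr[C_k(\bm\alpha_k) \le C_j(\alpha)]$ with the fraction $\eta_k$ of $k$ processed before $C_j(\alpha)$ in $\mathrm{S^p}$, the machine-capacity bound $\sum_k \eta_k \bar p_k \le C_j(\alpha)$, and integration via \labelcref{eq:mean busy time alpha point}. The only cosmetic difference is that you express $\eta_k \bar p_k$ directly as $\int_0^{C_j(\alpha)} I_k^{\mathrm{S^p}}(s)\dif s$ rather than naming $\eta_k$.
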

The \lcnamecref{prop:RSOS} implies the weaker inequality
\[\E\biggl[\,\sum_{j=1}^n w_j \bm C_j^{\mathrm{RSOS}}\biggr] \le 2 \cdot \sum_{j=1}^n w_j \Bigl(M_j^{\mathrm{S^p}} + \frac{\bar p_j}{2}\Bigr) \le 2 \cdot \E\biggl[\,\sum_{j=1}^n w_j \bm C_j^{\mathrm{OPT}}\biggr],\]
which can also be derived from the proof of \cite[Theorem~3]{Sch08} together with \cref{lem:lower bound} and which would be sufficient to deduce the results of this paper. Nevertheless, we prove here the stronger statement to make this \lcnamecref{sec:RSOS DSOS} self-contained and because it could be useful in other contexts.
\begin{proof}
 Let $j \in [n]$ be fixed. The mean busy time of $j$ can be bounded by
 \begin{equation}
  \bm M_j^{\mathrm{RSOS}} \le C_j(\bm \alpha_j) + \sum_{k : C_k(\bm \alpha_k) < C_j(\bm \alpha_j)} \bm p_k + \frac{\bm p_j}{2}. \label{ineq:mean busy time RSOS}
 \end{equation}
 This is true because at time $C_j(\bm \alpha_j)$ all jobs~$k$ with $C_k(\bm \alpha_k) \le C_j(\bm \alpha_j)$ have been added to the FIFO queue, so that after this time there will be no idle time in the schedule constructed by the RSOS policy until the completion of $j$. Consider first a fixed $\alpha_j \in (0, 1]$. For every $k \in [n]$ let
 \begin{equation}
  \eta_k = \eta_k(j) \coloneqq \frac{1}{\bar p_k} \cdot \int_0^{C_j(\alpha_j)} I_k^{\mathrm{S^p}}(t)\dif t \label{eq:def eta_k}
 \end{equation}
 be the fraction of its processing time that the deterministic counterpart of job~$k$ receives before the $\alpha_j$-point of $j$. Thus, for $k \neq j$ the event~$C_k(\bm \alpha_k) < C_j(\alpha_j)$ occurs if and only if $\bm \alpha_k \le \eta_k$. The following inequality expresses the fact that the total processing that all jobs receive in $\mathrm{S^p}$ until time~$C_j(\alpha_j)$ is at most $C_j(\alpha_j)$:
 \begin{equation}
  \sum_{k \in [n] \setminus \{j\}} \eta_k \bar p_k + \alpha_j \bar p_j = \sum_{k=1}^n \eta_k \bar p_k \le C_j(\alpha_j). \label{eq:processing time before alpha point}
 \end{equation}
 We can bound the conditional expectation of $\bm M_j^{\mathrm{RSOS}}$ given $\bm \alpha_j = \alpha_j$ as follows: 
 \begin{align*}
  &\condexp{\bm M_j^{\mathrm{RSOS}}}{\bm \alpha_j = \alpha_j} \stackrel{\eqref{ineq:mean busy time RSOS}}\le C_j(\alpha_j) + \sum_{k \in [n] \setminus \{j\}} \Pr[C_k(\bm \alpha_k) < C_j(\alpha_j)] \cdot \E[\bm p_k] + \frac{\E[\bm p_j]}{2} \\
  &\quad = C_j(\alpha_j) + \sum_{k \in [n] \setminus \{j\}} \Pr[\bm \alpha_k \le \eta_k] \cdot \bar p_k + \frac{\bar p_j}{2} = C_j(\alpha_j) + \sum_{k \in [n] \setminus \{j\}} \eta_k \bar p_k + \frac{\bar p_j}{2} \\
  &\quad \stackrel{\mathclap{\eqref{eq:processing time before alpha point}}}\le 2 C_j(\alpha_j) + \Bigl(\frac 1 2 - \alpha_j\Bigr) \cdot \bar p_j.
 \end{align*}
 Here we used in the first inequality that $\bm \alpha_k$ and $\bm p_k$ are stochastically independent for all $k \in [n]$. Unconditioning yields
 \[\E[\bm M_j^{\mathrm{RSOS}}] = \int_0^1 \condexp{\bm M_j^{\mathrm{RSOS}}}{\bm \alpha_j = \alpha_j} \dif \alpha_j \le 2 \cdot \int_0^1 C_j(\alpha_j)\dif \alpha_j \stackrel{\eqref{eq:mean busy time alpha point}}= 2 M_j^{\mathrm{S^p}}.\]
 Hence, by linearity of expectation and \cref{lem:lower bound},
 \[\E\biggl[\,\sum_{j=1}^n w_j \bm M_j^{\mathrm{RSOS}}\biggr] \le 2 \cdot \sum_{j=1}^n w_j M_j^{\mathrm{S^p}} \le 2 \cdot \E\biggl[\,\sum_{j=1}^n w_j \bm M_j^{\mathrm{OPT}}\biggr].\qedhere\]
\end{proof}

Recall that $\phi = \frac{1+\sqrt 5}{2}$ denotes the golden ratio. The DSOS policy appends a job to the FIFO queue for processing on the real machine, as soon as it reaches its $(\phi-1)$-point. The following \lcnamecref{prop:DSOS} follows from more general results provided in \cref{subsec:refined DSOS}.
\begin{proposition} \label{prop:DSOS}
 On a single machine the DSOS policy satisfies
 \[\E\biggl[\,\sum_{j=1}^n w_j \bm M_j^{\mathrm{DSOS}}\biggr] \le (\phi+1) \cdot \sum_{j=1}^n w_j M_j^{\mathrm{S^p}} \le (\phi+1) \cdot \E\biggl[\,\sum_{j=1}^n w_j \bm M_j^{\mathrm{OPT}}\biggr].\]
\end{proposition}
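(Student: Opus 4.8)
The plan is to run through the proof of \cref{prop:RSOS} with the random $\bm\alpha_j$ replaced by the single deterministic value $\alpha \coloneqq \phi - 1 = 1/\phi$, and then to replace the ``averaging'' step at the end of the randomized argument by a sharper, LP-based estimate. First I would establish, exactly as for \eqref{ineq:mean busy time RSOS}, the realizationwise bound
\[
 \bm M_j^{\mathrm{DSOS}} \le C_j(\alpha) + \sum_{k \ne j\, :\, C_k(\alpha) < C_j(\alpha)} \bm p_k + \frac{\bm p_j}{2},
\]
which holds because at time~$C_j(\alpha)$ every job ahead of~$j$ in the FIFO queue---these are precisely the jobs~$k$ with $C_k(\alpha) < C_j(\alpha)$---has already been enqueued, so the real machine runs without idling from $C_j(\alpha)$ until $j$ is completed. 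Taking expectations and using that the virtual preemptive WSPT schedule (hence each $\alpha$-point) depends only on the deterministic data $(\bar p_k, w_k, r_k)$, that $\bm p_j$ is independent of $j$'s start time, and that $\bm p_k$ is independent of the enqueueing order for $k \ne j$, gives $\E[\bm M_j^{\mathrm{DSOS}}] \le C_j(\alpha) + \sum_{k \ne j : C_k(\alpha) < C_j(\alpha)} \bar p_k + \bar p_j/2$. Summing this with weights~$w_j$ reduces the proposition, via \cref{lem:lower bound}, to the purely deterministic inequality
\[
 \sum_{j=1}^n w_j\Biggl(C_j(\alpha) + \sum_{k \ne j\, :\, C_k(\alpha) < C_j(\alpha)} \bar p_k + \frac{\bar p_j}{2}\Biggr) \le (\phi+1) \sum_{j=1}^n w_j M_j^{\mathrm{S^p}}.
\]

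This last inequality is the crux, and here the per-job argument from \cref{prop:RSOS} is too weak. For every~$k$ with $C_k(\alpha) < C_j(\alpha)$ one has $\eta_k(j) \ge \alpha$, so \eqref{eq:processing time before alpha point} yields $\sum_{k \ne j : C_k(\alpha) < C_j(\alpha)} \bar p_k \le \tfrac1\alpha \sum_{k \ne j} \eta_k \bar p_k \le \tfrac1\alpha\bigl(C_j(\alpha) - \alpha\bar p_j\bigr)$, and hence $\E[\bm M_j^{\mathrm{DSOS}}] \le (1 + \tfrac1\alpha) C_j(\alpha) - \tfrac12\bar p_j$. Combined with the only per-job estimate available for $C_j(\alpha)$, namely $C_j(\alpha) \le M_j^{\mathrm{S^p}}/(1-\alpha)$ (from $M_j^{\mathrm{S^p}} = \int_0^1 C_j(\beta)\dif\beta \ge (1-\alpha) C_j(\alpha)$), this only gives the factor $(1 + 1/\alpha)/(1-\alpha)$, which equals $\phi^4 \approx 6.85$ for $\alpha = \phi - 1$ and never drops below $3 + 2\sqrt2$ for any $\alpha \in (0,1)$. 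The reason is exactly the feature that the randomized analysis exploited: there $\E_{\bm\alpha_j}[C_j(\bm\alpha_j)] = M_j^{\mathrm{S^p}}$ by \eqref{eq:mean busy time alpha point}, so nothing is lost in passing from the $\alpha$-point back to the mean busy time, whereas for a single deterministic~$\alpha$ the value $C_j(\alpha)$ can be a constant factor larger than $M_j^{\mathrm{S^p}}$---this happens precisely when $j$'s $\alpha$-point is pushed back by higher-priority jobs released after~$j$. The fix, which is the generalization of the argument of \citet{GQS+02} carried out in \cref{subsec:refined DSOS}, is to \emph{not} charge the conversion cost of~$j$ to $M_j^{\mathrm{S^p}}$ alone, but to use the feasibility of $M^{\mathrm{S^p}}$ for $(\mathrm{LP})$---equivalently, an explicit optimal dual solution supported on the forced initial blocks of the preemptive WSPT schedule---to spread the cost $\sum_j w_j\bigl(C_j(\alpha) + \sum_{k : C_k(\alpha) < C_j(\alpha)} \bar p_k\bigr)$ over the mean busy times of \emph{all} jobs, so that the jobs delaying $j$'s $\alpha$-point also absorb part of $j$'s contribution. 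The value $\alpha = \phi - 1$ is then the balance point: it is the unique $\alpha \in (0,1)$ for which the two competing terms $1 + 1/\alpha$ and $2 + \alpha$ coincide, both equal to $\phi + 1$.

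Granting the deterministic inequality, we obtain $\E\bigl[\sum_j w_j \bm M_j^{\mathrm{DSOS}}\bigr] \le (\phi+1)\sum_j w_j M_j^{\mathrm{S^p}}$, and combining with \cref{lem:lower bound}, which gives $\sum_j w_j M_j^{\mathrm{S^p}} \le \E\bigl[\sum_j w_j \bm M_j^{\mathrm{OPT}}\bigr]$, completes the chain. The only genuinely new work compared with \cref{prop:RSOS} is the global charging argument of the previous paragraph, and I expect that to be the main obstacle; it is precisely what \cref{subsec:refined DSOS} supplies---there in the stronger form that additionally tracks an upper bound~$\Delta$ on the squared coefficients of variation, of which the present proposition is the $\Delta$-agnostic specialization.
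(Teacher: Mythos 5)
Your plan coincides with the paper's own route: \cref{prop:DSOS} is proved there precisely by specializing the canonical-decomposition analysis of \cref{subsec:refined DSOS} (\cref{thm:SOS Delta}, whose bound $1+\max\{1/\alpha,\ 1+\alpha-g(\Delta)(1-\alpha)\}$ reduces, when $\Delta$ is not tracked, to $\max\{1+1/\alpha,\ 2+\alpha\}$) to $\alpha=\phi-1$, exactly the balance point you identify, and your diagnosis of why the per-job RSOS-style charging cannot beat $3+2\sqrt2$ is correct. The one step you grant rather than prove---spreading $\sum_j w_j\bigl(C_j(\alpha)+\sum_{k:C_k(\alpha)<C_j(\alpha)}\bar p_k\bigr)$ over all mean busy times via the canonical sets and the feasibility of $M^{\mathrm{S^p}}$ for $(\mathrm{LP})$---is indeed the entire content of the proof of \cref{thm:SOS Delta}, so the proposal is a faithful outline of the paper's argument with its crux correctly located but not executed.
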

As above, this \lcnamecref{prop:DSOS} implies that
\[\E\biggl[\,\sum_{j=1}^n w_j \bm C_j^{\mathrm{DSOS}}\biggr] \le (\phi+1) \cdot \sum_{j=1}^n w_j \Bigl(M_j^{\mathrm{S^p}} + \frac{\bar p_j}{2}\Bigr) \le (\phi+1) \cdot \E\biggl[\,\sum_{j=1}^n w_j \bm C_j^{\mathrm{OPT}}\biggr].\]

\section{Scheduling on unrelated machines} \label{sec:unrelated}

In this \lcnamecref{sec:unrelated} we describe and analyze stochastic online scheduling policies for unrelated machines. As outlined in the \lcnameref{sec:introduction}, these result as a combination of a greedy assignment rule and a single-machine $\alpha_j$-point scheduling rule. Our main results are obtained by using the RSOS and DSOS policy, dealt with in \cref{sec:RSOS DSOS}, as single-machine scheduling rules. The enhanced results with a priori information about the occurring processing time distributions will be discussed in \cref{seq:refined policies}.

The deterministic counterpart of a job~$j \in [n]$ has processing time~$\bar p_{ij} \coloneqq \E[\bm p_{ij}]$ when assigned to machine~$i \in [m]$. We assume that all $\bar p_{ij} > 0$. Whenever some job~$j$ is released, it is assigned to the machine~$i$ minimizing the immediate increase of the value \[\sum_{k \text{ assigned to } i} w_k \cdot \Bigl(M_k^{\mathrm S^{\mathrm p}_i} + \frac{\bar p_{ik}}{2}\Bigr).\] Here $\mathrm S_i^{\mathrm p}$ is the virtual preemptive WSPT schedule for the deterministic counterparts of the jobs assigned to machine~$i$. The increase incurred by assigning $j$ to $i$ is denoted by $\cost(j \to i)$. For every job~$j$ let $i^{\mathrm{GA}}(j)$ be the machine to which $j$ is assigned by this rule, and let $\mathrm{GA\text-S^p}$ be the combined virtual schedule on all machines.

The stochastic scheduling policies that use this algorithm to assign jobs to machines and schedule the jobs assigned to each machine according to the RSOS or the DSOS policy will be referred to as the \emph{greedy-assignment RSOS} (GA-RSOS) \emph{policy} and the \emph{greedy-assignment DSOS} (GA-DSOS) \emph{policy}, respectively. Since the assignment is based only on the virtual schedule and not on any observed processing times, these are fixed-assignment policies. In the remainder of this \lcnamecref{sec:unrelated} we will prove the following \lcnamecref{thm:unrelated}.

\begin{theorem} \label{thm:unrelated}
 \begin{enumerate}
  \item For instances with squared coefficients of variation bounded by $\Delta$ the GA-RSOS policy is $(8+4\Delta)$-competitive within the class of all scheduling policies and $8$-competitive within the class of fixed-assignment policies.
  \item For instances with squared coefficients of variation bounded by $\Delta$ the GA-DSOS policy is $((3+\sqrt 5)(2+\Delta))$-competitive within the class of all scheduling policies and $(2(3+\sqrt 5))$-competitive within the class of fixed-assignment policies.
 \end{enumerate}
\end{theorem}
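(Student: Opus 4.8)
The plan is to split the argument into two essentially independent parts, mirroring the two-stage structure of the GA-RSOS and GA-DSOS policies. Set $\mathrm{SURR} \coloneqq \sum_j \cost(j \to i^{\mathrm{GA}}(j))$; since adding a job to a machine's virtual preemptive WSPT schedule at its release date only reschedules not-yet-completed lower-priority jobs, these increments telescope and $\mathrm{SURR} = \sum_j w_j\bigl(M_j^{\mathrm{GA\text-S^p}} + \bar p_{i^{\mathrm{GA}}(j)\,j}/2\bigr)$, i.e.\ $\mathrm{SURR}$ is the mean-busy-time-plus-half-processing objective of the virtual schedule. \textbf{Part A (single machine $\to$ objective).} On each machine~$i$ the assigned jobs are scheduled by RSOS (resp.\ DSOS) using precisely the restriction of $\mathrm{GA\text-S^p}$ to~$i$ as its virtual preemptive WSPT schedule, so \cref{prop:RSOS} (resp.\ \cref{prop:DSOS}) applies per machine; summing over all machines gives $\E[\sum_j w_j \bm C_j^{\mathrm{GA\text-RSOS}}] \le 2\,\mathrm{SURR}$ and $\E[\sum_j w_j \bm C_j^{\mathrm{GA\text-DSOS}}] \le (\phi+1)\,\mathrm{SURR}$. \textbf{Part B (greedy assignment $\to$ optimum).} Here the goal is to prove $\mathrm{SURR} \le 2(2+\Delta)\cdot\E[\sum_j w_j \bm C_j^{\Pi'}]$ for every scheduling policy~$\Pi'$ and $\mathrm{SURR} \le 4\cdot\E[\sum_j w_j \bm C_j^{\Pi'}]$ for every fixed-assignment policy~$\Pi'$. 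Combining the two parts then yields exactly the four constants $8+4\Delta$, $8$, $(3+\sqrt 5)(2+\Delta) = 2(\phi+1)(2+\Delta)$, and $2(3+\sqrt 5)$.

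For the fixed-assignment half of Part~B the stochastics essentially disappear. If $\Pi'$ has fixed assignment~$\sigma'$, then \cref{lem:lower bound} applied to $\Pi'$ restricted to each machine~$i$ (a non-preemptive single-machine policy for the jobs with $\sigma'(\cdot) = i$), after adding $\tfrac12\sum_j w_j\bar p_{ij}$ to both sides, shows that the surrogate cost of the assignment~$\sigma'$ itself is at most $\E[\sum_j w_j \bm C_j^{\Pi'}]$. It then remains to prove the purely deterministic statement that the greedy assignment produces a virtual schedule whose surrogate cost is at most four times that of \emph{any} fixed assignment; this is a standard greedy charging argument, exploiting that $\cost(j\to i^{\mathrm{GA}}(j)) = \min_i \cost(j\to i) \le \cost(j\to\sigma'(j))$ together with the key lemma below.

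For an arbitrary policy~$\Pi'$ I would again begin from greedy optimality but average over realizations. Since $\cost(j\to i)$ depends only on the instance and on the machines GA chose for the jobs released before~$j$, it is a deterministic quantity, so $\cost(j\to i^{\mathrm{GA}}(j)) \le \sum_i \Pr[\Pi'\text{ assigns } j \text{ to } i]\cdot\cost(j\to i)$. The key lemma is an upper bound of the shape $\cost(j\to i) \le w_j r_j + w_j\bar p_{ij} + w_j \sum_k \bar p_{ik} + \bar p_{ij}\sum_k w_k$, where the first sum runs over the GA-jobs placed on~$i$ before~$j$ with WSPT ratio on~$i$ at least that of~$j$ and the second over those with smaller ratio. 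The $w_j r_j$ and $w_j \bar p_{ij}$ parts sum to at most $\E[\sum_j w_j \bm C_j^{\Pi'}]$ each, via $\bm C_j^{\Pi'} \ge r_j$ and $\bm C_j^{\Pi'} \ge \bm p_{i^{\Pi'}(j),j}$ together with non-anticipativity. The two remaining bilinear parts combine, pair by pair, into $\min(w_j\bar p_{ik}, w_k\bar p_{ij})$, and bounding the resulting double sum by a constant times $\E[\sum_j w_j \bm C_j^{\Pi'}]$ is the crux: the machine loads appearing there are GA's \emph{own} loads, so the estimate must be closed off using the realizationwise LP feasibility of $\Pi'$ on each of its machines (as in the proof of \cref{lem:lower bound}) together with the diagonal bounds just obtained. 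This is exactly where $\Delta$ enters: passing from a realizationwise quadratic inequality $\sum_{k\in S}\bm p_{ik}\bm M_k^{\Pi'} \ge \dots$ to its expectation forces the estimate $\E[\bm p_{ik}^2] = \bigl(1+\CV[\bm p_{ik}]^2\bigr)\bar p_{ik}^2 \le (1+\Delta)\bar p_{ik}^2$, which is precisely what is avoided in the fixed-assignment case, where the competitor's own assignment serves directly as the comparison object. I expect this bilinear charging, and in particular propagating the constant cleanly through it, to be the main obstacle; everything else rests on \cref{lem:lower bound} and \cref{prop:RSOS,prop:DSOS}.
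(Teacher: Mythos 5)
Your decomposition is the right one and Part~A is exactly the paper's argument: apply \cref{prop:RSOS,prop:DSOS} machine by machine to bound the expected objective by $2$ (resp.\ $\phi+1$) times the surrogate cost $\mathrm{SURR}=\sum_j w_j\bigl(M_j^{\mathrm{GA\text-S^p}}+\bar p_{i^{\mathrm{GA}}(j),j}/2\bigr)$, and the identification of $\mathrm{SURR}$ with the telescoping sum of the increments $\cost(j\to i^{\mathrm{GA}}(j))$ is also how the paper proceeds. The target inequalities for Part~B ($\mathrm{SURR}\le 2(2+\Delta)\cdot\mathrm{OPT}$ and $\mathrm{SURR}\le 4\cdot\mathrm{FA}$) are likewise the correct intermediate goals.

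However, Part~B is where the actual work lies, and your proposal does not carry it out. You correctly isolate the crux --- bounding the bilinear double sum $\sum_{k\in J_i(j)}\iota_k(r_j)\min\{w_j\bar p_{ik},\,w_k\bar p_{ij}\}$, whose inner index runs over GA's \emph{own} jobs on machine~$i$, against the objective of a competitor that may place those jobs elsewhere --- but you then only state that this "must be closed off" and that you "expect this to be the main obstacle." That is the theorem, not a proof of it; a direct realizationwise charging of GA's loads against $\Pi'$'s schedule has no obvious reason to work, precisely because the two assignments can be unrelated. The paper's resolution is to interpose the interval-indexed relaxation $(\mathrm{LP_R})$ and use dual fitting (\cref{lem:dual fitting}): from GA's run one builds a feasible solution to the dual of $(\mathrm{LP_R})$ by setting $\chi_j=\tfrac12\cost(j\to i^{\mathrm{GA}}(j))$ and $\psi_{it}=\tfrac12\sum_{k:i(k)=i}\iota_k(2t)w_k$; feasibility follows from \cref{lem:cost assignment} together with the splitting $\min\{w_j\bar p_{ik},w_k\bar p_{ij}\}\le(\iota_k(r_j)-\iota_k(2t))w_j\bar p_{ik}+\iota_k(2t)w_k\bar p_{ij}$ and the fact that machine~$i$ processes at most $2t-r_j$ units in $(r_j,2t]$, and the dual objective is at least $\tfrac14\mathrm{SURR}$ because the $\psi$-terms are themselves bounded by half the surrogate cost. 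Weak duality then gives $\mathrm{SURR}\le 4\cdot\mathrm{OPT}(\mathrm{LP_R})$, and \ineq{ineq:bound LP_R} converts this into both comparisons at once: the factor $1+\Delta/2$ against an arbitrary policy comes from the known bound of \cite{GMUX20} on $(\mathrm{LP_R})$, and the factor $1$ against fixed-assignment policies comes from $(\mathrm{LP_R})$ being a relaxation of non-migratory preemptive scheduling combined with \cref{lem:lower bound}. Your fixed-assignment branch has the same hole: the "standard greedy charging argument" giving the factor~$4$ is not supplied and is in fact the same dual-fitting step. Without some equivalent of this LP-mediated comparison, the proposal is incomplete.
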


Note that, in contrast to the problem without release dates, finding the best fixed-assignment policy is not equivalent to the deterministic scheduling problem. By applying \cref{prop:RSOS,prop:DSOS} to each machine, the expected total weighted completion times resulting from the two policies can be bounded by
\begin{align}
 \E\biggl[\,\sum_{j=1}^n w_j \cdot \bm C_j^{\mathrm{GA\text-RSOS}}\biggr] &\le 2 \cdot \sum_{j=1}^n w_j \cdot \Bigl(M_j^{\mathrm{GA\text-S^p}} + \frac{\bar p_{i^{\mathrm{GA}}(j),j}}{2}\Bigr), \label{ineq:upper bound GA-RSOS}\\
 \E\biggl[\,\sum_{j=1}^n w_j \cdot \bm C_j^{\mathrm{GA\text-DSOS}}\biggr] &\le (\phi+1) \cdot \sum_{j=1}^n w_j \cdot \Bigl(M_j^{\mathrm{GA\text-S^p}} + \frac{\bar p_{i^{\mathrm{GA}}(j),j}}{2}\Bigr). \label{ineq:upper bound GA-DSOS}
\end{align}
Thus, we only have to bound the quantity on the right-hand side. For this purpose we will use as another intermediate bound the optimal objective value~$\mathrm{OPT}(\mathrm{LP_r})$ of the following linear program. Assume that all expected processing times and release dates are even positive integers (this can be achieved by scaling), and let $T \coloneqq \max_{i \in [m]} \bigl(\max_{j \in [n]} r_j + \sum_{j=1}^n p_{ij} \bigr)$.
{\renewcommand{\arraystretch}{1.25}
 \[\begin{array}{r>{\displaystyle}l>{\qquad\forall}l}
  (\mathrm{LP_R}) \quad \min & \multicolumn{2}{l}{\displaystyle\sum_{j=1}^n w_j \cdot \sum_{i=1}^m \sum_{t=r_j}^{T-1} \left(\frac{y_{ijt}}{\bar p_{ij}} \Bigl(t + \frac 1 2\Bigr) + \frac{y_{ijt}}{2}\right)} \\
  \text{subject to}   &\sum_{i=1}^m \sum_{t=r_j}^{T-1} \frac{y_{ijt}}{\bar p_{ij}} = 1 &j \in [n] \\ 
  & \sum_{\substack{j \in [n]\\ t \ge r_j}} y_{ijt} \le 1 &i \in [m],\ t \in \{0,\dotsc,T-1\} \\
  &y_{ijt} \ge 0 & i \in [m],\ j \in [n],\ t \in \{0,\dotsc,T-1\}
 \end{array}\]%
}%
\Citeauthor{GMUX20} showed in \cite[inequality~(13)]{GMUX20} that for squared coefficients of variation bounded by $\Delta$ the optimal objective value of $(\mathrm{LP_R})$ is at most $1+\frac \Delta 2$ times the expected total weighted completion time of an optimal scheduling policy~$\Pi$. Moreover, it is well-known~\cite{SS02, AGK12} that $(\mathrm{LP_R})$ is a relaxation of the problem of scheduling the deterministic counterparts with preemption but without migration. If $\mathrm{FA}$ is an optimal fixed-assignment policy, then using the same assignment for the deterministic counterparts and applying the preemptive WSPT rule on each machine induces a feasible solution to $(\mathrm{LP_R})$.  By applying \cref{lem:lower bound} to each machine, we conclude that this yields a lower bound for the expected objective under $\mathrm{FA}$. Hence,
\begin{equation}
 \mathrm{OPT}(\mathrm{LP_R}) \le \min\Biggl\{\Bigl(1+\frac \Delta 2 \Bigr) \cdot \E\biggl[\,\sum_{j=1}^n w_j \bm C_j^{\Pi}\biggr],\ \E\biggl[\,\sum_{j=1}^n w_j \bm C_j^{\mathrm{FA}}\biggr]\Biggr\}. \label{ineq:bound LP_R}
\end{equation}
With these preparations, the only thing to be done for the proof of \cref{thm:unrelated} is to show that $\sum_{j=1}^n w_j \cdot \bigl(M_j^{\mathrm{GA\text-S^p}}+\frac{\bar p_{i^{\mathrm{GA}}(j),j}}{2}\bigr)$ is at most $4$ times the minimum objective value of $(\mathrm{LP_R})$. This will be done after some preparation in \cref{lem:dual fitting}, utilizing the technique of dual fitting.

For every $i \in [m]$ and $j \in [n]$ let $J_i(j)$ be the set of all jobs that have already been assigned to machine~$i$ when job~$j$ is being considered\footnote{If multiple jobs are released at time~$r_j$, they are considered one after the other in an arbitrary order.}. Finally, for every $j \in [n]$ and $t \ge r_j$ let 
\[\iota_j(t) \coloneqq 1 - \frac{1}{\bar p_{i^{\mathrm{GA}}(j),j}} \cdot \int_0^{t} I_j^{\mathrm{GA\text-S^p}}(s)\dif s\]
be the fraction of its processing time that the deterministic counterpart of job~$j$ receives in the virtual schedule after time~$t$.

\begin{lemma} \label{lem:cost assignment}
 For every $j \in [n]$ and $i \in [m]$ it holds that
 \[\cost(j \to i) = w_j \cdot \biggl(r_j + \bar p_{ij} + \sum_{\substack{k \in J_i(j) \\ \frac{w_k}{\bar p_{ik}} \ge \frac{w_j}{\bar p_{ij}}}} \iota_k(r_j) \cdot \bar p_{ik}\biggr) + \sum_{\substack{k \in J_i(j) \\ \frac{w_k}{\bar p_{ik}} < \frac{w_j}{\bar p_{ij}}}} w_k \cdot \iota_k(r_j) \cdot \bar p_{ij}.\]
\end{lemma}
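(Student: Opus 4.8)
The plan is to compute $\cost(j \to i)$ directly from its definition as the difference between the surrogate cost $\sum_{k} w_k\bigl(M_k^{\mathrm S_i^{\mathrm p}} + \bar p_{ik}/2\bigr)$ of the preemptive WSPT schedule of $J_i(j)\cup\{j\}$ and that of $J_i(j)$, by tracking exactly how inserting~$j$ alters the virtual schedule~$\mathrm S_i^{\mathrm p}$. Write $\rho_k \coloneqq w_k/\bar p_{ik}$ for $k\in J_i(j)$ and $\rho_j \coloneqq w_j/\bar p_{ij}$, split $J_i(j)$ into the jobs with $\rho_k \ge \rho_j$ and those with $\rho_k < \rho_j$, and denote by $\mathrm S_i^{\mathrm p}{+}j$ the preemptive WSPT schedule of $J_i(j)\cup\{j\}$. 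Since $j$ is considered at its release date~$r_j$ and every job of $J_i(j)$ was considered before~$j$, all jobs in $J_i(j)$ satisfy $r_k \le r_j$. I would first record the harmless observation that, for any $k\in J_i(j)$, the amount $\int_0^{r_j} I_k^{\mathrm S}(s)\dif s$ of processing that the deterministic counterpart of $k$ receives before time~$r_j$ is the same in $\mathrm S_i^{\mathrm p}$ and in $\mathrm{GA\text-S^p}$: all jobs considered (hence assigned) after~$k$ have release dates at least~$r_j$, so they do not influence the preemptive WSPT schedule before time~$r_j$. Thus $\iota_k(r_j)$ may be evaluated in $\mathrm S_i^{\mathrm p}$, which is what the computation below produces.

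Next I would establish the structure of $\mathrm S_i^{\mathrm p}{+}j$. Restricting preemptive WSPT to the jobs with ratio~$\ge \rho_j$ yields the same processing intervals regardless of whether~$j$ and the ratio-$<\rho_j$ jobs are present, since all of them yield priority; hence these jobs, and in particular their mean busy times, are unaffected by adding~$j$. Put $P_{\ge} \coloneqq \sum_{k \in J_i(j)\,:\,\rho_k \ge \rho_j} \iota_k(r_j)\bar p_{ik}$ and $a \coloneqq r_j + P_{\ge}$. Because all these jobs are available at~$r_j$ with total remaining processing~$P_\ge$, in both $\mathrm S_i^{\mathrm p}$ and $\mathrm S_i^{\mathrm p}{+}j$ they fill the entire interval $[r_j,a]$ and are completed by~$a$, and no ratio-$<\rho_j$ job runs during $(r_j,a)$. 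Consequently, in $\mathrm S_i^{\mathrm p}{+}j$ job~$j$ runs uninterruptedly in $[a,a+\bar p_{ij}]$, since nothing of higher priority is released after~$r_j$; this gives $M_j^{\mathrm S_i^{\mathrm p}{+}j} = a + \bar p_{ij}/2$ and hence $w_j\bigl(M_j^{\mathrm S_i^{\mathrm p}{+}j} + \bar p_{ij}/2\bigr) = w_j\bigl(r_j + \bar p_{ij} + P_\ge\bigr)$. For $k\in J_i(j)$ with $\rho_k < \rho_j$, its processing splits into a part of size $(1-\iota_k(r_j))\bar p_{ik}$ lying before~$r_j$, which is untouched, and a part of size $\iota_k(r_j)\bar p_{ik}$ which in $\mathrm S_i^{\mathrm p}$ lies after time~$a$ and in $\mathrm S_i^{\mathrm p}{+}j$ is rigidly translated by $+\bar p_{ij}$ (from~$a$ onward no further releases occur among the ratio-$<\rho_j$ jobs, so preemptive WSPT among them is just a time shift of itself). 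Therefore $M_k^{\mathrm S_i^{\mathrm p}{+}j} - M_k^{\mathrm S_i^{\mathrm p}} = \iota_k(r_j)\bar p_{ij}$.

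Finally I would assemble the pieces: $\cost(j\to i) = w_j\bigl(M_j^{\mathrm S_i^{\mathrm p}{+}j} + \bar p_{ij}/2\bigr) + \sum_{k\in J_i(j)} w_k\bigl(M_k^{\mathrm S_i^{\mathrm p}{+}j} - M_k^{\mathrm S_i^{\mathrm p}}\bigr)$, where the ratio-$\ge\rho_j$ jobs contribute~$0$ to the sum, the ratio-$<\rho_j$ jobs contribute $\sum_{k\in J_i(j)\,:\,\rho_k<\rho_j} w_k\iota_k(r_j)\bar p_{ij}$, and expanding $P_\ge$ in the $j$-term gives precisely the asserted identity. I expect the main obstacle to be the careful justification of the three structural claims about preemptive WSPT with release dates—that the higher-priority jobs fill $[r_j,a]$ exactly, that~$j$ then runs contiguously, and that each lower-priority job's post-$r_j$ portion is shifted rigidly by~$\bar p_{ij}$—all of which rest on the key fact that every job already assigned to~$i$ has been released by time~$r_j$, so no relevant release event occurs after~$r_j$. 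The boundary cases (idle time at~$r_j$, jobs of $J_i(j)$ already completed by~$r_j$, ties $\rho_k=\rho_j$) are routine once this structure is in place.
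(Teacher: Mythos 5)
Your proposal is correct and follows essentially the same argument as the paper: the paper's proof likewise observes that after $r_j$ no job of $J_i(j)$ is preempted, inserts $j$ at time $r_j+\sum_{k:\rho_k\ge\rho_j}\iota_k(r_j)\bar p_{ik}$ where it runs uninterruptedly, and notes that the remaining $\iota_k(r_j)\bar p_{ik}$ units of each lower-priority job are delayed by $\bar p_{ij}$, increasing its midpoint by $\iota_k(r_j)\bar p_{ij}$. You merely spell out in more detail the structural facts about preemptive WSPT (and the harmless identification of $\iota_k(r_j)$ across the intermediate and final virtual schedules) that the paper leaves implicit.
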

\begin{proof}
 Let $j \in [n]$ and $i \in [m]$. In the preemptive WSPT schedule of the deterministic counterparts of all jobs in $J_i(j)$ no job is preempted after time~$r_j$, so from this time all jobs are scheduled in WSPT order. When now the deterministic counterpart of job~$j$ is added to this schedule, it is slated for the first point in time after $r_j$ when all jobs from $J_i(j)$ with larger ratio of weight over processing time will have been completed in this schedule. This is at time \[r_j + \sum_{\substack{k \in J_i(j) \\ \frac{w_k}{\bar p_{ik}} \ge \frac{w_j}{\bar p_{ij}}}} \iota_k(r_j) \cdot \bar p_{ik}.\]
 According to current knowledge, the job will not be interrupted, so its midpoint will be $\frac{\bar p_{ij}}{2}$ time units later. For all deterministic counterparts of jobs~$k \in J_i(j)$ with smaller ratio of weight over processing time, the remaining $\iota_k(r_j) \cdot \bar p_{ik}$ units of their processing time will be delayed by $\bar p_{ij}$. This increases their midpoints by $\iota_k(r_j) \cdot \bar p_{ij}$.
\end{proof}

\begin{lemma} \label{lem:dual fitting}
 \[\sum_{j=1}^n w_j \cdot \Bigl(M_j^{\mathrm{GA\text-S^p}}+\frac{\bar p_{i^{\mathrm{GA}}(j),j}}{2}\Bigr) \le 4 \cdot \mathrm{OPT}(\mathrm{LP_r}).\]
\end{lemma}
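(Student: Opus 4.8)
The plan is to establish the \lcnamecref{lem:dual fitting} by dual fitting against the linear program~$(\mathrm{LP_R})$. First I would write down its linear programming dual: introducing a free multiplier~$a_j$ for the covering constraint of job~$j$ and a nonpositive multiplier~$b_{it}$ for the capacity constraint of the machine--time pair~$(i,t)$, the dual is to
\[\text{maximize}\quad\sum_{j=1}^n a_j + \sum_{i=1}^m\sum_{t=0}^{T-1} b_{it}\qquad\text{subject to}\qquad \frac{a_j}{\bar p_{ij}} + b_{it}\le\frac{w_j}{\bar p_{ij}}\Bigl(t+\tfrac12\Bigr)+\frac{w_j}{2},\quad b_{it}\le 0,\]
where the first family of constraints ranges over all $i\in[m]$, $j\in[n]$, and $t\in\{r_j,\dotsc,T-1\}$. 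By weak duality, any feasible dual pair~$(a,b)$ satisfies $\sum_j a_j+\sum_{i,t}b_{it}\le\mathrm{OPT}(\mathrm{LP_R})$, so it suffices to exhibit one such pair whose objective value is at least $\tfrac14\sum_j w_j\bigl(M_j^{\mathrm{GA\text-S^p}}+\bar p_{i^{\mathrm{GA}}(j),j}/2\bigr)$.

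The second step is the observation that this quantity is precisely the total accumulated greedy cost. Whenever a job is released, the assignment rule increases the surrogate value $\sum_{k \text{ assigned to } i}w_k(M_k^{\mathrm S^{\mathrm p}_i}+\bar p_{ik}/2)$ on exactly one machine by $\cost(j\to i^{\mathrm{GA}}(j))$ and leaves it unchanged on the others, so summing over all jobs yields $\sum_j w_j\bigl(M_j^{\mathrm{GA\text-S^p}}+\bar p_{i^{\mathrm{GA}}(j),j}/2\bigr)=\sum_j\cost(j\to i^{\mathrm{GA}}(j))$. Hence the target reduces to finding a feasible dual pair of value at least $\tfrac14\sum_j\cost(j\to i^{\mathrm{GA}}(j))$.

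For the dual solution I would take $a_j$ to be (essentially) a fixed fraction of $\cost(j\to i^{\mathrm{GA}}(j))$ and $b_{it}$ to be minus a fixed multiple of a load-based ``price'' of the virtual schedule~$\mathrm{GA\text-S^p}$ on machine~$i$ from time~$t$ onward, measured across the jobs~$k$ assigned to~$i$ in a $w_k/\bar p_{ik}$-weighted fashion; the purpose of this weighting is that $\sum_t b_{it}$ then telescopes into roughly $-c\sum_k w_k M_k^{\mathrm{GA\text-S^p}}$, which is at least $-c$ times the left-hand side, so that the dual objective evaluates to a positive multiple of it, and the constants are to be fixed so this multiple is~$\tfrac14$. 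For dual feasibility I would fix $i$, $j$, and $t\ge r_j$, use the defining property of the greedy rule, $\cost(j\to i^{\mathrm{GA}}(j))\le\cost(j\to i)$, to replace $a_j$'s contribution by one involving only machine~$i$, substitute the explicit formula of \cref{lem:cost assignment} and divide the dual inequality by~$\bar p_{ij}$, bound $w_j(r_j+\bar p_{ij})/\bar p_{ij}$ using $r_j\le t$, collapse the priority-dependent asymmetry of that formula by noting that for $k\in J_i(j)$ with $w_k/\bar p_{ik}\ge w_j/\bar p_{ij}$ one has $\tfrac{w_j}{\bar p_{ij}}\iota_k(r_j)\bar p_{ik}\le\tfrac{w_k}{\bar p_{ik}}\iota_k(r_j)\bar p_{ik}$, and match what remains against the price~$b_{it}$, splitting each $\iota_k(r_j)\bar p_{ik}=\int_{r_j}^\infty I_k^{\mathrm{GA\text-S^p}}(s)\dif s=\int_{r_j}^t I_k^{\mathrm{GA\text-S^p}}(s)\dif s+\int_t^\infty I_k^{\mathrm{GA\text-S^p}}(s)\dif s$ and using that the integrand is at most~$1$. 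A small fact I would isolate first is that a job assigned to a machine after job~$j$, or considered after~$j$ with the same release date, never alters the processing that $j$'s competitors receive before their own release dates, so that the $\iota_k$ appearing in $\cost(j\to i)$---defined via the completed schedule~$\mathrm{GA\text-S^p}$---agree with the snapshot available when $j$ is assigned; this is what makes the telescoping above legitimate.

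The step I expect to be the main obstacle is pinning down $b_{it}$ (and, relatedly, the largest admissible~$a_j$): the price attached to slot~$(i,t)$ is a single scalar, yet it must be negative enough to dominate the residual-load term of \emph{every} job~$j$ with $r_j\le t$, each weighted according to its own~$\bar p_{ij}$, while $\sum_t b_{it}$ must simultaneously stay bounded below by only a small multiple of the left-hand side, lest the dual objective collapse. Reconciling these two demands---in particular handling that high-WSPT-ratio jobs may still be running long after~$r_j$, and checking whether the binding case of the feasibility constraint is~$t=r_j$ or must be controlled for larger~$t$ by monotonicity of the price---is where essentially all of the work lies; the final factor~$4$ should then emerge as the product of the factor~$2$ lost in charging a greedy cost increment against the linear programming cost of a time slot and a further factor~$2$ of slack coming from the half-processing-times and from $\iota_k$ versus the fractional LP variables.
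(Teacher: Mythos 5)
Your overall plan coincides with the paper's proof: the same dual of $(\mathrm{LP_R})$, the same choice of dual variables ($a_j$ a constant fraction of $\cost(j\to i^{\mathrm{GA}}(j))$ and $-b_{it}$ a constant multiple of the remaining weight $\sum_{k : i^{\mathrm{GA}}(k)=i}\iota_k(t)\, w_k$ of the virtual schedule on machine~$i$), the same use of the greedy inequality $\cost(j\to i^{\mathrm{GA}}(j))\le\cost(j\to i)$ together with \cref{lem:cost assignment}, the same identity equating the total greedy increments with $\sum_j w_j\bigl(M_j^{\mathrm{GA\text-S^p}}+\bar p_{i^{\mathrm{GA}}(j),j}/2\bigr)$, and the same telescoping of $\sum_t b_{it}$ into the weighted mean busy times. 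Incidentally, your worry that a single scalar price must simultaneously serve every job $j$ is unfounded: in the dual constraint the price enters multiplied by $\bar p_{ij}$ (equivalently, after dividing by $\bar p_{ij}$ the residual-load term $\sum_k\iota_k(t)\,w_k$ is job-independent), so no reconciliation is needed there.

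The genuine gap is in the one feasibility step you describe concretely, and it sits exactly at the crux you flagged. Writing the sum in \cref{lem:cost assignment} as $\sum_{k\in J_i(j)}\iota_k(r_j)\cdot\min\{w_j\bar p_{ik},\ w_k\bar p_{ij}\}$, you propose to first ``collapse the asymmetry'', i.e.\ bound the minimum by $w_k\bar p_{ij}$ also for the high-priority jobs, and only then split $\iota_k(r_j)$ at time~$t$. After that collapse the before-$t$ part becomes $\bar p_{ij}\sum_k\frac{w_k}{\bar p_{ik}}\int_{r_j}^{t}I_k^{\mathrm{GA\text-S^p}}(s)\dif s$, whose integrand at any moment is $w_k/\bar p_{ik}$ for the job currently running; for $k$ of higher WSPT ratio than $j$ this can exceed $w_j/\bar p_{ij}$ by an arbitrary factor, so this part cannot be charged against the LP coefficient $\frac{w_j}{\bar p_{ij}}\bigl(t+\frac 1 2\bigr)$, and ``the integrand is at most $1$'' does not save you. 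The correct order is the opposite: split first, bound the minimum by $w_j\bar p_{ik}$ on the portion $\iota_k(r_j)-\iota_k(2t)$ processed before time~$t$ (so that its total over $k$ is at most $w_j\cdot(2t-r_j)$, a charge at rate $w_j$ only), and bound it by $w_k\bar p_{ij}$ only on the residual portion $\iota_k(2t)$, which is exactly what the price absorbs. With that correction --- and with the discretization bookkeeping you gloss over (the paper's even-integer normalization and the $\lceil s/2\rceil$ computation showing $\sum_{i,t}\psi_{it}\le\frac 1 4\sum_j w_j\bigl(M_j^{\mathrm{GA\text-S^p}}+\bar p_{i^{\mathrm{GA}}(j),j}/2\bigr)$ exactly, not ``roughly'') --- your argument becomes the paper's proof, with constants $\frac 1 2$ on both dual variables and the factor $4$ arising as $\bigl(\frac 1 2-\frac 1 4\bigr)^{-1}$.
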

\begin{proof}
The following linear program is the dual of $(\mathrm{LP_y^{det}})$:
{\renewcommand{\arraystretch}{1.25}
 \[\begin{array}{r>{\displaystyle}l>{\forall}l}
  (\mathrm{D})\quad \max &\sum_{j=1}^n \chi_j - \sum_{i=1}^m \sum_{t=0}^{T-1} \psi_{it} \\
  \text{subject to} & \frac{\chi_j}{\bar p_{ij}} \le \psi_{it} + w_j \cdot \Bigl(\frac{t+1/2}{\bar p_{ij}} + \frac 1 2\Bigr) &i \in [m],\ j \in [n],\ t \in \{0,\dotsc,T-1\} \\
  &\psi_{it} \ge 0 &i \in [m],\ t \in \{0,\dotsc,T-1\}
 \end{array}\]
}%

We define a dual solution as follows:
\begin{align*}
 \chi_j &\coloneqq \frac 1 2 \cdot \cost\bigl(j \to i^{\mathrm{GA}}(j)\bigr) &&\text{for all } j \in [n],\\
 \psi_{it} &\coloneqq \frac 1 2 \cdot \sum_{k : i(k) = i} \iota_k(2t) \cdot w_k &&\text{for all } i \in [m],\ t \in \{0,\dotsc,T-1\}.
\end{align*}
We now show that this solution is feasible, i.e., that
\[\cost\bigl(j \to i^{\mathrm{GA}}(j)\bigr) \le \sum_{k : i(k)=i} \iota_k(2t) \cdot w_k \bar p_{ij} + w_j \cdot (2t+1 + \bar p_{ij})\]
for all $i \in [m]$, $j \in [n]$, and $t \in \{r_j,\dotsc,T-1\}$. Using \cref{lem:cost assignment}, we deduce
\allowdisplaybreaks%
\begin{align*}
 &\cost\bigl(j \to i^{\mathrm{GA}}(j)\bigr) \le \cost(j \to i) \\
 ={} &w_j \cdot \biggl(r_j + \bar p_{ij} + \sum_{\substack{k \in J_i(j) \\ \frac{w_k}{\bar p_{ik}} \ge \frac{w_j}{\bar p_{ij}}}} \iota_k(r_j) \cdot \bar p_{ik}\biggr) + \sum_{\substack{k \in J_i(j) \\ \frac{w_k}{\bar p_{ik}} < \frac{w_j}{\bar p_{ij}}}} w_k \cdot \iota_k(r_j) \cdot \bar p_{ij} \\
 ={} &w_j \cdot (r_j + \bar p_{ij}) + \sum_{k \in J_i(j)} \iota_k(r_j) \cdot \min\{w_j \bar p_{ik},\ w_k \bar p_{ij}\} \\
 \le{} &w_j \cdot (r_j + \bar p_{ij}) + \sum_{k \in J_i(j)} \bigl((\iota_k(r_j) - \iota_k(2t)) \cdot w_j \bar p_{ik} + \iota_k(2t) \cdot w_k \bar p_{ij}\bigr) \\
 \le{} &w_j \cdot (r_j + \bar p_{ij}) + w_j \cdot \sum_{k \in J_i(j)} \bigl(\iota_k(r_j)-\iota_k(2t)\bigr) \cdot \bar p_{ik} + \sum_{k : i(k)=i} \iota_k(2t) \cdot w_k \bar p_{ij} \\
 \le{} &w_j \cdot (2t + \bar p_{ij}) + \sum_{k : i(k)=i} \iota_k(2t) \cdot w_k \bar p_{ij},
\end{align*}
\allowdisplaybreaks[0]%
where we used in the last inequality that the total processing time that jobs assigned to machine~$i$ receive between $r_j$ and $2t$ is at most $2t-r_j$. Furthermore,
\begin{align}
 \sum_{j=1}^n \chi_j &= \frac 1 2 \cdot \sum_{j=1}^n \cost\bigl(j \to i^{\mathrm{GA}}(j)\bigr) = \frac 1 2 \cdot \sum_{j=1}^n w_j \cdot \Bigl( M_j^{\mathrm{GA\text-S^p}} + \frac{\bar p_{i^{\mathrm{GA}}(j),j}}{2} \Bigr), \label{eq:sum chi_j}\\
 \sum_{i=1}^m \sum_{t=0}^{T-1} \psi_{it} &= \frac 1 2 \cdot \sum_{i=1}^m \sum_{t=0}^{T-1} \sum_{j : i^{\mathrm{GA}}(j)=i} \iota_j(2t) \cdot w_j \le \frac 1 4 \cdot \sum_{j=1}^n w_j \cdot \Bigl(M_j^{\mathrm{GA\text-S^p}} + \frac{\bar p_{i^{\mathrm{GA}}(j),j}}{2} \Bigr) \label{ineq:sum psi_it}.
\end{align}
The first equation clearly holds because the sum of the cost increments in all steps of the greedy assignment rule equals the resulting total cost. The intuition for the second inequality is that the sum of the $\iota_j(2t)$ is roughly half the integral over the time of the remaining unfinished weight, when the processed part of each job is already regarded as completed (or equivalently, jobs are considered to be infinitesimally small). This integral is exactly the sum of weighted mean busy times. It is true that we overestimate it by its upper sum, but this can be offset against half the total weighted processing time in the virtual schedule. The formal calculation is as follows:
\allowdisplaybreaks%
\begin{align*}
 &\sum_{i=1}^m \sum_{t=0}^{T-1} \sum_{j : i^{\mathrm{GA}}(j) = i} \iota_j(2t) \cdot w_j \\
 ={} &\sum_{i=1}^m \sum_{j : i^{\mathrm{GA}}(j) = i} w_j \cdot \sum_{t=0}^{T-1} \biggl(1-\frac{1}{\bar p_{ij}} \cdot \int_{0}^{2t} I_j^{\mathrm{GA\text-S^p}}(s)\dif s\biggr)\\
 ={} &\sum_{i=1}^m \sum_{j : i^{\mathrm{GA}}(j) = i} \frac{w_j}{\bar p_{ij}} \cdot \sum_{t=0}^{T-1} \int_{0}^{\infty} \mathds 1_{(2t, \infty)}(s) \cdot I_j^{\mathrm{GA\text-S^p}}(s)\dif s \\
 ={} &\sum_{i=1}^m \sum_{j : i^{\mathrm{GA}}(j) = i} \frac{w_j}{\bar p_{ij}} \cdot \int_{0}^{\infty} \left\lceil\frac s 2\right\rceil \cdot I_j^{\mathrm{GA\text-S^p}}(s)\dif s \\
 ={} &\sum_{i=1}^m \sum_{j : i^{\mathrm{GA}}(j) = i} \frac{w_j}{\bar p_{ij}} \cdot \biggl(\int_{0}^{\infty} \frac s 2 \cdot I_j^{\mathrm{GA\text-S^p}}(s)\dif s + \int_{0}^{\infty} \left(\left\lceil\frac s 2\right\rceil - \frac s 2\right) \cdot I_j^{\mathrm{GA\text-S^p}}(s)\dif s\biggr) \\
 ={} &\sum_{i=1}^m \sum_{j : i^{\mathrm{GA}}(j) = i} w_j \biggl(\frac{M_j^{\mathrm{GA\text-S^p}}}{2} + \frac{1}{\bar p_{ij}} \sum_{t : I_j^{\mathrm{GA\text-S^p}}(s)=1\, \forall s \in (2t, 2(t+1)]} \underbrace{\int_{2t}^{2(t+1)} t+1-\frac s 2\dif s}_{=1}\biggr) \\
 ={} &\frac 1 2 \cdot \sum_{i=1}^m \sum_{j : i^{\mathrm{GA}}(j) = i} w_j \cdot (M_j^{\mathrm{GA\text-S^p}}+1) \le \frac 1 2 \cdot \sum_{j=1}^n w_j \cdot \Bigl(M_j^{\mathrm{GA\text-S^p}} + \frac{\bar p_{i^{\mathrm{GA}}(j),j}}{2}\Bigr),
\end{align*}
where we used that all release dates and expected processing times are even positive integers and thus in $\mathrm{GA\text-S^p}$ starts, completions, and preemptions only take place at times in $[0, T] \cap 2 \Z$. \Cref{eq:sum chi_j} and \ineq{ineq:sum psi_it} imply that
\[\sum_{j=1}^n \chi_j - \sum_{i=1}^m \sum_{t=0}^{T-1} \psi_{it} \ge \frac 1 4 \cdot \sum_{j=1}^n w_j \cdot \Bigl(M_j^{\mathrm{GA\text-S^p}} + \frac{\bar p_{i^{\mathrm{GA}}(j),j}}{2} \Bigr).\]
So the claim follows from weak linear programming duality.
\end{proof}

Now we put everything together in order to prove \cref{thm:unrelated}.

\begin{proof}[\noindent Proof of \cref{thm:unrelated}]
 \begin{enumerate}
  \item Let $\Pi$ be an arbitrary scheduling policy, and let $\mathrm{FA}$ be an arbitrary fixed-assignment policy. Then
  \begin{multline*}
   \E\biggl[\,\sum_{j=1}^n w_j \cdot \bm C_j^{\mathrm{GA\text-RSOS}}\biggr] \stackrel{\eqref{ineq:upper bound GA-RSOS}}\le 2 \cdot \sum_{j=1}^n w_j \cdot \Bigl(M_j^{\mathrm{GA\text-S^p}} + \frac{\bar p_{i^{\mathrm{GA}}(j),j}}{2}\Bigr) \\
   \stackrel{\text{Lem.~\ref{lem:dual fitting}}}\le 8 \cdot \mathrm{OPT}(\mathrm{LP_R}) \\
   \stackrel{\eqref{ineq:bound LP_R}}\le \min\Biggl\{(8 + 4 \Delta) \cdot \E\biggl[\,\sum_{j=1}^n w_j \bm C_j^{\Pi}\biggr],\ 8 \cdot \E\biggl[\,\sum_{j=1}^n w_j \bm C_j^{\mathrm{FA}}\biggr]\Biggr\}.
  \end{multline*}
  \item The proof for the DSOS policy works exactly in the same with \ineq{ineq:upper bound GA-RSOS} replaced by \ineq{ineq:upper bound GA-DSOS}. \qedhere
 \end{enumerate}
\end{proof}

\section{Refined scheduling policies} \label{seq:refined policies}

In this \lcnamecref{seq:refined policies} we will describe and analyze the refined policies that are aware of an upper bound~$\Delta$ for the coefficients of variation of the random processing times. When not stated explicitly, we again consider a single machine and use the notation and assumptions from \cref{sec:RSOS DSOS}. In particular, we always assume that $w_1/\bar p_1 \ge \cdots \ge w_n / \bar p_n$. The resulting bounds are illustrated in \cref{fig:performance single}.
\begin{figure}
 \centering
 \begin{tikzpicture}[font=\footnotesize]
  \begin{axis}[
    xmin=0, xmax=2.05, ymin=1, ymax=2.8, xlabel={$\Delta$}, ylabel={performance guarantee},
    ytick={1, 1.5, 1.6853, 2, {1+sqrt(2)}, {(3+sqrt(5))/2}},
    yticklabels={$1$, $1.5$, $1.6853$, $2$, $1+\sqrt 2$, $\phi+1$}
   ]
   \addplot[myblue, very thick] coordinates {(0, {(3+sqrt(5))/2}) (2, {(3+sqrt(5))/2})} node[yshift=2pt, right] {deterministic (\Cref{prop:DSOS})};
   \addplot[mygreen, very thick, domain=0:1] (x, {1+(sqrt(x)+sqrt(32-8*sqrt(x)+x))/4});
   \addplot[mygreen, very thick, domain=1:2] plot (x, {1+(2*(2+x))/(sqrt(8+12*x+5*x*x)-x)}) node[anchor=north west, align=left, yshift=7pt] {deterministic, optimized for $\Delta$\\(\Cref{cor:SOS alpha Delta})};
   \addplot[myorange, very thick] coordinates {(0, 2) (2, 2)} node [yshift=2pt, right] {randomized (\Cref{prop:RSOS})};
   \addplot[very thick, myred] table[x expr=(\coordindex+1)/1000, y index=0] {rand_single_machine.dat} node[yshift=9pt, align=left, anchor=north west] {randomized, optimized for $\Delta$\\(\Cref{cor:RSOS density Delta})};
   \draw[mygreen] (0, {1+sqrt(2)}) node [circle, fill, minimum size=6pt, inner sep=0pt] {} node[below right] {\cite{GQS+02}};
   \draw[myred] (0, 1.6853) node [circle, fill, minimum size=6pt, inner sep=0pt] {} node[below right] {\cite{GQS+02}};
   \draw[dashed] (1, 1) -- +(0, 2.7);
  \end{axis}
 \end{tikzpicture}
 \caption{Performance guarantees of different online policies for a single machine as a function of upper bound~$\Delta$ for the squared coefficients of variation. The dashed line indicates the guarantees for exponentially distributed processing times.}
 \label{fig:performance single}
\end{figure}
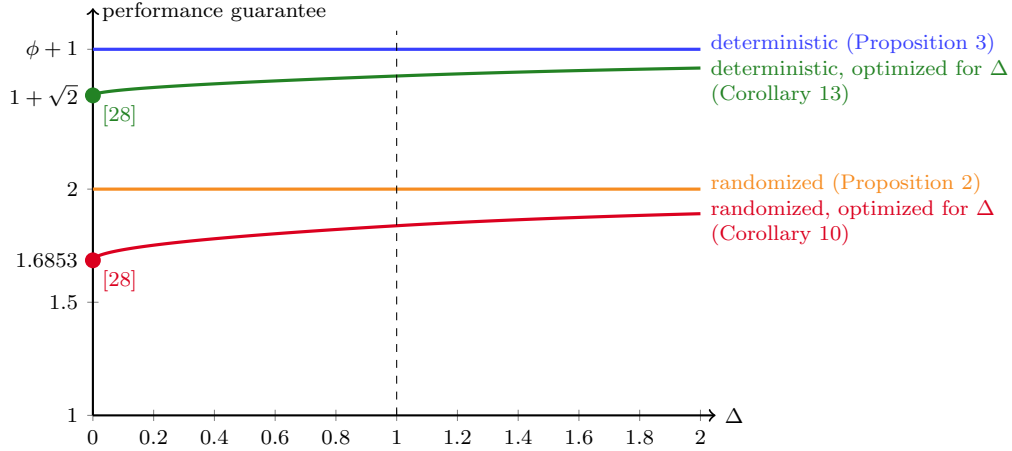
For example, it follows from these bounds that for exponentially distributed processing times, which have coefficient of variation $1$, there is a $1.839$-competitive randomized and a $2.5$-competitive deterministic online policy, improving significantly upon the general bounds of $2$ and $2.619$ from \cref{sec:RSOS DSOS}. A similar refinement is also possible when the processing times are $\delta$-NBUE. This leads to still better bounds of $1.783$ and $2.452$, respectively, for exponentially distributed processing times. In this \lcnamecref{seq:refined policies}, however, we will limit ourselves to the analysis for bounded coefficients of variation in order to keep its length within reason. We will present the modifications necessary for $\delta$-NBUE processing times in \cref{apx:refined delta}

As announced in the \lcnameref{sec:introduction}, the policies developed in this \lcnamecref{seq:refined policies} generalize the online algorithms for deterministic processing times by \citet{GQS+02} for. The analysis given here also extends their analysis and includes it as the special case~$\Delta = 0$.
We start in \cref{subsec:delayed list scheduling} with general delayed list scheduling in order of $\alpha_j$-scheduling, which is the basis of both the randomized and the deterministic single-machine policy. In \cref{subsec:refined RSOS,subsec:refined DSOS} we will then study these two types of policies separately.

\subsection{Delayed list scheduling in order of \texorpdfstring{$\alpha_j$}{alpha\_j}-points} \label{subsec:delayed list scheduling}

Assume that $\CV[\bm p_j]^2 \le \Delta$ for all jobs~$j \in [n]$. For every $A = (\alpha_j)_{j\in \N} \in (0, 1]^\N$ the $\mathrm{SOS}(A)$ policy schedules each job~$j$ as early as possible after time~$C_j(\alpha_j)$ and after all jobs~$k$ with $C_k(\alpha_k) < C_j(\alpha_j)$. For $\alpha \in (0, 1]$ we write $\mathrm{SOS}(\alpha)$ for $\mathrm{SOS}(\alpha,\alpha,\dotsc)$. Consider now a fixed $A \in (0, 1]^\N$ and $j \in [n]$, and for $k \in [n]$ let $\eta_k = \eta_k(j)$ be defined by \cref{eq:def eta_k} for this job~$j$. For $x \in \R$ the positive part is denoted by $x^+ \coloneqq \max\{x, 0\}$. The following lemma is a stochastic analogue of \cite[Lemma~3.1]{GQS+02}.

\begin{lemma} \label{lem:bound completion time SOS}
 \[\bm C_j^{\mathrm{SOS}(A)} \le C_j(\alpha_j) + \sum_{k : \alpha_k \le \eta_k} \bigl(\bm p_k-(\eta_k-\alpha_k) \cdot \bar p_k\bigr)^+.\]
\end{lemma}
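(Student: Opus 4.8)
The plan is to mimic the deterministic argument in \cite[Lemma~3.1]{GQS+02}, but tracking the actual (random) processing times in place of the deterministic ones. Fix $j$ and condition on the realization of all processing times; the inequality is realization-wise, so this is harmless. Consider the moment $C_j(\alpha_j)$ in the virtual preemptive WSPT schedule $\mathrm{S^p}$. At that time, exactly those jobs $k$ with $C_k(\alpha_k) \le C_j(\alpha_j)$ have already been released into the FIFO queue on the real machine (and $j$ itself has just been, at its own $\alpha_j$-point). Hence from time $C_j(\alpha_j)$ onward the real machine runs without idle time until $j$ completes, processing $j$ together with all jobs $k\neq j$ whose $\alpha_k$-point precedes $j$'s. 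Therefore
\[
 \bm C_j^{\mathrm{SOS}(A)} \;\le\; C_j(\alpha_j) \;+\; \bm p_j \;+\; \sum_{\substack{k\neq j\\ C_k(\alpha_k) < C_j(\alpha_j)}} \bm p_k,
\]
and as in the excerpt the condition $C_k(\alpha_k) < C_j(\alpha_j)$ is equivalent to $\alpha_k \le \eta_k$ (strictly, $\alpha_k<\eta_k$ up to the boundary case, which one absorbs into the definition; the same convention as in the proof of \cref{prop:RSOS}).

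The point of the lemma is that this crude bound over-counts: for each such $k$, the portion $\eta_k\bar p_k$ of $k$'s \emph{deterministic} counterpart that has already been processed in $\mathrm{S^p}$ by time $C_j(\alpha_j)$ can partly be ``credited'' against the real processing. More precisely, $C_j(\alpha_j)$ already accounts for the total deterministic work $\sum_{k:\alpha_k\le\eta_k}\eta_k\bar p_k$ done before $j$'s $\alpha_j$-point (this is \ineq{eq:processing time before alpha point}, with $\alpha_j\bar p_j$ being $j$'s own share). So instead of adding the full $\bm p_k$ for each preceding job, one should only add the excess of $\bm p_k$ over the amount of $k$'s deterministic counterpart that $\mathrm{S^p}$ has already ``used up'' before $C_j(\alpha_j)$ — but one must be careful: what is actually available as slack in the real schedule is the part of $k$'s deterministic job scheduled \emph{strictly between} $C_k(\alpha_k)$ and $C_j(\alpha_j)$, which has length $(\eta_k-\alpha_k)\bar p_k$, since $\alpha_k\bar p_k$ of $k$ was processed before $k$'s own $\alpha_k$-point and sits ``behind'' $C_k(\alpha_k)$. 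Combining: the real work that must still be inserted after $C_j(\alpha_j)$ on account of job $k$ is at most $\bigl(\bm p_k-(\eta_k-\alpha_k)\bar p_k\bigr)^+$ — it cannot be negative, whence the positive part, and it cannot exceed $\bm p_k$, which is why the crude bound is recovered in the worst case. The job $j$ itself contributes $\bm p_j$, but one checks it equals the $k=j$ term: $\eta_j=\alpha_j$, so $(\bm p_j-(\eta_j-\alpha_j)\bar p_j)^+=\bm p_j^+=\bm p_j$, and indeed $\alpha_j\le\eta_j$ holds, so $j$ is legitimately included in the sum $\sum_{k:\alpha_k\le\eta_k}$.

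The one genuine subtlety — and the step I expect to be the main obstacle to write carefully — is justifying the crediting bookkeeping rigorously: one wants to say that the busy interval of the real machine from $C_j(\alpha_j)$ to $\bm C_j^{\mathrm{SOS}(A)}$ has length at most $\sum_{k:\alpha_k\le\eta_k}(\bm p_k-(\eta_k-\alpha_k)\bar p_k)^+$, and $C_j(\alpha_j)$ itself already exceeds $\sum_{k:\alpha_k\le\eta_k}(\eta_k-\alpha_k)\bar p_k + \sum_k \alpha_k\bar p_k\,\mathds 1[\dots]$-type terms. The clean way is: start from $\bm C_j^{\mathrm{SOS}(A)}\le C_j(\alpha_j)+\sum_{k:\alpha_k\le\eta_k}\bm p_k - \bigl(C_j(\alpha_j) - \text{(idle time in }[0,C_j(\alpha_j)]\text{)}\bigr)$ is not quite it; rather, argue directly that $\bm C_j^{\mathrm{SOS}(A)}\le C_j(\alpha_j)+\sum_{k:\alpha_k\le\eta_k}\bm p_k$ and then subtract, for each $k$ in the sum, the quantity $\min\{(\eta_k-\alpha_k)\bar p_k,\ \bm p_k\}=(\eta_k-\alpha_k)\bar p_k-\bigl((\eta_k-\alpha_k)\bar p_k-\bm p_k\bigr)^+$, using that $\sum_{k:\alpha_k\le\eta_k}(\eta_k-\alpha_k)\bar p_k\le C_j(\alpha_j)-$ (the processing of $j$ up to its $\alpha_j$-point, i.e.\ $\alpha_j\bar p_j$, which does not lie before any $C_k(\alpha_k)$ unless $k=j$) $\le C_j(\alpha_j)$ — this is again \ineq{eq:processing time before alpha point} rearranged. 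Then
$\sum_{k:\alpha_k\le\eta_k}\bm p_k-\sum_{k:\alpha_k\le\eta_k}\min\{(\eta_k-\alpha_k)\bar p_k,\bm p_k\}=\sum_{k:\alpha_k\le\eta_k}(\bm p_k-(\eta_k-\alpha_k)\bar p_k)^+$
by the elementary identity $a-\min\{b,a\}=(a-b)^+$ valid for $a\ge 0$ (here $a=\bm p_k\ge0$, $b=(\eta_k-\alpha_k)\bar p_k\ge 0$ since $\alpha_k\le\eta_k$). The rest is just assembling these pieces into the stated single displayed inequality; no probabilistic input is needed for the lemma itself, only the structural facts about $\mathrm{S^p}$ and the FIFO rule already established for \cref{prop:RSOS}.
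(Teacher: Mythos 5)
There is a genuine gap, and it sits exactly at the step you yourself flag as the ``one genuine subtlety'': the crediting is asserted rather than proven, and the justification you offer for it does not work. From the crude bound $\bm C_j^{\mathrm{SOS}(A)} \le C_j(\alpha_j) + \sum_{k:\alpha_k\le\eta_k}\bm p_k$ together with $\sum_{k:\alpha_k\le\eta_k}(\eta_k-\alpha_k)\cdot\bar p_k\le C_j(\alpha_j)$ you cannot conclude that $\min\{(\eta_k-\alpha_k)\bar p_k,\,\bm p_k\}$ may be subtracted from each summand: an inequality of the form $A\le B+\sum_k a_k$ combined with $\sum_k b_k\le B$ does not imply $A\le B+\sum_k(a_k-b_k)^+$. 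Worse, the per-job form of your crediting claim --- that the real work on job~$k$ still to be done after time $C_j(\alpha_j)$ is at most $\bigl(\bm p_k-(\eta_k-\alpha_k)\bar p_k\bigr)^+$ --- is false. Take three jobs with $\mathrm{S^p}$ processing job~$1$ in $[0,10]$, job~$2$ in $[10,20]$, job~$j$ in $[20,30]$, and $\alpha_1=\alpha_2=\tfrac12$, $\alpha_j=1$; then $C_1(\alpha_1)=5$, $C_2(\alpha_2)=15$, $C_j(\alpha_j)=30$ and $(\eta_2-\alpha_2)\bar p_2=5$. If $\bm p_1=100$ and $\bm p_2=1$, job~$2$ has not even started by time~$30$, so its remaining real work is $1$, while your per-job bound is $(1-5)^+=0$. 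The lemma itself survives only because job~$1$'s term $(\bm p_1-5)^+=95$ over-counts its actual contribution; the credits must be pooled across all jobs of one busy period, and that pooling is precisely what your argument is missing.

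The paper supplies the missing idea via a busy-period decomposition. Let $\bm\ell$ be the first job of the maximal busy period of the real schedule that ends at $\bm C_j^{\mathrm{SOS}(A)}$, and let $\bm K$ be the set of jobs processed in it. Since the machine is idle immediately before $\bm S_{\bm\ell}^{\mathrm{SOS}(A)}$, one has $\bm S_{\bm\ell}^{\mathrm{SOS}(A)}=C_{\bm\ell}(\alpha_{\bm\ell})$ and hence $\bm C_j^{\mathrm{SOS}(A)}=C_{\bm\ell}(\alpha_{\bm\ell})+\sum_{k\in\bm K}\bm p_k$. Because jobs are sequenced in order of their $\alpha_k$-points, every $k\in\bm K$ has its $\alpha_k$-point between $C_{\bm\ell}(\alpha_{\bm\ell})$ and $C_j(\alpha_j)$, so its deterministic counterpart receives at least $(\eta_k-\alpha_k)\bar p_k$ units of processing inside that interval in $\mathrm{S^p}$; as these processings are disjoint, $C_j(\alpha_j)\ge C_{\bm\ell}(\alpha_{\bm\ell})+\sum_{k\in\bm K}(\eta_k-\alpha_k)\bar p_k$. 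Combining the two displayed facts and then passing to positive parts and enlarging $\bm K$ to $\{k:\alpha_k\le\eta_k\}$ gives the lemma. Note that the credit is measured against $C_j(\alpha_j)-C_{\bm\ell}(\alpha_{\bm\ell})$ and summed only over $\bm K$, not against all of $[0,C_j(\alpha_j)]$ and over all $k$ with $\alpha_k\le\eta_k$ as in your \ineq{eq:processing time before alpha point}-based version. Your remaining observations (the FIFO order of $\alpha_k$-points, $\eta_j=\alpha_j$ so that $j$ contributes $\bm p_j$, the identity $a-\min\{a,b\}=(a-b)^+$) are all correct but do not assemble into a proof without this decomposition.
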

\begin{proof}
 For $j \in [n]$ let $\bm S_j^{\mathrm{SOS}(A)}$ denote the start time of job~$j$ in the schedule produced by the policy~$\mathrm{SOS}(A)$. Let $\bm \ell$ be the first job in this schedule such that there is no idle time between $\bm S_{\bm \ell}^{\mathrm{SOS}(A)}$ and $\bm C_j^{\mathrm{SOS}(A)}$, and let $\bm K$ be the set of jobs processed during this interval in this schedule. Then $\bm S_{\bm \ell}^{\mathrm{SOS}(A)} = C_{\bm \ell}(\alpha_{\bm \ell})$ and $\bm C_j^{\mathrm{SOS}(A)} = C_{\bm \ell}(\alpha_{\bm \ell}) + \sum_{k \in \bm K} \bm p_k$. On the other hand, the $\alpha_j$-point of each job~$k \in \bm K$ lies between $C_{\bm \ell}(\alpha_{\bm \ell})$ and $C_j(\alpha_j)$, meaning that its deterministic counterpart is processed for at least $(\eta_k - \alpha_k) \cdot \bar p_k$ time units within the interval~$\bigl(C_{\bm \ell}(\alpha_{\bm \ell}), C_j(\alpha_j)\bigr]$ in $\mathrm{S^p}$. Thus, $C_j(\alpha_j) \ge C_{\bm \ell}(\alpha_{\bm \ell}) + \sum_{k \in \bm K} (\eta_k - \alpha_k) \cdot \bar p_k$. Hence,
 \begin{multline*}
  \bm C_j^{\mathrm{SOS}(A)} \le C_j(\alpha_j) + \sum_{k \in \bm K} \bigl(\bm p_k - (\eta_k - \alpha_k) \cdot \bar p_k\bigr) \\
  \le C_j(\alpha_j) + \sum_{k \in \bm K} \bigl(\bm p_k - (\eta_k - \alpha_k) \cdot \bar p_k\bigr)^+ \le C_j(\alpha_j) + \sum_{k : \alpha_k \le \eta_k} \bigl(\bm p_k - (\eta_k - \alpha_k) \cdot \bar p_k\bigr)^+
 \end{multline*}
 because every $k \in \bm K$ satisfies that $\alpha_k \le \eta_k$.
\end{proof}

\begin{lemma} \label{lem:parametric}
 Let $g$ be defined by \cref{eq:def g}. Then
\[\E[\bm C_j^{\mathrm{SOS}(A)}] \le C_j(\alpha_j) + \sum_{k : \alpha_k \le \eta_k} \big(1-g(\Delta) \cdot (\eta_k - \alpha_k)\bigr) \cdot \bar p_k.\]
\end{lemma}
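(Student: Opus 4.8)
The plan is to take expectations in \cref{lem:bound completion time SOS} and reduce to a one‑dimensional moment estimate. The numbers $\alpha_\ell$, and hence the $\eta_\ell$, are fixed, so linearity of expectation applied to \cref{lem:bound completion time SOS} yields
\[\E\bigl[\bm C_j^{\mathrm{SOS}(A)}\bigr]\le C_j(\alpha_j)+\sum_{k\,:\,\alpha_k\le\eta_k}\E\bigl[\bigl(\bm p_k-(\eta_k-\alpha_k)\bar p_k\bigr)^+\bigr],\]
so it suffices to prove, for every $k$ with $\alpha_k\le\eta_k$, that $\E[(\bm p_k-(\eta_k-\alpha_k)\bar p_k)^+]\le(1-g(\Delta)(\eta_k-\alpha_k))\bar p_k$. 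Put $\beta\coloneqq\eta_k-\alpha_k$; then $\beta\ge0$ because $\alpha_k\le\eta_k$, and $\beta<1$ because $\alpha_k>0$ and $\eta_k\le1$. Dividing by $\bar p_k>0$ and writing $X\coloneqq\bm p_k/\bar p_k$, the claim becomes: for every nonnegative random variable $X$ with $\E[X]=1$ and $\E[X^2]=1+\CV[\bm p_k]^2\le1+\Delta$, and every $\beta\in[0,1)$,
\[\E[(X-\beta)^+]\le1-g(\Delta)\beta.\]
(For $\beta=0$ both sides equal $1$, so assume $\beta>0$.)

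The method is to majorize the convex function $x\mapsto(x-\beta)^+$ on $[0,\infty)$ by a convex quadratic $Q$: then $\E[(X-\beta)^+]\le\E[Q(X)]$, and, $Q$ being convex, its leading coefficient is nonnegative, so $\E[Q(X)]$ is controlled using only $\E[X]=1$ and $\E[X^2]\le1+\Delta$. Which $Q$ to use depends on the sign of $1+\Delta-2\beta$. If $2\beta\le1+\Delta$, take $Q(x)\coloneqq(x-\beta)+\frac{\beta}{(1+\Delta)^2}(x-1-\Delta)^2$: for $x\ge\beta$ one has $Q(x)\ge x-\beta=(x-\beta)^+$, while on $[0,\beta]$ the value $Q(x)$ is nonnegative because $Q$ is convex with $Q(0)=0$ and $Q'(0)=1-\frac{2\beta}{1+\Delta}\ge0$. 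Inserting the moment bounds gives $\E[Q(X)]\le1-\frac{\beta}{1+\Delta}$, and this is at most $1-g(\Delta)\beta$ because $g(\Delta)\le\frac1{1+\Delta}$ for every $\Delta\ge0$ (with equality for $\Delta\ge1$, and for $\Delta\le1$ equivalent to $(\sqrt\Delta-1)^2\ge0$).

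If instead $2\beta>1+\Delta$, which forces $\Delta<1$, set $s\coloneqq\sqrt{(1-\beta)^2+\Delta}>0$ and take $Q(x)\coloneqq(x-\beta)+\frac1{4s}(x-\beta-s)^2$: again $Q(x)\ge x-\beta=(x-\beta)^+$ for $x\ge\beta$, and $Q$ is convex with global minimum $Q(\beta-s)=0$, so $Q(x)\ge0=(x-\beta)^+$ on $[0,\beta]$ as well. Using $\E[X]=1$, $\E[X^2]\le1+\Delta$, and the identity $1+\Delta=s^2+2\beta-\beta^2$ to simplify, one obtains $\E[Q(X)]\le\frac{1-\beta+s}{2}$. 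It remains to verify $\frac{1-\beta+s}{2}\le1-g(\Delta)\beta=1-\beta+\frac{\beta\sqrt\Delta}{2}$, i.e.\ $s\le(1-\beta)+\beta\sqrt\Delta$; squaring and simplifying (using $\beta<1$) reduces this, when $\Delta>0$, to $\sqrt\Delta(1+\beta)\le2\beta$, which follows from $\sqrt\Delta<\sqrt{2\beta-1}\le\frac{2\beta}{1+\beta}$—the last inequality being $(2\beta-1)(1+\beta)^2\le4\beta^2$, equivalently $(\beta-1)(2\beta^2+\beta+1)\le0$, true since $\beta<1$; and for $\Delta=0$ the inequality $s\le1-\beta$ holds with equality.

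The only non‑routine ingredients are the case split at $2\beta=1+\Delta$ and the two quadratics; the rest is elementary algebra. A way to discover them is to solve the dual of the moment program $\max\{\E[(\beta-X)^+]:\E[X]=1,\ \E[X^2]\le1+\Delta,\ X\ge0\}$, whose optimal distribution is two‑point—supported on $\{0,1+\Delta\}$ when $2\beta\le1+\Delta$ and on $\{\beta-s,\beta+s\}$ otherwise—and $Q$ is in each case, up to the affine term $x-\beta$, exactly the quadratic tangent to $(\beta-x)^+$ through those two support points. For a clean write‑up one can bypass the moment program and simply exhibit and verify the $Q$'s; I expect the case $2\beta>1+\Delta$, and specifically its closing chain of elementary inequalities, to be the fiddliest part.
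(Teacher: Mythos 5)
Your proof is correct. The opening step---taking expectations in \cref{lem:bound completion time SOS} and reducing everything to the one-dimensional moment estimate $\E[(\bm X-\beta\,\E[\bm X])^+]\le(1-g(\Delta)\beta)\,\E[\bm X]$ for $\beta\in[0,1)$---is exactly what the paper does; that estimate is isolated as \cref{lem:expected positive part Delta} in the appendix, and the paper's proof of \cref{lem:parametric} is literally ``take expectations and apply that lemma.'' Where you genuinely diverge is in the proof of the moment estimate itself. The paper splits on $\Delta\le1$ versus $\Delta\ge1$: for $\Delta\ge1$ it uses (in normalized form) the same convex quadratic majorant $\tfrac{\beta}{(1+\Delta)^2}(x-(1+\Delta))^2$ as your first case, but for $\Delta\le1$ it instead writes $(\bm X-\beta\E[\bm X])^+\le(1-\beta)\bm X+\beta(\bm X-\E[\bm X])^+$ and bounds $\E[(\bm X-\E[\bm X])^+]=\tfrac12\E[\lvert\bm X-\E[\bm X]\rvert]\le\tfrac12\sqrt{\Var[\bm X]}$ by Cauchy--Schwarz, which directly produces the factor $1-\beta\cdot\tfrac{2-\sqrt\Delta}{2}$. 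You split instead at $2\beta=1+\Delta$ and use a second tangent quadratic for the remaining regime, so your argument is a uniform ``majorize the kink by a convex quadratic and plug in the first two moments'' scheme; I checked the algebra in both of your cases (including the closing chain $\sqrt\Delta<\sqrt{2\beta-1}\le\tfrac{2\beta}{1+\beta}$) and it is correct. Your route even gives the slightly stronger bound $1-\tfrac{\beta}{1+\Delta}$ whenever $2\beta\le1+\Delta$, and makes the extremal two-point distributions visible, at the cost of a fiddlier verification than the paper's short Cauchy--Schwarz step for $\Delta\le1$.
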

\begin{proof}
 Taking the expectation of \cref{lem:bound completion time SOS} and applying \cref{lem:expected positive part Delta} from the appendix, which generalizes \cite[Lemma~A.2]{GMUX20}, yields
 \begin{align*}
  \E[\bm C_j^{\mathrm{SOS}(A)}] &\le C_j(\alpha_j) + \sum_{k: \alpha_k \le \eta_k} \E\bigl[(\bm p_k - (\eta_k - \alpha_k) \cdot \bar p_k)^+\bigr]\\
  &\le C_j(\alpha_j) + \sum_{k: \alpha_k \le \eta_k} \bigl(1-g(\Delta) \cdot (\eta_k-\alpha_k)\bigr) \cdot \bar p_k. \qedhere
 \end{align*}
\end{proof}

\subsection{Refined randomized online policies} \label{subsec:refined RSOS}

We now turn to refining Schulz's RSOS policy by choosing the $\bm \alpha_j$ according to more sophisticated probability distributions. For a given probability density function~$f$ on $(0, 1]$ the $\mathrm{RSOS}(f)$ policy acts as the RSOS policy with the exception that the $\bm \alpha_j$ are drawn independently according to the distribution given by $f$. The RSOS policy is the special case when $f = \mathds 1_{(0, 1]}$.

We start by recalling the analysis by \citeauthor{GQS+02} of the structure of the preemptive WSPT schedule~$\mathrm{S^p}$. We denote by $S_j^{\mathrm{S^p}}$ the start time of the deterministic counterpart of job~$j \in [n]$ in $\mathrm{S^p}$. Whenever a job~$j$ is preempted in $\mathrm{S^p}$, then all jobs processed between that moment and the resumption of $j$ have smaller index than $j$ and are therefore completely processed within this time. Hence, for a fixed job~$j$ the set~$[n] \setminus \{j\}$ can be partitioned into two subsets~$N_1 = N_1(j)$ and $N_2 = N_2(j)$. Here $N_1$ is the set of all jobs whose deterministic counterparts are completely processed outside the interval~$(S_j^{\mathrm{S^p}}, C_j^{\mathrm{S^p}}]$, and $N_2$ consists of all jobs $k \neq j$ whose deterministic counterparts are completely processed within $(S_j^{\mathrm{S^p}}, C_j^{\mathrm{S^p}}]$ in $\mathrm{S^p}$. For every $A = (\alpha_j)_{j \in \N} \in (0, 1]^\N$ and $k \in N_1$ the quantity~$\eta_k = \eta_k(j)$ indicates, independently of $\alpha_j$, the fraction of its processing time that the deterministic counterpart of $k$ receives in $\mathrm{S^p}$ before time~$S_j^{\mathrm{S^p}}$. For every $k \in N_2$ let 
\[\mu_k = \mu_k(j) \coloneqq \frac{1}{\bar p_j} \cdot \int_0^{S_k^{\mathrm{S^p}}} I_j^{\mathrm{S^p}}(t)\dif t = \max\bigl\{\alpha \in (0, 1) \bigm| C_j(\alpha) \le S_k^{\mathrm{S^p}}\bigr\}\]
be the fraction of its processing time that the deterministic counterpart of $j$ receives in $\mathrm{S^p}$ before time~$S_k^{\mathrm{S^p}}$. 
Then for $k \in N_2$
\[\eta_k = \begin{cases*} 0 &if $\alpha_j \le \mu_k$;\\ 1 &if $\alpha_j > \mu_k$ \end{cases*}\]
because when the deterministic counterpart of job~$k$ preempts that of job~$j$, this is because it has lower index, so that the counterpart of $j$ is resumed only after that of $k$ has been completed.
Using the introduced notation we can represent the $\alpha$-point of $j$ for any $\alpha \in (0, 1]$ as
\begin{equation}
 C_j(\alpha) = S_j^{\mathrm{S^p}} + \sum_{\substack{k \in N_2\\\alpha > \mu_k}} \bar p_k + \alpha \cdot \bar p_j.\label{eq:alpha point}
\end{equation}

\begin{theorem} \label{thm:RSOS Delta}
 Let $f$ be a probability density function on $(0, 1]$, and let $c \in \R$ be such that for all $x \in (0, 1]$ the following two conditions are fulfilled:
 \begin{enumerate}
  \item $\int_0^x \bigl(1-g(\Delta) \cdot (x-\alpha)\bigr) \cdot f(\alpha)\dif \alpha \le (c-1) \cdot x$,
  \item $\left(2-g(\Delta) \cdot \left(1-\int_0^1 \alpha \cdot f(\alpha)\dif \alpha\right)\right) \cdot \int_{1-x}^1 f(\alpha)\dif \alpha \le c \cdot x$.
 \end{enumerate}
 Then the $\mathrm{RSOS}(f)$ policy satisfies
 \begin{equation}\E\biggl[\,\sum_{j=1}^n w_j \cdot \bm C_j^{\mathrm{RSOS}(f)}\biggr] \le c \cdot \sum_{j=1}^n w_j \cdot \Bigl( M_j^{\mathrm{S^p}} + \frac{\bar p_j}{2}\Bigr) \le c \cdot \E\biggl[\,\sum_{j=1}^n w_j \cdot \bm C_j^{\mathrm{OPT}}\biggr].\label{ineq:RSOS(f)} \end{equation}
\end{theorem}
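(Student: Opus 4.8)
The plan is to bound $\E[\bm C_j^{\mathrm{RSOS}(f)}]$ for each fixed job $j$ separately and then sum with weights, exactly as in the proof of \cref{prop:RSOS}, but now carrying the sharper estimate from \cref{lem:parametric} and the structural decomposition $[n]\setminus\{j\}=N_1(j)\cup N_2(j)$. Starting from \cref{lem:parametric}, conditioning on $\bm\alpha_j=\alpha_j$ and taking expectation over the remaining $\bm\alpha_k$, $k\neq j$, I would split the sum $\sum_{k:\alpha_k\le\eta_k}(1-g(\Delta)(\eta_k-\alpha_k))\bar p_k$ into the $N_1$-part and the $N_2$-part. For $k\in N_1$ the quantity $\eta_k$ does not depend on $\alpha_j$, so averaging over $\bm\alpha_k\sim f$ gives a term $\int_0^{\eta_k}(1-g(\Delta)(\eta_k-\alpha))f(\alpha)\dif\alpha\cdot\bar p_k$, which by hypothesis~1 is at most $(c-1)\eta_k\bar p_k$. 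For $k\in N_2$, $\eta_k$ is $0$ or $1$ depending on whether $\alpha_j\le\mu_k$; the contribution is nonzero only when $\alpha_j>\mu_k$ (so $\eta_k=1$), and then averaging over $\bm\alpha_k$ yields $\bar p_k$ times $\int_0^1(1-g(\Delta)(1-\alpha))f(\alpha)\dif\alpha = 1-g(\Delta)(1-\int_0^1\alpha f(\alpha)\dif\alpha)$. Collecting, and using $\sum_{k\in N_1}\eta_k\bar p_k + \sum_{k\in N_2,\,\alpha_j>\mu_k}\bar p_k + \alpha_j\bar p_j = C_j(\alpha_j)-? $ — more precisely, by \cref{eq:alpha point} together with $\sum_{k\in N_1}\eta_k\bar p_k + \alpha_j\bar p_j \le S_j^{\mathrm{S^p}}+\alpha_j\bar p_j$ bounding the processing before $C_j(\alpha_j)$ — one gets
\[
\condexp{\bm C_j^{\mathrm{RSOS}(f)}}{\bm\alpha_j=\alpha_j}\le C_j(\alpha_j)+(c-1)\Bigl(C_j(\alpha_j)-\alpha_j\bar p_j\Bigr)+\Bigl(1-g(\Delta)\bigl(1-{\textstyle\int_0^1}\alpha f\dif\alpha\bigr)\Bigr)\!\!\sum_{\substack{k\in N_2\\ \alpha_j>\mu_k}}\!\!\bar p_k,
\]
where I have absorbed the $k\in N_2$ analysis; the $N_1$-sum contributes the $(c-1)$ factor while the residual $\alpha_j\bar p_j$ from $C_j(\alpha_j)$ must be handled carefully because hypothesis~1 only covers the ``$\le(c-1)x$'' part with $x=\eta_k$, not the self-term of $j$.

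The main step is then the unconditioning. Writing $\E[\bm C_j^{\mathrm{RSOS}(f)}]=\int_0^1\condexp{\bm C_j^{\mathrm{RSOS}(f)}}{\bm\alpha_j=\alpha}f(\alpha)\dif\alpha$ and using $\int_0^1 C_j(\alpha)f(\alpha)\dif\alpha$ in place of the uniform $\int_0^1 C_j(\alpha)\dif\alpha=M_j^{\mathrm{S^p}}$ of \cref{prop:RSOS}, I expect the first two groups of terms to produce $c\cdot\int_0^1 C_j(\alpha)f(\alpha)\dif\alpha$ minus lower-order processing-time corrections, and the $N_2$-term to be bounded using hypothesis~2: for each $k\in N_2$ the probability (over $\bm\alpha_j$) that $\alpha_j>\mu_k$ is $\int_{\mu_k}^1 f(\alpha)\dif\alpha=\int_{1-(1-\mu_k)}^1 f(\alpha)\dif\alpha$, so hypothesis~2 with $x=1-\mu_k$ gives $(2-g(\Delta)(1-\int_0^1\alpha f\dif\alpha))\Pr[\alpha_j>\mu_k]\le c(1-\mu_k)$, and $(1-\mu_k)\bar p_j$ is exactly the length of the part of $j$'s deterministic counterpart processed in $(S_k^{\mathrm{S^p}},\infty)$, hence it telescopes against $\bar p_j$ and the $C_j$-integral. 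The delicate bookkeeping is reconciling the $+\bar p_j/2$ on the right-hand side with the half-processing-time terms: since $\bm M_j^{\mathrm{RSOS}(f)}=\bm C_j^{\mathrm{RSOS}(f)}-\bm p_j/2$, it may be cleaner to first prove the bound for $\sum w_j\E[\bm M_j^{\mathrm{RSOS}(f)}]$ against $\sum w_j M_j^{\mathrm{S^p}}$ and then add $\sum w_j\bar p_j/2$ to both sides — but the two conditions as stated already build in the ``$+\bar p_j/2$'' slack (note the $2-g(\Delta)(\cdots)$ rather than $1-\cdots$ in condition~2), so I would instead aim directly at the completion-time inequality, treating $j$'s own half-processing term as part of the ``$2$'' in condition~2 applied with the full weight of $j$.

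The hard part, I expect, is the correct alignment of the two conditions with the two types of jobs and the self-term: one must verify that the coefficient arising from the $N_1$-jobs is precisely what condition~1 controls (with $x$ ranging over all possible values of $\eta_k\in(0,1]$, whence the ``for all $x\in(0,1]$'' quantifier), while the coefficient from the $N_2$-jobs plus the treatment of $\alpha_j\bar p_j$ in $C_j(\alpha_j)$ is precisely what condition~2 controls (with $x=1-\mu_k$, again ranging over $(0,1]$). Once the per-job inequality
\[
\E\bigl[\bm C_j^{\mathrm{RSOS}(f)}\bigr]\le c\cdot\Bigl(M_j^{\mathrm{S^p}}+\tfrac{\bar p_j}{2}\Bigr)
\]
is established, multiplying by $w_j\ge0$, summing over $j$, and invoking \cref{lem:lower bound} (which gives $\sum w_j(M_j^{\mathrm{S^p}}+\bar p_j/2)\le\E[\sum w_j\bm C_j^{\mathrm{OPT}}]$) yields \cref{ineq:RSOS(f)} immediately. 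I would also double-check the edge behaviour as $x\to0$ and $x\to1$ in both conditions, since those are exactly the regimes (jobs barely started / jobs almost finished before $S_j^{\mathrm{S^p}}$) where the linear bounds are tight and where a constant off by a factor would break the telescoping.
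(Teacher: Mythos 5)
Your plan follows essentially the same route as the paper's proof: a per-job bound from \cref{lem:parametric} combined with \cref{eq:alpha point}, the $N_1/N_2$ split with condition~1 applied at $x=\eta_k$ and condition~2 at $x=1-\mu_k$ (and at $x=1$ for job~$j$'s own term), the inequality $\sum_{k\in N_1}\eta_k\bar p_k\le S_j^{\mathrm{S^p}}$, and finally \cref{lem:lower bound}. Two bookkeeping slips in your sketch would need fixing in a write-up: your displayed intermediate inequality attaches the factor $c-1$ to $C_j(\alpha_j)-\alpha_j\bar p_j=S_j^{\mathrm{S^p}}+\sum_{k\in N_2,\,\alpha_j>\mu_k}\bar p_k$ rather than to $S_j^{\mathrm{S^p}}$ alone (which would leave an uncontrolled extra $(c-1)\sum_{k\in N_2,\,\alpha_j>\mu_k}\bar p_k$) and drops the self-term $\bar p_j$ contributed by $k=j$ in the sum of \cref{lem:parametric}; and the final reassembly must use the \emph{uniform} average $\int_0^1 C_j(\alpha)\dif\alpha=M_j^{\mathrm{S^p}}=S_j^{\mathrm{S^p}}+\sum_{k\in N_2}(1-\mu_k)\bar p_k+\bar p_j/2$, not the $f$-weighted average $\int_0^1 C_j(\alpha)f(\alpha)\dif\alpha$ you mention, which is not $M_j^{\mathrm{S^p}}$ in general. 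With those corrections your argument coincides with the paper's.
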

\begin{proof}
 This proof is inspired by the proof of \cite[Theorem~3.9]{GQS+02}. Let $j \in [n]$, and let $N_1$ and $N_2$ as well as $\eta_k$, $k \in [n]$, and $\mu_k$, $k \in N_2$, be defined as above with respect to this job~$j$. The total processing that jobs receive on $\mathrm{S^p}$ before time~$S_j^{\mathrm{S^p}}$ cannot exceed $S_j^{\mathrm{S^p}}$, i.e.,
 \begin{equation}\sum_{k \in N_1} \eta_k \cdot \bar p_k \le S_j^{\mathrm{S^p}}.\label{ineq:total processing time before S_j}\end{equation}
 Substituting \cref{eq:alpha point} into \cref{lem:parametric} results in
 \begin{equation}\begin{split}
  \E[\bm C_j^{\mathrm{SOS}(A)}] \le S_j^{\mathrm{S^p}} &+ \sum_{\substack{k \in N_1 \\ \alpha_k \le \eta_k}} \bigl(1-g(\Delta) (\eta_k - \alpha_k)\bigr) \bar p_k \\ &+ \sum_{\substack{k \in N_2 \\ \alpha_j > \mu_k}} \bigl(2-g(\Delta) (1 - \alpha_k)\bigr) \bar p_k + (1+\alpha_j) \bar p_j. \label{ineq:upper bound expected completion time SOS(A) Delta}
 \end{split}\end{equation}
 By the law of total expectation and the independence of the $\bm \alpha_j$, $j \in [n]$, and the $\bm p_j$, $j \in [n]$,
 \allowdisplaybreaks
 \begin{align*}
  &\E[\bm C_j^{\mathrm{RSOS}(f)}] = \int_0^1 \cdots \int_0^1 \E[\bm C_j^{\mathrm{SOS}(\alpha_1,\dotsc,\alpha_n)}] \cdot f(\alpha_1) \dotsm f(\alpha_n)\dif \alpha_1 \cdots \dif \alpha_n \\
  \stackrel{\mathclap{\eqref{ineq:upper bound expected completion time SOS(A) Delta}}}\le{} &S_j^{\mathrm{S^p}} + \sum_{k \in N_1} \int_0^{\eta_k} \bigl(1-g(\Delta) \cdot (\eta_k - \alpha_k)\bigr) \cdot f(\alpha_k)\dif \alpha_k \cdot \bar p_k \\ &\phantom{S_j^{\mathrm{S^p}}}{} + \sum_{k \in N_2} \Biggl(2-g(\Delta) \cdot \biggl(1-\int_0^1 \alpha_k \cdot f(\alpha_k)\dif \alpha_k\biggr)\Biggr) \cdot \int_{\mu_k}^1 f(\alpha_j)\dif \alpha_j \cdot \bar p_k \\ &\phantom{S_j^{\mathrm{S^p}}}{} + \biggl(1+\int_0^1 \alpha_j f(\alpha_j)\dif \alpha_j\biggr) \cdot \bar p_j \\
  \le{} &S_j^{\mathrm{S^p}} + \sum_{k \in N_1} (c-1) \eta_k \bar p_k + \sum_{k \in N_2} c (1-\mu_k) \bar p_k \\
  &\phantom{S_j^{\mathrm{S^p}}}{} + \Biggl(2-g(\Delta) \biggl(1-\int_0^1 \alpha_j f(\alpha_j)\dif \alpha_j\biggr)\Biggr) \cdot \bar p_j \\
  \stackrel{\mathclap{\eqref{ineq:total processing time before S_j}}}\le{} &c \cdot S_j^{\mathrm{S^p}} + c \cdot \sum_{k \in N_2} (1-\mu_k) \bar p_k + c \bar p_j \\
  ={} &c \cdot \biggl(\int_0^1 S_j^{\mathrm{S^p}} + \sum_{\substack{k \in N_2\\ \alpha > \mu_k}} \bar p_k + \alpha \bar p_j\dif \alpha + \frac{\bar p_j}{2}\biggr)\\
  \stackrel{\mathclap{\eqref{eq:alpha point}}}={} &c \cdot \biggl(\int_0^1 C_j(\alpha)\dif \alpha + \frac{\bar p_j}{2}\biggr) \stackrel{\eqref{eq:mean busy time alpha point}}= c \cdot \Bigl(M_j^{\mathrm{S^p}} + \frac{\bar p_j}{2}\Bigr).
 \end{align*}
 Consequently, by linearity of expectation and \cref{lem:lower bound},
 \[\E\biggl[\,\sum_{j=1}^n w_j \cdot \bm C_j^{\mathrm{RSOS}(f)}\biggr] \le c \cdot \sum_{j=1}^n w_j \cdot \Bigl(M_j^{\mathrm{S^p}} + \frac{\bar p_j}{2}\Bigr) \le c \cdot \biggl[\,\sum_{j=1}^n w_j \cdot \bm C_j^{\mathrm{OPT}}\biggr]. \qedhere\]
\end{proof}

For every $\Delta \ge 0$, when $f = \mathds 1_{(0, 1]}$, the two conditions are satisfied for $c=2$. Hence, \cref{thm:RSOS Delta} generalizes \cref{prop:RSOS}. For $\Delta \to \infty$, every other distribution gives a worse bound, so that for unbounded $\Delta$ the uniform distribution is the best choice and no better performance guarantee follows from the \lcnamecref{thm:RSOS Delta}. But also when $\Delta$ is taken into account, no better bound for uniformly distributed $\bm \alpha_j$ is implied. So in contrast to the greedy assignment rule, where we get an a posteriori bound in terms of the maximum squared coefficient of variation, here we can only get an improvement by explicitly adapting the policy to the known bound~$\Delta$. In the following \lcnamecref{cor:RSOS density Delta}, generalizing \cite[Theorem~3.9]{GQS+02}, we provide a good probability density function for each $\Delta$.

\begin{restatable}{corollary}{Gooddensity} \label{cor:RSOS density Delta}
 Let $D \coloneqq g(\Delta)$, let $\gamma$ be a solution to the equation
 \begin{multline*}
  \mathrm e^{D\gamma} \Bigl(\mathrm e^{D \gamma} \bigl(1+D(1-\gamma)\bigr) \bigl(D (\gamma -D(1-\gamma) + \ln(1+D (1-\gamma)))-1\bigr) + D \gamma (D-2) + 2\Bigr) \\ = 1-D \end{multline*}
 satisfying $0<\gamma<1$. Let $\theta \coloneqq \gamma + \frac{1}{D} \cdot \ln\bigl(1+D \cdot (1-\gamma)\bigr)$, and let $c \coloneqq 1+\frac{D}{\mathrm e^{D \theta}-1}$. Let further
 \[f_\Delta(\alpha) \coloneqq \begin{cases*} (c-1) \cdot \mathrm e^{D \cdot \alpha} &if $\alpha \le \theta$;\\ 0 &otherwise. \end{cases*}\]
 Then the $\mathrm{RSOS}(f_\Delta)$ policy is $c$-competitive for instances with squared coefficients of variation bounded by $\Delta$.
\end{restatable}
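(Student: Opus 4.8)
The plan is to deduce the corollary directly from \cref{thm:RSOS Delta}: one only has to check that $f_\Delta$ is a probability density on $(0,1]$ and that the pair $(f_\Delta, c)$ satisfies conditions~(i) and~(ii) of that theorem (the argument parallels the proof of \cite[Theorem~3.9]{GQS+02}, with the constant $1$ there replaced throughout by $D = g(\Delta)$). Since $D = g(\Delta) \in (0,1]$ for every $\Delta \ge 0$ and $\gamma \in (0,1)$, we have $\theta > 0$, hence $e^{D\theta} > 1$ and $c-1 = D/(e^{D\theta}-1) > 0$, so $f_\Delta \ge 0$. From $\ln(1+z) \le z$ we get $\theta \le 1$, so $\operatorname{supp} f_\Delta = (0,\theta] \subseteq (0,1]$, and $\int_0^\theta (c-1)e^{D\alpha}\dif\alpha = (c-1)(e^{D\theta}-1)/D = 1$ by the choice of $c$; thus $f_\Delta$ is a valid density. (The exponential shape of $f_\Delta$, the cutoff $\theta$, and the value of $c$ are exactly what is forced by requiring conditions~(i) and~(ii) to be tight, as the next two paragraphs make explicit.)

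For condition~(i) the key is the identity $\int_0^x\bigl(1-D(x-\alpha)\bigr)e^{D\alpha}\dif\alpha = x$, valid for all $x \ge 0$ by one integration by parts. For $x \le \theta$ this gives that the left-hand side of~(i) equals $(c-1)x$, so~(i) holds with equality. For $x \in (\theta,1]$, where $f_\Delta$ vanishes beyond $\theta$, write $1-D(x-\alpha) = \bigl(1-D(\theta-\alpha)\bigr) - D(x-\theta)$; the integral then becomes its value at $x=\theta$ minus a nonnegative quantity, hence is at most $(c-1)\theta \le (c-1)x$.

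For condition~(ii), if $x \le 1-\theta$ then $\int_{1-x}^1 f_\Delta = 0$ and nothing is to prove, so assume $x \in (1-\theta,1]$. Setting $E := e^{D\theta}$, short computations give $\int_{1-x}^1 f_\Delta(\alpha)\dif\alpha = (E - e^{D(1-x)})/(E-1)$ and $\int_0^1\alpha f_\Delta(\alpha)\dif\alpha = \theta E/(E-1) - 1/D$, so that $2 - D\bigl(1 - \int_0^1\alpha f_\Delta(\alpha)\dif\alpha\bigr) = A$ and $c(E-1) = B$, where $A := 1-D+D\theta E/(E-1)$ and $B := E-1+D$. Condition~(ii) is thus equivalent to $h(x) \ge 0$ for $h(x) := Bx - A\bigl(E - e^{D(1-x)}\bigr)$. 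Because $D \le 1$ forces $A > 0$, we get $h''(x) = AD^2 e^{D(1-x)} > 0$, so $h$ is convex. A direct computation (substituting $\theta = \gamma + \tfrac1D\ln(1+D(1-\gamma))$ and using $E = e^{D\gamma}(1+D(1-\gamma))$) shows that the transcendental equation defining $\gamma$ is equivalent to $B = AD e^{D\gamma}$; this says simultaneously that $h'(1-\gamma) = 0$---so $1-\gamma$ is the unique stationary point of the convex $h$---and that $h(1-\gamma) = (1-\gamma)(B - AD e^{D\gamma}) = 0$. Hence $\min_x h(x) = h(1-\gamma) = 0$, so $h \ge 0$ everywhere, in particular on $(1-\theta,1]$, which is condition~(ii). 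Existence of a root $\gamma \in (0,1)$ of the defining equation follows from the intermediate value theorem, its left-hand side minus right-hand side being negative at $\gamma = 0$ and positive at $\gamma = 1$. With both conditions verified, \cref{thm:RSOS Delta} yields the claimed $c$-competitiveness.

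I expect the main obstacle to be the algebra in the last step: propagating the closed forms for the two integrals in condition~(ii) and checking that ``$B = AD e^{D\gamma}$'' really is the displayed transcendental equation after substituting the formula for $\theta$ and simplifying (the cleanest route uses $q := 1+D(1-\gamma) = 2-(1-D+D\gamma)$, which collapses several terms to the point where the remaining identity reduces to $q = 1+D(1-\gamma)$). Once this identity is pinned down, convexity delivers condition~(ii) at once, and the density check together with condition~(i) is routine.
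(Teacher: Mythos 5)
Your proposal is correct and follows essentially the same route as the paper's proof: the density check and condition~(i) are verified by identical computations, and your convexity/double-root argument for condition~(ii) is a repackaging of the paper's use of the tangent-line inequality $\mathrm e^{D(1-\gamma-x)} \ge 1 + D\cdot(1-\gamma-x)$ at $x = 1-\gamma$. The identity $B = AD\mathrm e^{D\gamma}$ that you leave as ``a direct computation'' is indeed equivalent to the displayed equation for $\gamma$ --- it is precisely what the paper's long closing calculation (showing the relevant numerator equals $D$) establishes --- so the only thing missing from your write-up is that explicit algebra.
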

In the proof it is shown that the defined function is indeed a probability density function on $(0, 1]$ and that it satisfies the two conditions from \cref{thm:RSOS Delta}. Since it is very technical, it is moved to \cref{apx:technical}. Ignoring issues about the representation and transformation of real random variables to the prescribed distribution, the resulting policy can be executed efficiently. We can now also conclude the improved bound for unrelated machines illustrated in \cref{fig:performance unrelated}.

\begin{corollary}
 Let $c$ be as in \cref{cor:RSOS density Delta}. There is a randomized online policy for unrelated machines that is $(c (4 + 2\Delta))$-competitive within the class of all stochastic scheduling policies and $4c$ competitive within the class of fixed-assignment policies.
\end{corollary}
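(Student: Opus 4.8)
The plan is to re-run the proof of \cref{thm:unrelated} almost verbatim, replacing on every machine the RSOS rule by the refined single-machine policy $\mathrm{RSOS}(f_\Delta)$ of \cref{cor:RSOS density Delta}; call the resulting policy $\mathrm{GA\text-RSOS}(f_\Delta)$. Since the greedy assignment rule inspects only the virtual preemptive WSPT schedules $\mathrm S_i^{\mathrm p}$ and no realized processing times, this is again a fixed-assignment online policy with immediate dispatch, hence a legitimate randomized online policy for unrelated machines.

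First I would apply \cref{cor:RSOS density Delta} (equivalently \cref{thm:RSOS Delta}) separately to the set of jobs that the greedy rule assigns to each machine~$i \in [m]$: their squared coefficients of variation are bounded by~$\Delta$ because those of the whole instance are, and $\mathrm{GA\text-S^p}$ restricted to machine~$i$ is exactly the virtual schedule on which $\mathrm{RSOS}(f_\Delta)$ operates there. Summing the resulting per-machine inequalities over~$i$ yields the exact analogue of \ineq{ineq:upper bound GA-RSOS},
\[\E\biggl[\,\sum_{j=1}^n w_j \cdot \bm C_j^{\mathrm{GA\text-RSOS}(f_\Delta)}\biggr] \le c \cdot \sum_{j=1}^n w_j \cdot \Bigl(M_j^{\mathrm{GA\text-S^p}} + \frac{\bar p_{i^{\mathrm{GA}}(j),j}}{2}\Bigr),\]
i.e.\ the constant~$2$ of \cref{prop:RSOS} is simply replaced by~$c$. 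Next, \cref{lem:dual fitting} --- whose statement and proof concern only the greedy assignment and the surrogate cost, and are entirely independent of the single-machine rule --- bounds the right-hand side by $4c \cdot \mathrm{OPT}(\mathrm{LP_R})$. Chaining this with \ineq{ineq:bound LP_R} gives, for every scheduling policy~$\Pi$ and every fixed-assignment policy~$\mathrm{FA}$, the estimate $4c \cdot \mathrm{OPT}(\mathrm{LP_R}) \le \min\bigl\{(4c+2c\Delta) \cdot \E[\sum_j w_j \bm C_j^{\Pi}],\ 4c \cdot \E[\sum_j w_j \bm C_j^{\mathrm{FA}}]\bigr\}$, which is exactly the asserted $\bigl(c(4+2\Delta)\bigr)$- and $4c$-competitiveness.

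I do not expect a real obstacle: the whole argument is the bookkeeping already carried out for \cref{thm:unrelated} with one constant changed. The only two points that deserve a sentence are that \cref{thm:RSOS Delta}/\cref{cor:RSOS density Delta} may legitimately be invoked machine by machine (the coefficient-of-variation bound is instance-wide, and the virtual preemptive schedules decouple across machines) and that \cref{lem:dual fitting} carries over unchanged (it never refers to the single-machine scheduling rule).
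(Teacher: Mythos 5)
Your proposal is correct and is exactly the argument the paper intends (the paper leaves this corollary without an explicit proof, treating it as immediate): apply \cref{thm:RSOS Delta}/\cref{cor:RSOS density Delta} machine by machine to replace the constant $2$ in \ineq{ineq:upper bound GA-RSOS} by $c$, then chain with \cref{lem:dual fitting} and \ineq{ineq:bound LP_R}. Your two remarks on why the per-machine invocation and the unchanged dual-fitting lemma are legitimate are precisely the right points to flag.
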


For deterministic processing times this yields a $6.742$-competitive randomized online algorithm for unrelated machines. To our knowledge no better online algorithm with immediate dispatch has been established for this problem.

\subsection{Refined deterministic online policies} \label{subsec:refined DSOS}

We consider the $\mathrm{SOS}(\alpha)$ policy for some $\alpha \in (0, 1]$. Schulz' DSOS policy is simply the policy~$\mathrm{SOS}(\phi-1)$. We begin by reviewing the concept of canonical sets introduced by \citet{Goe97}. Let $k \in [n]$, and let $\mathrm{S^p}|_{[k]}$ be the restriction of $\mathrm{S^p}$ to the first $k$ jobs, which is exactly the preemptive WSPT schedule of the deterministic counterparts of the first $k$ jobs. The family $\mathcal C(k)$ of all maximal job sets whose deterministic counterparts are processed contiguously without idle time in $\mathrm{S^p}|_{[k]}$ is the \emph{canonical decomposition} of $[k]$. A subset $J \subseteq [n]$ is called \emph{canonical} if $J \in \mathcal C(k)$ for some $k \in [n]$. Notice that for every canonical set~$J$ it holds that 
\begin{equation}
 J = \biggl\{j \in [n] \biggm| r_{\min}(J) \le C_j(\alpha) \le r_{\min}(J)+\sum_{j \in J} \bar p_j\biggr\}, \label{eq:canonical sets consecutive}
\end{equation}
so that the jobs in $J$ are scheduled consecutively by the $\mathrm{SOS}(\alpha)$ policy. The following \lcnamecref{thm:SOS Delta} generalizes \cite[Theorem~3.3\,(i)]{GQS+02}.

\begin{theorem} \label{thm:SOS Delta}
 Let $\alpha \in (0, 1]$ be arbitrary, and let $c \coloneqq 1+\max\bigl\{\frac 1 \alpha,\ 1+\alpha - g(\Delta) \cdot (1-\alpha)\bigr\}$. Then the $\mathrm{SOS}(\alpha)$ policy satisfies
 \[\E\biggl[\,\sum_{j=1}^n w_j \cdot \bm M_j^{\mathrm{SOS}(\alpha)}\biggr] \le c \cdot \sum_{j=1}^n w_j \cdot M_j^{\mathrm{S^p}}\le c \cdot \E\biggl[\,\sum_{j=1}^n w_j \cdot \bm M_j^{\mathrm{OPT}}\biggr].\]
\end{theorem}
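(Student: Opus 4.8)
The plan is to adapt the proof of \cite[Theorem~3.3\,(i)]{GQS+02} to stochastic processing times; the only change is that the deterministic delay estimate $(1-\eta_k+\alpha)\bar p_k$ is replaced by the bound $\bigl(1-g(\Delta)(\eta_k-\alpha)\bigr)\bar p_k$ coming from \cref{lem:parametric}, and for $\Delta=0$ (where $g(0)=1$) the two coincide. Since the second inequality of the statement is just \cref{lem:lower bound} on a single machine, it suffices to prove $\sum_j w_j\,\E[\bm M_j^{\mathrm{SOS}(\alpha)}]\le c\cdot\sum_j w_j M_j^{\mathrm{S^p}}$. Fix $j\in[n]$ and apply \cref{lem:parametric} with the constant sequence $A=(\alpha,\alpha,\dots)$. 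Since job $j$'s deterministic counterpart receives exactly an $\alpha$-fraction before $C_j(\alpha)$ we have $\eta_j(j)=\alpha$, so separating the $k=j$ term and using $\bm M_j=\bm C_j-\bm p_j/2$ gives
\[\E[\bm M_j^{\mathrm{SOS}(\alpha)}]\le C_j(\alpha)+\frac{\bar p_j}{2}+\sum_{\substack{k\neq j\\ \eta_k\ge\alpha}}\bigl(1-g(\Delta)(\eta_k-\alpha)\bigr)\bar p_k,\]
with $\eta_k=\eta_k(j)$. Two elementary facts drive everything: since $g(\Delta)\ge0$ and $\eta_k\ge\alpha$, one has $1-g(\Delta)(\eta_k-\alpha)\le\eta_k/\alpha$ termwise; and since $\mathrm{S^p}$ processes at most $C_j(\alpha)$ units of work by time $C_j(\alpha)$, $\sum_{k:\eta_k\ge\alpha}\eta_k\bar p_k\le C_j(\alpha)$.

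Combining these directly yields the ``$1/\alpha$-branch'' estimate $\E[\bm M_j^{\mathrm{SOS}(\alpha)}]\le(1+\tfrac1\alpha)C_j(\alpha)-\tfrac{\bar p_j}{2}$. For the ``$(1+\alpha-g(\Delta)(1-\alpha))$-branch'' I would instead split the jobs $k\neq j$ with $\eta_k\ge\alpha$ using the partition $[n]\setminus\{j\}=N_1(j)\cup N_2(j)$ and the quantities $\mu_k$ from \cref{subsec:refined RSOS}: the jobs $k\in N_1(j)$ with $\eta_k\ge\alpha$ contribute at most $\tfrac1\alpha\sum_{k\in N_1}\eta_k\bar p_k\le S_j^{\mathrm{S^p}}/\alpha$ by \ineq{ineq:total processing time before S_j}, while the jobs $k\in N_2(j)$ with $\mu_k<\alpha$ (which then have $\eta_k=1$) contribute exactly $\bigl(1-g(\Delta)(1-\alpha)\bigr)\bigl(C_j(\alpha)-S_j^{\mathrm{S^p}}-\alpha\bar p_j\bigr)$ by \eqref{eq:alpha point}. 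Writing $h\coloneqq 1-g(\Delta)(1-\alpha)\in[0,1]$ this gives
\[\E[\bm M_j^{\mathrm{SOS}(\alpha)}]\le(1+h)\,C_j(\alpha)+\Bigl(\tfrac1\alpha-h\Bigr)S_j^{\mathrm{S^p}}+\Bigl(\tfrac12-\alpha h\Bigr)\bar p_j.\]
When all release dates vanish, $C_j(\alpha)=S_j^{\mathrm{S^p}}+\alpha\bar p_j$, the $h$-terms cancel, and together with $\bar p_j\le 2M_j^{\mathrm{S^p}}$ this already proves the theorem jobwise with $c=1+\max\{\tfrac1\alpha,2\alpha\}$.

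With general release dates the two per-job bounds cannot be merged jobwise, since neither $C_j(\alpha)$ nor $S_j^{\mathrm{S^p}}$ is individually controlled by $M_j^{\mathrm{S^p}}$ (for instance $C_j(\alpha)$ can far exceed $M_j^{\mathrm{S^p}}$ when $\alpha$ is close to $1$); handling this is the main obstacle. I would resolve it as in \cite{GQS+02} via the canonical sets introduced above: decompose $[n]$ along the laminar family of canonical sets, use that $M^{\mathrm{S^p}}$ makes the $(\mathrm{LP})$-constraint tight on every canonical set $J$ (i.e.\ $\sum_{k\in J}\bar p_k M_k^{\mathrm{S^p}}=\sum_{k\in J}\bar p_k\bigl(r_{\min}(J)+\tfrac12\sum_{l\in J}\bar p_l\bigr)$, because the jobs of $J$ are processed contiguously from $r_{\min}(J)$ in $\mathrm{S^p}$), and use \eqref{eq:canonical sets consecutive} to know that the jobs of $J$ are also scheduled consecutively by $\mathrm{SOS}(\alpha)$. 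Summing the per-job estimates over each canonical block, bounding $S_j^{\mathrm{S^p}}$ and the within-block part of $C_j(\alpha)$ against $r_{\min}(J)$ and the block length, telescoping over the nesting of canonical sets, and absorbing the residual $\bar p_j$-terms via $M_j^{\mathrm{S^p}}\ge\bar p_j/2$ as above, turns the $C_j(\alpha)$- and $S_j^{\mathrm{S^p}}$-contributions into a multiple of $\sum_k w_k M_k^{\mathrm{S^p}}$ with factor exactly $1+\max\{\tfrac1\alpha,\,1+\alpha-g(\Delta)(1-\alpha)\}$---the first branch from the $N_1$/idle-time part, the second from the $N_2$-part together with the release-date bookkeeping. \Cref{lem:lower bound} then supplies the second inequality, completing the proof. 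The canonical-set telescoping is the only genuinely technical part; all stochastic content sits in the single invocation of \cref{lem:parametric}.
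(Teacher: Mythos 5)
Your setup is on the right track and matches the paper's route in outline: \cref{lem:parametric} with the constant sequence, the observation that the resulting per-job bounds cannot be closed against $M_j^{\mathrm{S^p}}$ individually, the reduction of the second inequality to \cref{lem:lower bound}, and the plan to finish via canonical sets and the tight $(\mathrm{LP})$-constraints. Both of your per-job estimates are valid. The gap is that the estimate you propose to feed into the canonical-set summation is too weak to yield the claimed constant, and the summation itself---which is precisely where $c$ is established---is only asserted. Concretely, replacing the delay contribution of the jobs in $N_1$ by $\frac{1}{\alpha}\sum_{k\in N_1}\eta_k\bar p_k\le S_j^{\mathrm{S^p}}/\alpha$ silently drops the restriction $\eta_k\ge\alpha$. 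A fragment of a \emph{lower-priority} job $k$ with $C_k(\alpha)>C_j(\alpha)$ that is processed before $S_j^{\mathrm{S^p}}$ contributes nothing to $j$'s delay under $\mathrm{SOS}(\alpha)$ and must only be charged once (through $C_j(\alpha)$, with at most $\alpha\bar p_k$); your bound charges it with coefficient $1+\frac1\alpha$, once through $C_j(\alpha)$ and once more through $S_j^{\mathrm{S^p}}/\alpha$. This breaks the assembly: take $\Delta=0$, $\alpha=0.8$ (so $h=0.8$ and $c=2.6$), and a canonical set consisting of one low-priority job of length $L$ released at $0$ together with high-priority jobs of total length $P'$ released just before time $\alpha L$. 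Summing your per-job estimates with weights $\bar p_j$ produces a cross term approaching $(1+h+1+\alpha)\,LP'=3.6\,LP'$, while $c\sum_{j\in J}\bar p_jM_j^{\mathrm{S^p}}$ supplies only $2.6\,LP'$ and the $L^2$-term has no slack ($\alpha+\tfrac12=\tfrac c2$ here). So the inequality $\sum_{j\in J}\bar p_j\E[\bm M_j^{\mathrm{SOS}(\alpha)}]\le c\sum_{j\in J}\bar p_jM_j^{\mathrm{S^p}}$, which any assembly via tight canonical constraints must establish, cannot be derived from your per-job bounds.

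The paper avoids this by splitting the delay sum for $j_i\in J$ according to the $\alpha$-point order rather than by $N_1/N_2$: jobs with $C_k(\alpha)<r_{\min}(J)$ contribute at most $r_{\min}(J)/\alpha$ in total; an in-block predecessor $j_{i'}$ contributes $\bigl(\eta_{j_{i'}}+1-g(\Delta)(\eta_{j_{i'}}-\alpha)\bigr)\bar p_{j_{i'}}\le\beta\bar p_{j_{i'}}$ with $\beta=2-g(\Delta)(1-\alpha)$, combining its appearance in the delay and in $C_{j_i}(\alpha)$; and in-block successors enter only through $C_{j_i}(\alpha)$ with coefficient $\alpha$. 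Three further ingredients, absent from your sketch, are then needed: the reduction from the weights $w_j$ to $\bar p_j$-weighted sums over canonical sets via the nonnegative multipliers $\sigma_k=w_k/\bar p_k-w_{k+1}/\bar p_{k+1}$; the identity $\beta\sum_{i'<i}\bar p_{j_i}\bar p_{j_{i'}}+\tfrac12\sum_i\bar p_{j_i}^2+\alpha\sum_{i'\ge i}\bar p_{j_i}\bar p_{j_{i'}}=\tfrac{\alpha+\beta}{2}\bigl(\sum_i\bar p_{j_i}\bigr)^2+\tfrac{1+\alpha-\beta}{2}\sum_i\bar p_{j_i}^2$; and the inequality $1+\alpha\le\beta$, which lets one discard the diagonal term and arrive at $\max\{1+\tfrac1\alpha,\alpha+\beta\}=c$. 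To repair your argument you would have to abandon the $S_j^{\mathrm{S^p}}/\alpha$ bound inside canonical blocks and carry out this computation.
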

\begin{proof}
 For $j \in [n-1]$ let $\sigma_j \coloneqq w_j/\bar p_j-w_{j+1}/\bar p_{j+1} \ge 0$, and let $\sigma_n \coloneqq w_n/\bar p_n$.
 We apply the following transformation to the sums on both sides of the claimed inequality.
 \[\begin{array}{>{\displaystyle}rc>{\displaystyle}cc>{\displaystyle}l}
  \E\biggl[\,\sum_{j=1}^n w_j \bm M_j^{\mathrm{SOS}(\alpha)}\biggr] &\hspace{-1.6ex}=&\hspace{-1.6ex} \sum_{j=1}^n \sum_{k=j}^n \sigma_k \bar p_j \E[\bm M_j^{\mathrm{SOS}(\alpha)}] &\hspace{-1.6ex}=&\hspace{-1.6ex} \sum_{k=1}^n \sigma_k \cdot \sum_{j=1}^k \bar p_j \E[\bm M_j^{\mathrm{SOS}(\alpha)}], \\
  \sum_{j=1}^n w_j M_j^{\mathrm{S^p}} &\hspace{-1.6ex}=&\hspace{-1.6ex} \sum_{j=1}^n \sum_{k=j}^n \sigma_k \bar p_j M_j^{\mathrm{S^p}} &\hspace{-1.6ex}=&\hspace{-1.6ex} \sum_{k=1}^n \sigma_k \cdot \sum_{j=1}^k \bar p_j M_j^{\mathrm{S^p}}.
 \end{array}\]
 As $\sigma_k \ge 0$, it suffices to prove that $\sum_{j=1}^k \bar p_j \E[\bm M_j^{\mathrm{SOS}(\alpha)}] \le c \cdot \sum_{j=1}^k \bar p_j M_j^{\mathrm{S^p}}$ for all $k \in [n]$. Since we can split both sums up according to the canonical decomposition~$\mathcal C(k)$ of the set~$[k]$, it is enough to verify the inequality
 \begin{equation} 
  \sum_{j\in J} \bar p_j \E[\bm M_j^{\mathrm{SOS}(\alpha)}] \le c \cdot \sum_{j \in J} \bar p_j M_j^{\mathrm{S^p}} \label{ineq:claim canonical sets}
 \end{equation}
 for all canonical sets~$J \subseteq [n]$.
 
 So let $J \subseteq [n]$ be an arbitrary canonical set, $J = \{j_1,\dotsc,j_\ell\}$ with $C_{j_1}(\alpha) < \cdots < C_{j_\ell}(\alpha)$, so that the jobs are scheduled in this order by the $\mathrm{SOS}(\alpha)$ policy. Let $i \in [\ell]$, and for $k \in [n]$ let $\eta_k = \eta_k(j_i) = 1/\bar p_k \cdot \int_0^{C_{j_i}(\alpha)} I_k^{\mathrm{S^p}}\dif t$. Using \cref{lem:parametric}, we can bound the expected mean busy time of $j_i$ by
 \begin{align*}
  \E[\bm C_{j_i}^{\mathrm{SOS}(\alpha)}] &\le C_{j_i}(\alpha) + \sum_{k : C_k(\alpha) \le C_{j_i}(\alpha)} \bigl(1-g(\Delta) \cdot (\eta_k - \alpha)\bigr) \cdot \bar p_k \\
  &\stackrel{\mathclap{\eqref{eq:canonical sets consecutive}}}\le C_{j_i}(\alpha) + \sum_{k : C_k(\alpha) < r_{\min}(J)} \bar p_k + \sum_{i'=1}^{i-1} \bigl(1-g(\Delta) \cdot (\eta_{j_{i'}} - \alpha)\bigr) \cdot \bar p_{j_{i'}} + \bar p_{j_i}\\
  &\le C_{j_i}(\alpha) + \frac 1 \alpha \cdot r_{\min}(J) + \sum_{i'=1}^{i-1} \bigl(1-g(\Delta) \cdot (\eta_{j_{i'}}-\alpha)\bigr) \cdot \bar p_{j_{i'}} + \bar p_{j_i},
 \end{align*}
 where we used in the last inequality that each job~$k$ with $C_k(\alpha) < r_{\min}(J)$ is processed at least $\alpha \bar p_k$ time units before $r_{\min}(J)$ and the total processing time before $r_{\min}(J)$ cannot exceed $r_{\min}(J)$. We can bound the $\alpha$-point of $j_i$ by
 \[
  C_{j_i}(\alpha) \le r_{\min}(J) + \sum_{i'=1}^{i-1} \eta_{j_{i'}} \cdot \bar p_{j_{i'}} + \alpha \cdot \sum_{i'=i}^\ell \bar p_{j_{i'}}.
 \]
 This implies that
 \begin{align*}
  &\E[\bm C_{j_i}^{\mathrm{SOS}(\alpha)}] \\
  \le{} &\Bigl(1+\frac 1 \alpha\Bigr) r_{\min}(J) + \sum_{i'=i}^{i-1} \bigl(\eta_{j_{i'}} + 1 - g(\Delta) (\eta_{j_{i'}} - \alpha)\bigr) \bar p_{j_{i'}} + \bar p_{j_i} + \alpha \sum_{i'=i}^\ell \bar p_{j_{i'}} \\
  \le{} &\Bigl(1+\frac 1 \alpha\Bigr) r_{\min}(J) + \bigl(2-g(\Delta) (1-\alpha)\bigr) \cdot \sum_{i'=1}^{i-1} \bar p_{j_{i'}} + \bar p_{j_i} + \alpha \sum_{i'=i}^\ell \bar p_{j_{i'}}.
 \end{align*}
 Let $\beta \coloneqq 2 - g(\Delta)(1-\alpha) \ge 2 - (1-\alpha) = 1+\alpha$. By subtracting $\bar p_j/2$, we obtain 
 \[
  \E[\bm M_{j_i}^{\mathrm{SOS}(\alpha)}] \le \Bigl(1 + \frac 1 \alpha \Bigr) r_{\min}(J) + \beta \sum_{i'=1}^{i-1}\bar p_{j_{i'}} + \frac{\bar p_{j_i}}{2} + \alpha \sum_{i'=i}^\ell \bar p_{j_{i'}}.
 \]
 Applying this inequality for every $j_i$, $i=1,\dotsc,\ell$, results in
 \allowdisplaybreaks
\begin{align*}
 &\sum_{j \in J} \bar p_j \E[\bm M_j^{\mathrm{SOS}(\alpha)}] = \sum_{i=1}^\ell \bar p_{j_i} \E[\bm M_{j_i}^{\mathrm{SOS}(\alpha)}] \\
 \le{} &\Bigl(1+\frac 1 \alpha\Bigr) \sum_{i=1}^\ell \bar p_{j_i} r_{\min}(J) + \beta \sum_{i=1}^\ell \sum_{i'=1}^{i-1} \bar p_{j_i} \bar p_{j_{i'}} + \frac 1 2 \sum_{i=1}^\ell \bar p_{j_i}^2 + \alpha \sum_{i=1}^\ell \sum_{i'=i}^\ell \bar p_{j_i} \bar p_{j_{i'}} \\
 ={} &\Bigl(1+\frac 1 \alpha\Bigr) \cdot \sum_{i=1}^\ell \bar p_{j_i} r_{\min}(J) + \frac{\alpha+\beta}{2} \cdot \biggl(\,\sum_{i=1}^\ell \bar p_{j_i}\biggr)^{\!2} + \frac{1+\alpha-\beta}{2} \cdot \sum_{i=1}^\ell \bar p_{j_i}^2 \\
 \le{} &\max\Bigl\{1+\frac 1 \alpha,\ \alpha + \beta\Bigr\} \cdot \sum_{j \in J} \bar p_j \cdot \biggl(r_{\min}(J) + \frac 1 2 \cdot \sum_{j \in J} \bar p_j\biggr) \\
 \le{} &\biggl(1+\max\Bigl\{\frac 1 \alpha,\ \alpha + 1 - g(\Delta) (1-\alpha) \Bigr\}\biggr) \cdot \sum_{j \in J} \bar p_{j} M_j^{\mathrm{S^p}} = c \cdot \sum_{j \in J} \bar p_j M_j^{\mathrm{S^p}},
\end{align*}
where the second inequality is valid since $1+\alpha \le \beta$ and the last inequality holds because $M^{\mathrm{S^p}}$ is a feasible solution to $(\mathrm{LP})$. So we have shown \cref{ineq:claim canonical sets} for all canonical sets, concluding the proof.
\end{proof}

Applying this \cref{thm:SOS Delta} for $\alpha \coloneqq \phi - 1$ yields \cref{prop:DSOS}. We are interested in $\alpha$ for which the \lcnamecref{thm:SOS Delta} yields the best possible performance guarantee for any given $\Delta \ge 0$. The first term in the maximum is monotonically decreasing in $\alpha$, the second term is monotonically increasing, and their graphs intersect. Consequently, the optimal $\alpha$ is attained at the intersection point, i.e., at the unique solution of the quadratic equation~$\bigl(1+g(\Delta)\bigr) \cdot \alpha^2 + \bigl(1-g(\Delta)\bigr) \cdot \alpha = 1$ in the interval $(0, 1]$. By slight abuse of notation let $\alpha \colon \R_{\ge 0} \to (0, 1]$ be the function that maps each value of $\Delta$ to the $\alpha \in (0, 1]$ yielding the best performance guarantee. Then for all $\Delta \ge 0$ it holds that
\[\alpha(\Delta) = \frac{g(\Delta) - 1 + \sqrt{g(\Delta) \cdot (g(\Delta)+2) + 5}}{2 \cdot (g(\Delta)+1)} = \begin{cases*} \frac{\sqrt \Delta - \sqrt{32-8\sqrt \Delta + \Delta}}{2 \cdot (\sqrt \Delta - 4)} &for $\Delta \le 1$; \\ \frac{\sqrt{8+12\Delta+5\Delta^2}-\Delta}{2 \cdot (2+\Delta)} &for $\Delta \ge 1$.\end{cases*}\]

\begin{corollary} \label{cor:SOS alpha Delta}
 The $\mathrm{SOS}\bigl(\alpha(\Delta)\bigr)$ policy has competitive ratio at most
 \[c \coloneqq 1 + \frac{1}{\alpha(\Delta)} = \begin{cases*} 1+ \frac{\sqrt{\Delta}+\sqrt{32-8 \cdot \sqrt \Delta + \Delta}}{4} &for $\Delta \le 1$; \\ 1 + \frac{2 \cdot (2+\Delta)}{\sqrt{8+12 \Delta +5 \Delta^2} - \Delta} &for $\Delta \ge 1$\end{cases*}\] for instances with squared coefficients of variation bounded by $\Delta$.
\end{corollary}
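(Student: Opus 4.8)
The plan is to derive \cref{cor:SOS alpha Delta} as an immediate consequence of \cref{thm:SOS Delta} applied to the specific value $\alpha=\alpha(\Delta)$. By \cref{thm:SOS Delta}, for any $\alpha\in(0,1]$ the $\mathrm{SOS}(\alpha)$ policy is $c_\alpha$-competitive with
\[
 c_\alpha = 1+\max\Bigl\{\tfrac1\alpha,\ 1+\alpha-g(\Delta)\cdot(1-\alpha)\Bigr\},
\]
where I used the bound $\E[\sum_j w_j\bm M_j^{\mathrm{SOS}(\alpha)}]\le c_\alpha\sum_j w_j M_j^{\mathrm{S^p}}\le c_\alpha\E[\sum_j w_j\bm M_j^{\mathrm{OPT}}]$ together with the fact, noted right after \cref{lem:lower bound}, that $\sum_j w_j(M_j^{\mathrm{S^p}}+\bar p_j/2)\le\E[\sum_j w_j\bm C_j^{\mathrm{OPT}}]$ and $\bm C_j\ge\bm M_j$ realizationwise, so the same ratio $c_\alpha$ transfers from mean busy times to completion times. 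Hence the task reduces to showing that the prescribed $\alpha(\Delta)$ minimizes $c_\alpha$ and that at this point the maximum equals $1/\alpha(\Delta)$, yielding the stated closed form.

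First I would observe that in the expression $\max\{1/\alpha,\ 1+\alpha-g(\Delta)(1-\alpha)\}$ the first argument $\alpha\mapsto1/\alpha$ is strictly decreasing on $(0,1]$ from $+\infty$ to $1$, while the second argument $\alpha\mapsto1+\alpha-g(\Delta)(1-\alpha)=1-g(\Delta)+(1+g(\Delta))\alpha$ is affine and strictly increasing (since $g(\Delta)>-1$, in fact $g(\Delta)\in(0,1]$ by \cref{eq:def g}), taking value $2-g(\Delta)\le 2$ at $\alpha=1$. Since at $\alpha\to0^+$ the first dominates and at $\alpha=1$ we have $1/\alpha=1\le 2-g(\Delta)$ so the second dominates, the two graphs cross exactly once in $(0,1]$, and the pointwise maximum is minimized precisely at that crossing. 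Setting $1/\alpha=1-g(\Delta)+(1+g(\Delta))\alpha$ and clearing denominators gives the quadratic $(1+g(\Delta))\alpha^2+(1-g(\Delta))\alpha-1=0$, whose unique root in $(0,1]$ is
\[
 \alpha(\Delta)=\frac{g(\Delta)-1+\sqrt{g(\Delta)(g(\Delta)+2)+5}}{2(1+g(\Delta))},
\]
exactly the function defined before the corollary. I should also verify $\alpha(\Delta)\in(0,1]$: positivity is clear since the discriminant $g(\Delta)(g(\Delta)+2)+5=(g(\Delta)+1)^2+4>(1-g(\Delta))^2$, and $\alpha(\Delta)\le1$ follows because plugging $\alpha=1$ into the quadratic's left side gives $(1+g(\Delta))+(1-g(\Delta))-1=1>0$ while the leading coefficient is positive, so the larger root is $\le1$.

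At $\alpha=\alpha(\Delta)$ both arguments of the maximum coincide, so $c=1+1/\alpha(\Delta)$; substituting the two branches of $g$ from \cref{eq:def g} and simplifying the radical produces the two displayed closed forms. For $\Delta\ge1$, $g(\Delta)=1/(\Delta+1)$ so $1+g(\Delta)=(\Delta+2)/(\Delta+1)$ and $g(\Delta)(g(\Delta)+2)+5$ simplifies, after multiplying through by $(\Delta+1)^2$, to $(8+12\Delta+5\Delta^2)/(\Delta+1)^2$; carrying this through $\alpha(\Delta)$ and then $1+1/\alpha(\Delta)$ gives $1+2(2+\Delta)/(\sqrt{8+12\Delta+5\Delta^2}-\Delta)$. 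For $\Delta\le1$, $g(\Delta)=(2-\sqrt\Delta)/2=1-\sqrt\Delta/2$, and an analogous substitution-and-rationalization yields $1+(\sqrt\Delta+\sqrt{32-8\sqrt\Delta+\Delta})/4$. The only genuinely fiddly part is this last algebraic simplification of the radicals into the advertised form; everything else is the one-line crossing-point argument plus an application of \cref{thm:SOS Delta}. I would also remark that at $\Delta=0$ one recovers $g(0)=1$, $\alpha(0)=\phi-1$, and $c=\phi+1$, consistent with \cref{prop:DSOS}, and that for $\Delta=1$ the two branches agree at the value $1+\sqrt{25}/4\cdot{}$-type expression, providing a sanity check on continuity.
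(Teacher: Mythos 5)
Your argument is correct and is essentially the paper's own: the corollary is obtained by applying \cref{thm:SOS Delta} at the crossing point of the decreasing term $1/\alpha$ and the increasing affine term $1+\alpha-g(\Delta)(1-\alpha)$, which leads to the quadratic $(1+g(\Delta))\alpha^2+(1-g(\Delta))\alpha=1$, the stated $\alpha(\Delta)$, and $c=1+1/\alpha(\Delta)$ after substituting the two branches of $g$. One small correction to your closing sanity check: at $\Delta=0$ one has $g(0)=1$, $\alpha(0)=1/\sqrt2$ and $c=1+\sqrt2$ (recovering the bound of \citet{GQS+02}), while $\alpha=\phi-1$ and $c=\phi+1$ arise only in the limit $\Delta\to\infty$, i.e.\ $g(\Delta)\to0$, which is where consistency with \cref{prop:DSOS} is obtained.
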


This converges to $\phi+1$ as $\Delta \to \infty$, and for deterministic processing times we retrieve the performance guarantee~$1+\sqrt 2$ from \cite[Theorem~3.3\,(i)]{GQS+02} for $\alpha(0) = \frac{1}{\sqrt 2}$. As in \cref{subsec:refined RSOS}, we can infer the bound for unrelated machines illustrated in \cref{fig:performance unrelated}.

\begin{corollary}
 Let $c$ be as in \cref{cor:SOS alpha Delta}. There is a deterministic online policy for unrelated machines whose competitive ratio within the class of all scheduling policies is at most $c \cdot (4 + 2\Delta) \le (\phi + 1) \cdot (4 + 2 \Delta) - \frac{5 - \sqrt 5}{10}$ and within the class of fixed-assignment policies is bounded by $4c$.
\end{corollary}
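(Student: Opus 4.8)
The plan is to plug the single-machine rule $\mathrm{SOS}\bigl(\alpha(\Delta)\bigr)$ from \cref{cor:SOS alpha Delta} into the greedy-assignment framework of \cref{sec:unrelated} in place of DSOS, and to run the proof of \cref{thm:unrelated} verbatim, only replacing \cref{prop:DSOS} by \cref{thm:SOS Delta} applied on each machine with $\alpha \coloneqq \alpha(\Delta)$. Call the resulting fixed-assignment online policy GA-$\mathrm{SOS}\bigl(\alpha(\Delta)\bigr)$. Since $\alpha(\Delta)$ is, by construction, the point at which the two terms inside the maximum in \cref{thm:SOS Delta} agree, that \lcnamecref{thm:SOS Delta} yields on every machine~$i$ the mean-busy-time bound $\E\bigl[\sum_{j:i^{\mathrm{GA}}(j)=i} w_j \bm M_j\bigr] \le c \cdot \sum_{j:i^{\mathrm{GA}}(j)=i} w_j M_j^{\mathrm{S^p}_i}$ with $c = 1 + 1/\alpha(\Delta)$ as in \cref{cor:SOS alpha Delta}. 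Summing over $i$, adding $\tfrac12\sum_j w_j\bar p_{i^{\mathrm{GA}}(j),j} \le \tfrac c2 \sum_j w_j \bar p_{i^{\mathrm{GA}}(j),j}$ (legitimate because $c \ge 1$) and passing from $\bm M_j$ to $\bm C_j = \bm M_j + \bm p_j/2$ exactly as in the derivation of \ineq{ineq:upper bound GA-DSOS} gives
\[\E\biggl[\,\sum_{j=1}^n w_j \cdot \bm C_j^{\mathrm{GA\text-SOS}(\alpha(\Delta))}\biggr] \le c \cdot \sum_{j=1}^n w_j \cdot \Bigl(M_j^{\mathrm{GA\text-S^p}} + \frac{\bar p_{i^{\mathrm{GA}}(j),j}}{2}\Bigr).\]

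Next I would chain this with \cref{lem:dual fitting} and then with \ineq{ineq:bound LP_R}, which is word-for-word the computation in the proof of \cref{thm:unrelated}; it yields, for every scheduling policy~$\Pi$ and every fixed-assignment policy~$\mathrm{FA}$,
\[\E\biggl[\,\sum_{j=1}^n w_j \cdot \bm C_j^{\mathrm{GA\text-SOS}(\alpha(\Delta))}\biggr] \le 4c \cdot \mathrm{OPT}(\mathrm{LP_R}) \le \min\Biggl\{c(4+2\Delta) \cdot \E\biggl[\,\sum_{j=1}^n w_j \bm C_j^{\Pi}\biggr],\ 4c \cdot \E\biggl[\,\sum_{j=1}^n w_j \bm C_j^{\mathrm{FA}}\biggr]\Biggr\}.\]
This already establishes the competitive ratios $c(4+2\Delta)$ and $4c$, so nothing here is new relative to \cref{thm:unrelated}. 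The only part that requires genuine work is therefore the numerical estimate $c(4+2\Delta) \le (\phi+1)(4+2\Delta) - \tfrac{5-\sqrt5}{10}$, which — since $(\phi+1)(4+2\Delta)$ is exactly the bound $(3+\sqrt5)(2+\Delta)$ obtained for GA-DSOS in \cref{thm:unrelated} — merely says that this refined policy improves on that bound by at least the additive constant $\tfrac{5-\sqrt5}{10}$.

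For this estimate I would rewrite the claim as $\bigl((\phi+1)-c\bigr)(4+2\Delta) \ge \tfrac{5-\sqrt5}{10}$ and use that $\alpha \coloneqq \alpha(\Delta)$ is the positive root of $(1+g)\alpha^2 + (1-g)\alpha = 1$, where $g \coloneqq g(\Delta)$; hence $1/\alpha = (1-g)+(1+g)\alpha$ and thus $c = (2-g)+(1+g)\alpha$. Substituting $(1+g)\alpha = \tfrac12\bigl(g-1+\sqrt{(g+1)^2+4}\bigr)$ and using $2\phi-1=\sqrt5$ (from $\phi^2 = \phi+1$) collapses this to the compact, manifestly positive form
\[(\phi+1)-c = \frac{(g+\sqrt5) - \sqrt{(g+1)^2+4}}{2} = \frac{(\sqrt5-1)\,g}{(g+\sqrt5) + \sqrt{(g+1)^2+4}}\,.\]
Finally I would substitute $4+2\Delta = 2(1+g)/g$ on the range $\Delta \ge 1$ (where $g = 1/(\Delta+1) \in (0,\tfrac12]$) and $4+2\Delta = 4 + 8(1-g)^2$ on the range $\Delta \le 1$ (where $g = (2-\sqrt\Delta)/2 \in [\tfrac12,1]$), reducing everything to a single scalar inequality in $g$ on each of these two ranges.

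The good news is that this last step is routine: each of the two scalar inequalities has ample slack and follows from crude bounds, replacing $g$, $\sqrt{(g+1)^2+4}$ and the factor $4+2\Delta$ by their (easily computed) extreme values on the respective range — this already yields the lower bounds $\tfrac{2(\sqrt5-1)}{3+\sqrt5}$ on the first range and $\tfrac{2(\sqrt5-1)}{1+\sqrt5+2\sqrt2}$ on the second, both comfortably exceeding $\tfrac{5-\sqrt5}{10} = \tfrac{\sqrt5-1}{2\sqrt5}$. The only mild obstacle is thus purely bookkeeping: because $g(\Delta)$ is piecewise defined, one is forced into the two-case split, but the pieces join continuously at $\Delta = 1$, $g = \tfrac12$, so no difficulty arises at the seam. (In fact, tracking the same expression shows $\bigl((\phi+1)-c\bigr)(4+2\Delta)$ decreases to $\tfrac{2}{\phi+2} = \tfrac{5-\sqrt5}{5}$ as $\Delta \to \infty$, so the stated bound holds with twice the slack; we content ourselves with the cleaner constant.)
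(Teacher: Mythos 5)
Your proposal is correct and follows exactly the route the paper intends (the paper leaves this corollary's proof implicit, saying only that the bound is inferred as in the randomized case): replace DSOS by $\mathrm{SOS}(\alpha(\Delta))$ on each machine, invoke \cref{thm:SOS Delta} to get the factor~$c$, and chain with \cref{lem:dual fitting} and \ineq{ineq:bound LP_R} exactly as in the proof of \cref{thm:unrelated}. Your algebra for the numerical estimate — in particular the identity $(\phi+1)-c = (\sqrt 5 - 1)g/\bigl((g+\sqrt 5)+\sqrt{(g+1)^2+4}\bigr)$ and the two-case lower bounds $2(\sqrt 5 - 1)/(3+\sqrt 5)$ and $2(\sqrt 5-1)/(1+\sqrt 5 + 2\sqrt 2)$, both exceeding $(5-\sqrt 5)/10$ — checks out, and supplies a verification the paper merely asserts.
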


\section{Concluding remarks} \label{sec:conclusion}

We described a parameterized online policy for a single machine in \cref{subsec:delayed list scheduling}, and in \cref{subsec:refined RSOS,subsec:refined DSOS,apx:refined delta} we derived parameter choices that require prior knowledge about the processing time distributions. In this way we attenuate the pessimistic adversarial online model, where any distribution could occur at any time. The same approach was pursued by \citet{MUV06} for the derivation of their competitiveness result for $\delta$-NBUE processing times.
Let us shortly discuss the question what happens if the policy is optimized for a predicted parameter~$\bar \Delta$ (or analogously $\bar \delta$) that turns out to be wrong. \Cref{thm:RSOS Delta,thm:SOS Delta} can still be used to derive performance guarantees for $\mathrm{RSOS}(f_{\bar \Delta})$ and $\mathrm{SOS}(\alpha(\bar \Delta))$, respectively, when the maximum coefficient of variation is some $\Delta > \bar \Delta$. The worst case is when $\bar \Delta = 0$ and $\Delta \to \infty$, i.e., when the processing times are assumed to be deterministic but actually have huge coefficients of variation. In this case, the two \lcnamecrefs{thm:RSOS Delta} yield the still constant performance guarantees~$2.223$ for $\mathrm{RSOS}(f_0)$ and $2+1/\sqrt 2 \approx 2.707$ for $\mathrm{SOS}(1/\sqrt 2)$. Hence, if it is unknown whether the predicted parameter~$\bar \Delta$ is correct, one may (for the randomized policy) choose anything between the following two extrema: On the one hand, choosing $\bm \alpha_j$ according to the uniform distribution, as in \cref{prop:RSOS}, always leads to $2$-competitiveness. On the other hand, using the probability density function~$f_0$, as in \cite{GQS+02}, leads to $1.686$-competitiveness for deterministic processing times but only to $2.223$-competitiveness in case the processing times have unbounded coefficients of variation. For the choice of $\alpha$ in the deterministic policy an analogous trade-off occurs . The exact bounds resulting from all possible combinations of $\bar \Delta$ and $\Delta$ are illustrated in \cref{fig:untrusted}.
\begin{figure}
 \begin{subfigure}{\linewidth}
  \centering
  \includegraphics[width=.9\textwidth]{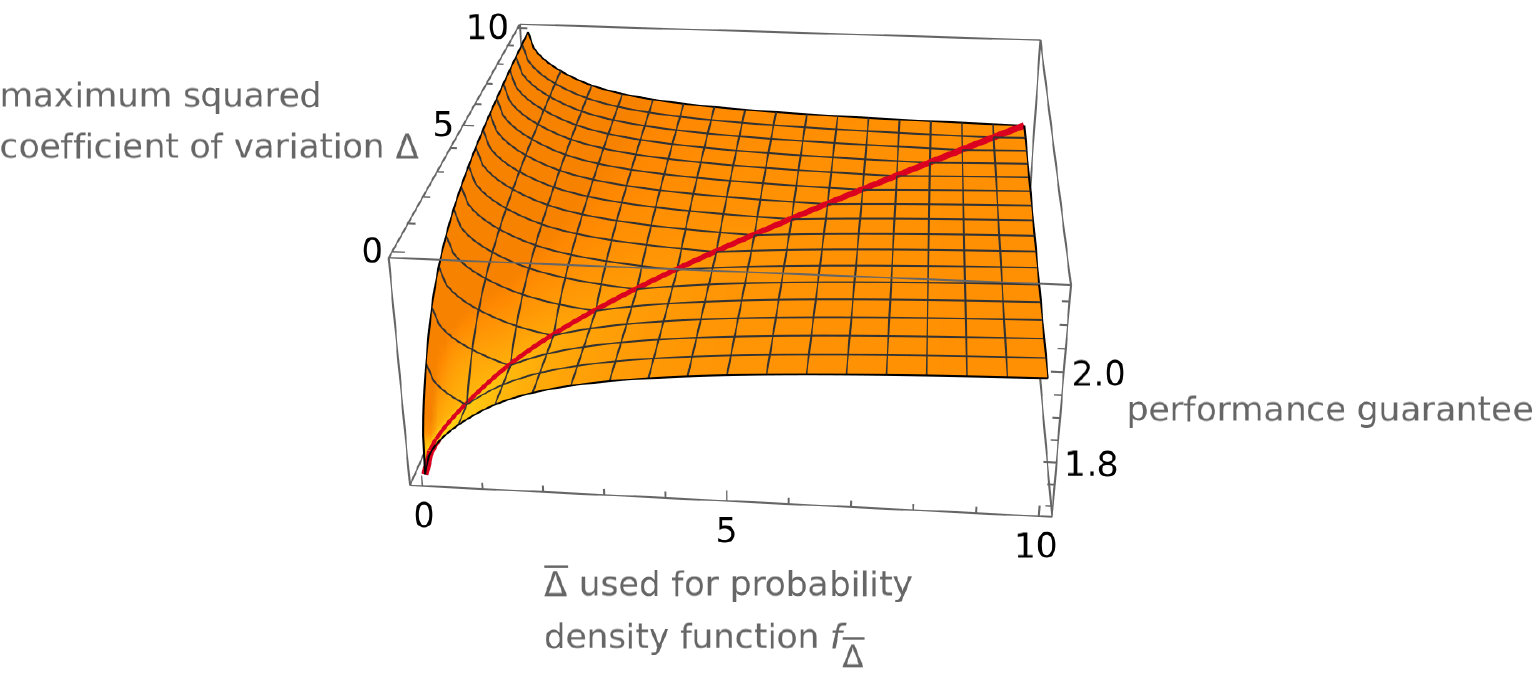}
  \caption{$\mathrm{RSOS}(f_{\bar \Delta})$}
 \end{subfigure}
 \begin{subfigure}{\linewidth}
  \centering
  \includegraphics[width=.9\textwidth]{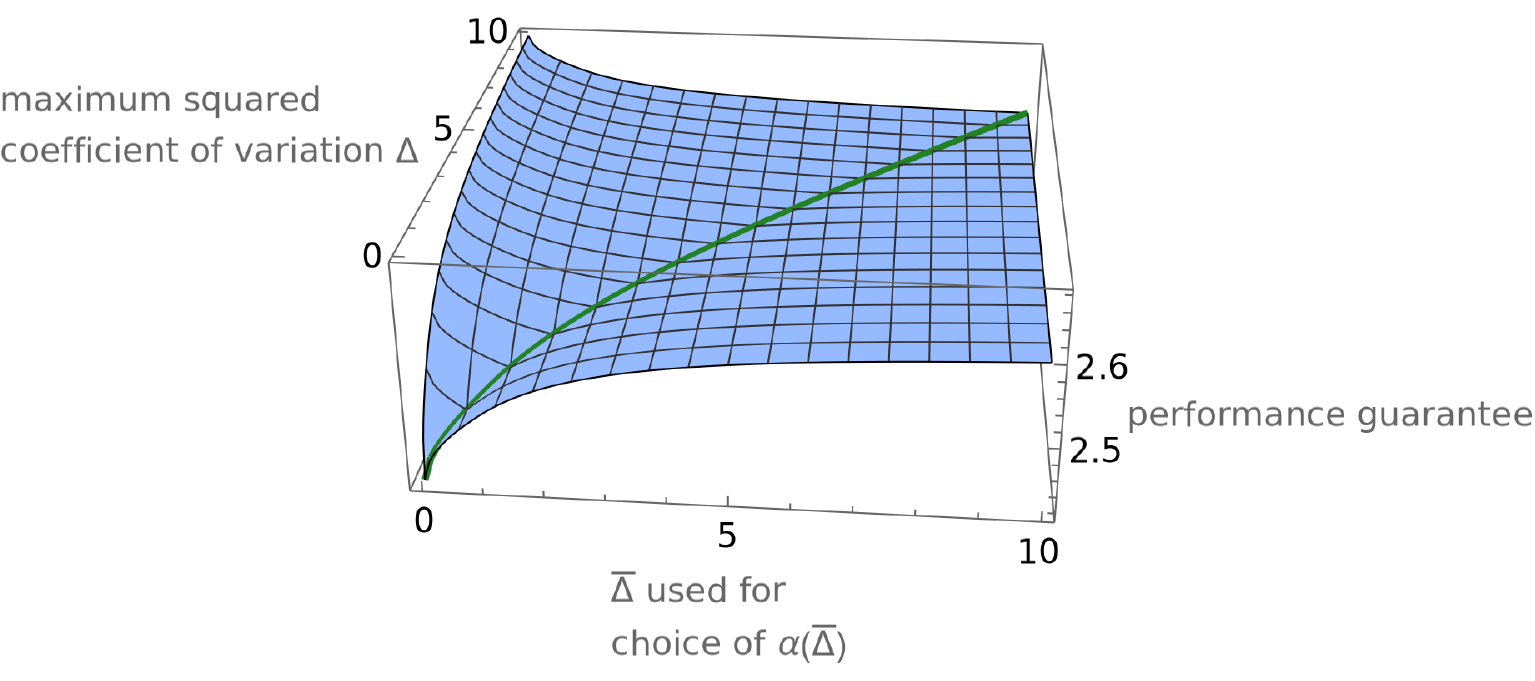}
  \caption{$\mathrm{SOS}(\alpha(\bar \Delta))$}
 \end{subfigure}
 \caption{Performance guarantees for online policies adapted to $\bar \Delta$ for instances with maximum squared coefficient of variation~$\Delta$. For every $\Delta$ the best guarantee is achieved for $\bar \Delta = \Delta$. This is indicated by the red and green lines, which coincide with the corresponding curves in \cref{fig:performance single}.}
 \label{fig:untrusted} 
\end{figure}
This relates our results to the model of untrusted predictions, which has recently attracted a lot of attention~\cite{PSK18}. Unlike most results in this area, we do not need an additional parameter to factor in the confidence into the prediction, but we can incorporate lower confidence by simply using a greater value~$\bar \Delta$ than predicted.

In \cref{seq:refined policies}, we generalized the algorithms of \citet{GQS+02} for deterministic single-machine scheduling to stochastic scheduling. Another generalization of these algorithms was provided by \citet{CW09}, who considered identical parallel machines. The parallelism of our and their work is particularly evident for the randomized online algorithm, where both extensions lead to a competitive ratio of $2$ for arbitrary instances, which can be enhanced for each specific value of the added problem parameter---in our case the upper bound~$\Delta$ on the squared coefficients of variation, and in their case the number~$m$ of machines. In both generalizations there is one parameter value that corresponds to the basic case considered by \citeauthor{GQS+02} and for which the enhanced algorithm and analysis coincide exactly with the original algorithm and analysis. Although both works lead to a competitive ratio of $2$, for their combination only the non-constant performance guarantee~$2+\Delta$ of \citet{Sch08} is known. Our and \citeauthor{CW09}'s results together yield a refinement of the RSOS policy for all pairs $m$, $\Delta$ with $m = 1$ or $\Delta = 0$.

Finally, \cref{sec:RSOS DSOS} implies that there is a constant-competitive online delayed list-scheduling policy for a single machine with release dates. The delays are only necessary in the online context, so there is also a usual job-based list scheduling policy with a constant performance guarantee. For identical parallel machines, however, for each of the following policy classes it is not known whether a constant performance guarantee can be achieved: online policies, (delayed) list-scheduling policies, and efficiently executable policies. The only known classes of policies for which this can be excluded are fixed-assignment policies and index policies~\cite{EFMM19}, a subclass of machine-based list-scheduling policies. All the open cases are interesting research questions. For example, is there an online policy that can take arbitrary computation time and has a constant competitive ratio?

\bibliographystyle{elsarticle-num-names}
\bibliography{bibliography_arxiv}

\appendix\allowdisplaybreaks

\section{Omitted lemmas and proofs} \label{apx:technical}
 
\begin{lemma} \label{lem:expected positive part Delta}
 Let $\bm X$ be a non-negative random variable with $\CV[\bm X]^2 \le \Delta$, and let $\beta \in [0, 1)$. Then
 \[\E\bigl[(\bm X - \beta \cdot \E[\bm X])^+\bigr] \le \bigl(1-g(\Delta) \cdot \beta\bigr) \cdot \E[\bm X].\]
\end{lemma}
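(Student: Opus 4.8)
The plan is to normalise and then reduce the claim to producing, for each value of $\beta$, a quadratic function that majorises $x\mapsto(x-\beta)^+$ on $[0,\infty)$ and whose expectation against $\bm X$ is controlled by $\E[\bm X]$ and $\E[\bm X^2]$ alone. First I would use that the asserted inequality is invariant under the scaling $\bm X\mapsto\lambda\bm X$ (and trivial when $\E[\bm X]=0$, since then $\bm X=0$ almost surely), so without loss of generality $\E[\bm X]=1$; the hypothesis $\CV[\bm X]^2\le\Delta$ then reads $\E[\bm X^2]\le 1+\Delta$, and the goal becomes $\E[(\bm X-\beta)^+]\le 1-g(\Delta)\beta$. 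The key observation is that if $q(x)=y_0+y_1x+y_2x^2$ has $y_2\ge 0$ and satisfies $q(x)\ge(x-\beta)^+$ for all $x\ge 0$, then $\E[(\bm X-\beta)^+]\le\E[q(\bm X)]=y_0+y_1+y_2\E[\bm X^2]\le y_0+y_1+(1+\Delta)y_2$; so it suffices to exhibit one such $q$ with $y_0+y_1+(1+\Delta)y_2\le 1-g(\Delta)\beta$. (This is weak duality for the moment problem $\sup\{\E[(\bm X-\beta)^+]:\bm X\ge 0,\ \E[\bm X]=1,\ \E[\bm X^2]\le 1+\Delta\}$, and guessing the extremal two-point distributions suggests the right $q$.)

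I would then split into two cases according to the sign of $1+\Delta-2\beta$. If $\beta\le\tfrac{1+\Delta}{2}$, I take $q(x)=\bigl(1-\tfrac{2\beta}{1+\Delta}\bigr)x+\tfrac{\beta}{(1+\Delta)^2}x^2$; here $y_1,y_2\ge 0$, so $q(x)=x\bigl(y_1+y_2x\bigr)\ge 0$ on $[0,\beta]$ while $q(x)-(x-\beta)=\beta\bigl(\tfrac{x}{1+\Delta}-1\bigr)^2\ge 0$ on $[\beta,\infty)$, and $y_0+y_1+(1+\Delta)y_2=1-\tfrac{\beta}{1+\Delta}$, which is at most $1-g(\Delta)\beta$ because $g(\Delta)\le\tfrac{1}{1+\Delta}$ for all $\Delta\ge 0$ (equality for $\Delta=0$ and $\Delta\ge 1$; for $0<\Delta<1$ the inequality $1-\tfrac{\sqrt\Delta}{2}\le\tfrac{1}{1+\Delta}$ is equivalent to $(\sqrt\Delta-1)^2\ge 0$). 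If $\beta>\tfrac{1+\Delta}{2}$ — which forces $\Delta<1$ since $\beta<1$ — I set $x_1:=\beta-\sqrt{(1-\beta)^2+\Delta}$, note $0\le x_1<\beta$, and take $q(x)=\tfrac{(x-x_1)^2}{4(\beta-x_1)}$; this is nonnegative on $[0,\beta]$, and $q(x)-(x-\beta)$ is a convex quadratic with a double root at $2\beta-x_1$ (one checks $\tfrac12\bigl(x_1+(2\beta-x_1)\bigr)=\beta$), hence nonnegative everywhere. Writing $s:=\sqrt{(1-\beta)^2+\Delta}$ and $t:=1-x_1=(1-\beta)+s$, a short computation gives $(1-x_1)^2+\Delta=2st$ and $\beta-x_1=s$, so $y_0+y_1+(1+\Delta)y_2=\tfrac{(1-x_1)^2+\Delta}{4(\beta-x_1)}=\tfrac{t}{2}=\tfrac{(1-\beta)+\sqrt{(1-\beta)^2+\Delta}}{2}$; it then remains to check $\tfrac{(1-\beta)+\sqrt{(1-\beta)^2+\Delta}}{2}\le 1-g(\Delta)\beta=(1-\beta)+\tfrac{\beta\sqrt\Delta}{2}$, which after rearranging and squaring becomes $\sqrt\Delta(1+\beta)\le 2\beta$, and since $\beta>\tfrac{1+\Delta}{2}$ gives $\Delta<2\beta-1$ it is enough that $2\beta-1\le\tfrac{4\beta^2}{(1+\beta)^2}$, i.e.\ $(\beta-1)(2\beta^2+\beta+1)\le 0$, valid for $\beta<1$.

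The main obstacle is picking the certificate rather than the arithmetic: the first instincts — bounding $(\beta-\bm X)^+\le\beta\mathds 1_{\{\bm X<\beta\}}$ and applying a Cantelli-type tail bound, or a single Cauchy--Schwarz step against $\E[(1-\bm X)^2]\le\Delta$ — are each too lossy (the former fails for $\beta$ small, the latter when $\sqrt\Delta$ is comparable to $1-\beta$), so one really has to recognise that the tangent-quadratic majorant of $(x-\beta)^+$ coming from the moment problem is the correct tool, and that it must be taken with different tangency data in the two regimes $\beta\lessgtr\tfrac{1+\Delta}{2}$ in order for the resulting constant to be exactly $g(\Delta)$. After that, the two scalar inequalities above and the assembly of the two cases are routine.
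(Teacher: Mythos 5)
Your proof is correct, but it follows the paper's route only on part of the parameter range. For $\Delta \ge 1$ (where $\beta < 1 \le \tfrac{1+\Delta}{2}$ always puts you in your first case) your quadratic certificate $q(x) = \bigl(1-\tfrac{2\beta}{1+\Delta}\bigr)x + \tfrac{\beta}{(1+\Delta)^2}x^2$ is, after undoing the normalization, exactly the realizationwise majorant the paper uses, namely $\bm X - \beta\bigl(\E[\bm X] - \tfrac{(\bm X-\E[\bm X](\Delta+1))^2}{\E[\bm X](\Delta+1)^2}\bigr)$; there the two arguments coincide. For $\Delta \le 1$ the paper does something genuinely different: it bounds $(\bm X - \beta\E[\bm X])^+ \le (1-\beta)\bm X + \beta(\bm X - \E[\bm X])^+$ and then uses the identity $\E[(\bm X-\E[\bm X])^+] = \tfrac12\E\bigl[\lvert\bm X-\E[\bm X]\rvert\bigr]$ together with the $L^1$--$L^2$ bound $\E\lvert\bm X-\E[\bm X]\rvert \le \sqrt{\Var[\bm X]} \le \sqrt\Delta\,\E[\bm X]$, which delivers $\bigl(1-\beta\tfrac{2-\sqrt\Delta}{2}\bigr)\E[\bm X]$ in two lines. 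You instead stay inside the quadratic-certificate (moment-problem) framework and split on $\beta \lessgtr \tfrac{1+\Delta}{2}$; I checked your algebra in both subcases (the factorization $q(x)-(x-\beta) = \tfrac{(x-(2\beta-x_1))^2}{4(\beta-x_1)}$, the identity $(1-x_1)^2+\Delta = 2st$, the reductions $g(\Delta)\le\tfrac1{1+\Delta}\Leftrightarrow(\sqrt\Delta-1)^2\ge 0$ and $(2\beta-1)(1+\beta)^2-4\beta^2=(\beta-1)(2\beta^2+\beta+1)$) and it all holds. What each approach buys: the paper's $\Delta\le 1$ argument is shorter and avoids case analysis in $\beta$; yours is methodologically uniform and actually sharper, since your certificates give the intermediate bounds $\bigl(1-\tfrac{\beta}{1+\Delta}\bigr)\E[\bm X]$ and $\tfrac{(1-\beta)+\sqrt{(1-\beta)^2+\Delta}}{2}\E[\bm X]$, which are tight for the underlying two-point moment problem and strictly dominate $\bigl(1-g(\Delta)\beta\bigr)\E[\bm X]$ in parts of the range $\Delta<1$ --- a refinement that could in principle feed back into slightly better constants downstream, though the paper only needs the stated form.
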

\begin{proof} 
 We do a case distinction:
 \begin{itemize}
  \item Consider first the case that $\Delta \le 1$. Then we have realizationwise
  \[\bm X - \beta \cdot \E[\bm X] \le \bm X - \beta \bm X + \beta \cdot \bigl(\bm X - \E[\bm X]\bigr)^+ = (1-\beta) \cdot \bm X + \beta \cdot \bigl(\bm X-\E[\bm X]\bigr)^+.\]
  Since the right-hand side is non-negative, it also holds that \[\bigl(\bm X-\beta \cdot \E[\bm X]\bigr)^+ \le (1-\beta) \cdot \bm X + \beta \cdot \bigl(\bm X-\E[\bm X]\bigr)^+.\] Therefore, using that $\E\bigl[(\bm X - \E[\bm X])^+\bigr] = \frac 1 2 \E\bigl[\lvert\bm X - \E[\bm X]\rvert\bigr]$, we obtain
  \begin{multline*}\E\bigl[(\bm X - \beta \E[\bm X])^+\bigr] \le (1-\beta) \cdot \E[\bm X] + \frac \beta 2 \cdot \E\bigl[\lvert\bm X - \E[\bm X]\rvert\bigr]\\ \le (1-\beta) \cdot \E[\bm X] + \frac \beta 2 \cdot \sqrt{\E\bigl[(\bm X - \E[\bm X])^2\bigr]} \le \biggl(1-\beta \cdot \frac{2-\sqrt{\Delta}}{2}\biggr) \cdot \E[\bm X].
  \end{multline*}
  \item Consider now the case~$\Delta \ge 1$. Then we have realizationwise
  \begin{align*}
   &\bm X-\beta \cdot \E[\bm X] \le \bm X - \beta \cdot \biggl(\E[\bm X] - \frac{\bigl(\bm X-\E[\bm X] \cdot (\Delta+1)\bigr)^2}{\E[\bm X] \cdot (\Delta + 1)^2}\biggr)\\ ={} 
   &\bm X - \beta \cdot \biggl(\frac{2\bm X}{\Delta+1} -\frac{\bm X^2}{\E[\bm X] \cdot (\Delta+1)^2}\biggr) \\
   ={} &\Bigl(1-\frac{2\beta}{\Delta+1}\Bigr) \cdot \bm X + \frac{\beta}{\E[\bm X] \cdot (\Delta+1)^2} \cdot \bm X^2.
  \end{align*}
  Since the right-hand side is non-negative, it also holds that \[\bigl(\bm X-\beta \cdot \E[\bm X]\bigr)^+ \le \Bigl(1-\frac{2\beta}{\Delta+1}\Bigr) \cdot \bm X + \frac{\beta}{\E[\bm X] \cdot (\Delta+1)^2} \cdot \bm X^2.\] Consequently, using that $\E[\bm X^2] \le (\Delta+1) \cdot \E[\bm X]^2$, we obtain
  \begin{align*}
   \E\bigl[(\bm X-\beta \cdot \E[\bm X])^+\bigr] &\le \Bigl(1-\frac{2\beta}{\Delta+1}\Bigr) \cdot \E[\bm X] + \frac{\beta}{\Delta+1} \cdot \E[\bm X] \\
   &= \Bigl(1-\frac{\beta}{\Delta+1}\Bigr) \cdot \E[\bm X]. \qedhere
  \end{align*}
 \end{itemize}
\end{proof}

\Gooddensity*

\begin{proof}
 Recall that $0 < D = g(\Delta) \le 1$. For $\gamma = 0$ the left-hand side is 
 \[1-D+{}\underbrace{D \cdot (1+D)}_{>0}{} \cdot \bigl(\underbrace{\ln(1+D)-D}_{<0}\bigr) < 1-D,\]
 and for $\gamma=1$ the left-hand side is equal to 
 \[1-D + {}\underbrace{\mathrm e^D}_{>0}{} \cdot \bigl(\underbrace{\mathrm e^{-D} - (1-D)}_{>0}\bigr) \cdot (\underbrace{\mathrm e^D + D - 1}_{>0}) > 1-D.\]
 As the left-hand side is continuous in $\gamma$ for $\gamma \le 1$, the intermediate value theorem implies the existence of a solution $\gamma$ with $0 < \gamma < 1$.
 
 It holds that $\mathrm e^{D \theta} = \mathrm e^{D \gamma} \cdot \bigl(1+D \cdot (1-\gamma)\bigr) > 1$, so that $c > 1$. Therefore, $f(\alpha) \ge 0$ for all $\alpha \in (0, 1]$. Furthermore,
 \[\int_0^1 f(\alpha)\dif \alpha = \int_0^\theta (c-1) \cdot \mathrm e^{D \alpha}\dif \alpha = \frac{c-1}{D} \cdot (\mathrm e^{D \theta} - 1) = 1,\]
 that is, $f$ is indeed a probability density function on $(0, 1]$. We show that it fulfills the two conditions from \cref{thm:RSOS Delta}. We start with the first condition. Let $x \in [0, 1]$. We start with the case that $x \le \theta$.
 \begin{multline*}
  \int_0^x \bigl(1-D \cdot (x-\alpha)\bigr) \cdot f(\alpha)\dif \alpha = \int_0^x \bigl(1-D \cdot (x-\alpha)\bigr) \cdot (c-1) \cdot \mathrm e^{D \alpha}\dif \alpha \\
  = \frac{c-1}{D} \cdot \bigl(\mathrm e^{D x} - (1-D x) - \mathrm e^{D x} + 1\bigr) = (c-1) \cdot x.
 \end{multline*}
 Now consider $x \ge \theta$. Then
 \begin{multline*}
  \int_0^x \bigl(1-D \cdot (x-\alpha)\bigr) \cdot f(\alpha)\dif \alpha = \int_0^\theta \bigl(1-D \cdot (x-\alpha)\bigr) \cdot (c-1) \cdot \mathrm e^{D \alpha}\dif \alpha \\
  = \frac{c-1}{D} \cdot \bigl((1 - D \cdot (x-\theta)) \cdot \mathrm e^{D \theta} - (1-D x) - \mathrm e^{D \theta} + 1\bigr) \le (c-1) \cdot x.
 \end{multline*}
 We now prove the second condition. It holds that
 \begin{align*}
  \int_0^1 \alpha \cdot f(\alpha)\dif \alpha &= \int_0^\theta \alpha \cdot (c-1) \cdot \mathrm e^{D \alpha}\dif \alpha \\
  &= \frac{c-1}{D} \cdot \Bigl(\theta \cdot \mathrm e^{D \theta} - \frac{1}{D} \cdot (\mathrm e^{D \theta} - 1)\Bigr) 
  = \frac{\theta \cdot \mathrm e^{D \theta}}{\mathrm e^{D \theta}-1}- \frac{1}{D}
 \end{align*}
 Consequently,
 \[2-D \cdot \biggl(1-\int_0^1 \alpha \cdot f(\alpha)\dif \alpha\biggr) = 1 - D \cdot \biggl(1-\frac{\theta \cdot \mathrm e^{D \theta}}{\mathrm e^{D \theta} - 1}\biggr) > 0.\]
 For $x \le 1-\theta$ the second condition is trivial. Consider now $x \ge 1 - \theta$. Then
 \[\int_{1-x}^1 f(\alpha)\dif \alpha = \int_{1-x}^\theta (c-1) \cdot \mathrm e^{D \alpha}\dif \alpha = \frac{c-1}{D} \cdot \bigl(\mathrm e^{D \theta} - \mathrm e^{D (1-x)}\bigr) = \frac{\mathrm e^{D \theta} - \mathrm e^{D (1-x)}}{\mathrm e^{D \theta} - 1}.\]
 We combine this and obtain
 \begin{align*}
  &\Biggl(2-D \cdot \biggl(1-\int_0^1 \alpha \cdot f(\alpha)\dif \alpha\biggr)\Biggr) \cdot \int_{1-x}^1 f(\alpha)\dif \alpha\\
  ={} &\Biggl(1-D \cdot \biggl(1 - \frac{\theta \cdot \mathrm e^{D \theta}}{\mathrm e^{D \theta} - 1}\biggr)\Biggr) \cdot \frac{\mathrm e^{D \theta} - \mathrm e^{D \cdot (1-x)}}{\mathrm e^{D \theta} - 1} \\
  ={} &\Biggl(1-D \cdot \biggl(1 - \frac{\theta \cdot \mathrm e^{D \theta}}{\mathrm e^{D \theta} - 1}\biggr)\Biggr) \cdot \frac{\mathrm e^{D \theta} - \mathrm e^{D \gamma} \cdot \mathrm e^{D \cdot (1-\gamma-x)}}{\mathrm e^{D \theta} - 1} \\
  \le{} &\Biggl(1-D \cdot \biggl(1 - \frac{\theta \cdot \mathrm e^{D \theta}}{\mathrm e^{D \theta} - 1}\biggr)\Biggr) \cdot \frac{\mathrm e^{D \theta} - \mathrm e^{D \gamma} \cdot \bigl(1+D \cdot (1-\gamma-x)\bigr)}{\mathrm e^{D \theta} - 1} \\
  ={} &\Biggl(1-D \cdot \biggl(1 - \frac{\theta \cdot \mathrm e^{D \theta}}{\mathrm e^{D \theta} - 1}\biggr)\Biggr) \cdot \frac{\mathrm e^{D \gamma} \cdot D x}{\mathrm e^{D \theta} - 1} \\
  ={} &\left(1 + \frac{\bigl(1-D \cdot \bigl(1 - \frac{\theta \cdot \mathrm e^{D \theta}}{\mathrm e^{D \theta} - 1}\bigr)\bigr) \cdot \mathrm e^{D \gamma} \cdot D - (\mathrm e^{D \theta} - 1)}{\mathrm e^{D \theta} - 1}\right) \cdot x,
 \end{align*}
 where we used the definition of $\theta$ in the second last equation. We claim that this equals $c \cdot x = \bigl(1+\frac{D}{\mathrm e^{D \theta}-1}\bigr) \cdot x$. In order show this, it remains to be shown that the numerator equals $D$. The following computation thus concludes the proof.
 \footnotesize
 \begin{align*}
  &\Biggl(1 - D \cdot \biggl(1 - \frac{\theta \cdot \mathrm e^{D \theta}}{\mathrm e^{D \theta} - 1}\biggr)\Biggr) \cdot \mathrm e^{D \gamma} D - (\mathrm e^{D\theta} - 1) \\
  ={} &\biggl(1 - D + D \cdot \frac{\theta \cdot \mathrm e^{D \theta}}{\mathrm e^{D \theta} - 1}\biggr) \cdot \mathrm e^{D \gamma} D - \mathrm e^{D\gamma} \cdot \bigl(1+D (1-\gamma)\bigr) + 1 \\
  ={} &\mathrm e^{D \gamma} \cdot \Biggl(D \cdot \biggl(\gamma - D + D \cdot \frac{\theta \cdot \mathrm e^{D \theta}}{\mathrm e^{D \theta} - 1}\biggr)-1\Biggr) + 1\\
  ={} &\mathrm e^{D \gamma} \cdot \biggl(D \cdot \frac{\mathrm{e}^{D\theta} \cdot \bigl(\gamma - D \cdot (1-\theta)\bigr) + D - \gamma}{\mathrm e^{D \theta} - 1} - 1 \biggr) + 1 \\ 
  ={} &\mathrm e^{D \gamma} \cdot \frac{\mathrm e^{D \theta} \cdot \bigl(D \cdot (\gamma - D \cdot (1-\theta))-1\bigr)+D \cdot (D-\gamma) + 1}{\mathrm e^{D \theta} - 1} + 1 \\
  ={} &\mathrm e^{D \gamma} \frac{\mathrm e^{D \gamma} \bigl(1+D (1-\gamma)\bigr) \bigl(D (\gamma - D (1 - \gamma) + \ln(1 + D (1-\gamma))) - 1\bigr) + D^2 - D \gamma}{\mathrm e^{D\theta} - 1} + 1 \\
  ={} &\frac{1-D+\mathrm e^{D \gamma} \cdot \bigl(D \cdot (\gamma + D \cdot (1-\gamma)) - 1\bigr)}{\mathrm e^{D\theta} - 1} + 1 \\
  ={} &\frac{\mathrm e^{D\theta} + \mathrm e^{D\gamma} \cdot \bigl(D \cdot (\gamma + D \cdot (1-\gamma)) - 1\bigr) - D}{\mathrm e^{D\theta} - 1} \\
  ={} &\frac{e^{D \gamma} \cdot D \cdot \bigl(1 + D \cdot (1-\gamma)\bigr) - D}{\mathrm e^{D\theta} - 1} 
  = \frac{D \cdot (\mathrm e^{D\theta} - 1)}{\mathrm e^{D\theta} - 1} = D. \tag*{\normalsize\qedhere}
 \end{align*}
\end{proof}

\section{Refined policies for \texorpdfstring{$\delta$}{delta}-NBUE processing times} \label{apx:refined delta}

In this \lcnamecref{apx:refined delta}, we always assume that processing times are $\delta$-NBUE for some $\delta \ge 1$. Since $\Delta \le 2 \delta + 1$, we can apply the results from \cref{seq:refined policies} in order to improve upon the RSOS and DSOS policy in this case. However, a more direct adaption of the policies and analyses to this case is possible, leading to better performance guarantees, see \cref{fig:performance single delta}.
\begin{figure}
 \centering
 \begin{tikzpicture}[font=\footnotesize]
  \begin{axis}[
    xmin=1, xmax=2.05, ymin=1, ymax=2.8, xlabel={$\delta$}, ylabel={performance guarantee},
    ytick={1, 1.5, 1.7833, 2, 2.452, {(3+sqrt(5))/2}},
    yticklabels={$1$, $1.5$, $1.7833$, $2$, $2.452$, $\phi+1$}
   ]
   \addplot[myblue, very thick] coordinates {(1, {(3+sqrt(5))/2}) (2, {(3+sqrt(5))/2})} node[yshift=10pt, right] {deterministic (\Cref{prop:DSOS})};
   \addplot[mygreen, very thick, domain=1:2] plot (x, {1+(2*(2+(2*x-1)))/(sqrt(8+12*(2*x-1)+5*(2*x-1)*(2*x-1))-(2*x-1))}) node[anchor=north west, align=left, yshift=15pt] {deterministic, optimized for\\[-3pt]$\Delta = 2 \delta - 1$ (\Cref{cor:SOS alpha Delta})};
   \addplot[very thick, lime!75!mygreen] table[x expr=1+\coordindex/1000, y index=0] {det_single_machine_delta.dat} node[yshift=-1pt, align=left, anchor=north west] {deterministic, optimized directly\\[-3pt]for $\delta$};
 
   \addplot[myorange, very thick] coordinates {(1, 2) (2, 2)} node [yshift=10pt, right] {randomized (\Cref{prop:RSOS})};
   \addplot[very thick, myred] table[x expr=1+\coordindex/1000, y index=0] {rand_single_machine_2delta-1.dat} node[yshift=15pt, align=left, anchor=north west] {randomized, optimized for\\[-3pt]$\Delta = 2\delta - 1$ (\Cref{cor:RSOS density Delta})};
   \addplot[very thick, magenta] table[x expr=1+\coordindex/1000, y index=0] {rand_single_machine_delta.dat} node[yshift=-1pt, align=left, anchor=north west] {randomized, optimized directly\\[-3pt]for $\delta$};
  \end{axis}
 \end{tikzpicture}
 \caption{Performance guarantees of different online policies for a single machine for $\delta$-NBUE processing times as a function of $\delta$. }
 \label{fig:performance single delta}
\end{figure}
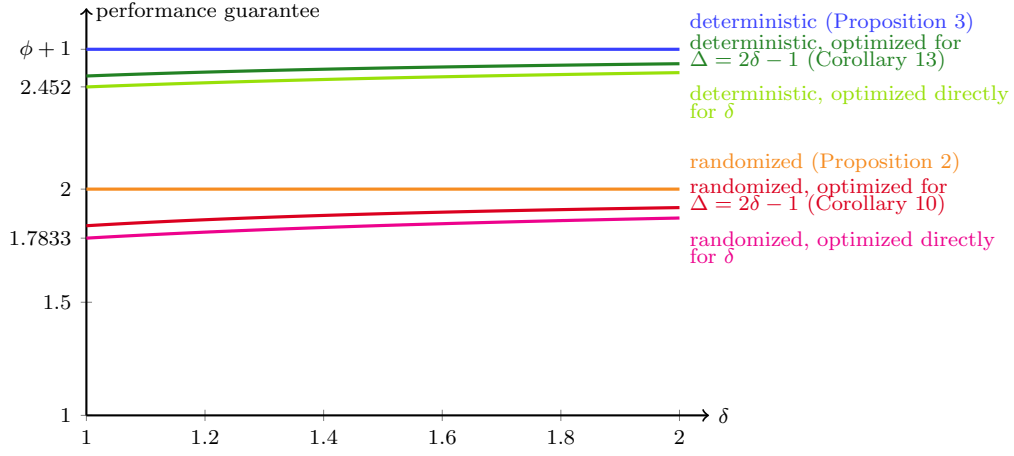
The main insight for this is the following analogue of \cref{lem:expected positive part Delta}.

\begin{lemma} 
 Let $\bm X$ be a $\delta$-NBUE non-negative random variable for some $\delta \ge 1$, and let $\beta \in [0, 1)$. Then
 \[\E\bigl[(\bm X - \beta \cdot \E[\bm X])^+\bigr] \le \frac{\delta}{\delta + \beta} \cdot \E[\bm X].\]
\end{lemma}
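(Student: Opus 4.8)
The plan is to reformulate the statement in terms of the survival function and then extract it from a single integration of the $\delta$-NBUE inequality. Write $\mu \coloneqq \E[\bm X]$; we may assume $\mu > 0$, since otherwise $\bm X = 0$ almost surely and both sides vanish. For $t \ge 0$ put $G(t) \coloneqq \Pr[\bm X > t]$ and $H(t) \coloneqq \E[(\bm X - t)^+]$. I would first record three elementary facts. (i) By Tonelli, $H(t) = \int_t^\infty G(s)\dif s$; in particular $H$ is non-increasing with $H(0) = \mu$, and $\int_0^a G(t)\dif t = \mu - H(a)$ for every $a \ge 0$ (equivalently $\mu = \E[\min\{\bm X, a\}] + \E[(\bm X - a)^+]$). (ii) The quantity to be bounded is exactly $H(\beta\mu)$. (iii) The $\delta$-NBUE property is equivalent to the pointwise estimate $H(t) \le \delta\mu\, G(t)$ for all $t \ge 0$: where $G(t) > 0$ the left-hand side equals $G(t)\cdot\E[\bm X - t \mid \bm X > t]$, and where $G(t) = 0$ we have $\bm X \le t$ almost surely so both sides are zero.

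Next I would integrate the inequality from fact (iii) over $[0, \beta\mu]$. On the left-hand side, monotonicity of $H$ gives $\int_0^{\beta\mu} H(t)\dif t \ge \beta\mu\, H(\beta\mu)$; on the right-hand side, fact (i) gives $\delta\mu \int_0^{\beta\mu} G(t)\dif t = \delta\mu\bigl(\mu - H(\beta\mu)\bigr)$. Combining these two bounds yields $\beta\mu\, H(\beta\mu) \le \delta\mu^2 - \delta\mu\, H(\beta\mu)$, i.e., $(\delta + \beta)\, H(\beta\mu) \le \delta\mu$, which is precisely the asserted inequality $\E[(\bm X - \beta\mu)^+] = H(\beta\mu) \le \frac{\delta}{\delta+\beta}\,\E[\bm X]$.

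There is essentially no hard step; the only places demanding a little care are fact (iii) — in particular the degenerate case $G(t) = 0$, where $\E[\bm X - t \mid \bm X > t]$ is undefined but the inequality $H(t) \le \delta\mu\, G(t)$ still holds trivially — and the identity $\int_0^a G = \mu - H(a)$ used on the right-hand side. I expect fact (iii) to be the main (mild) obstacle, since it is the one point at which the hypothesis is actually invoked. As a remark, fact (iii) is the differential inequality $H(t) \le -\delta\mu\, H'(t)$, whose integrated form alternatively gives the exponential bound $H(\beta\mu) \le \mu\,\mathrm e^{-\beta/\delta} \le \frac{\delta}{\delta+\beta}\mu$; but the averaging argument above needs no differentiability discussion and produces the constant $\frac{\delta}{\delta+\beta}$ directly.
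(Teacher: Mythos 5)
Your proof is correct, but it takes a genuinely different route from the paper's. The paper argues by a case distinction on $p \coloneqq \Pr[\bm X > \beta \cdot \E[\bm X]]$: if $p \le \frac{1}{\delta+\beta}$, the $\delta$-NBUE bound on the conditional expectation at the single point $t = \beta \cdot \E[\bm X]$ immediately gives $\E[(\bm X - \beta\E[\bm X])^+] = p \cdot \condexp{\bm X - \beta\E[\bm X]}{\bm X > \beta\E[\bm X]} \le \frac{\delta}{\delta+\beta}\E[\bm X]$; if $p > \frac{1}{\delta+\beta}$, the hypothesis is not needed at all, since $p \cdot \condexp{\bm X}{\bm X > \beta\E[\bm X]} \le \E[\bm X]$ by the law of total probability and non-negativity, so the quantity is at most $\E[\bm X] - p\beta\E[\bm X] < \frac{\delta}{\delta+\beta}\E[\bm X]$. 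You instead integrate the survival-function form $H(t) \le \delta\mu G(t)$ of the hypothesis over the whole interval $[0,\beta\mu]$ and combine monotonicity of $H$ with the identity $\int_0^{a} G(t)\dif t = \mu - H(a)$. All three of your preparatory facts check out, including the degenerate case $G(t)=0$ and the identity $H(0)=\mu$ for non-negative $\bm X$, and the algebra $(\delta+\beta)H(\beta\mu)\le\delta\mu$ is right. The paper's case split is slightly more economical in that it invokes the $\delta$-NBUE property only at the single point $t = \beta\E[\bm X]$ (and in one branch not at all), whereas your averaging uses it on all of $[0,\beta\mu]$; in return, your method makes the role of the hypothesis as a differential inequality on $H$ transparent and yields the stronger exponential bound $H(\beta\mu) \le \mu\,\mathrm e^{-\beta/\delta}$ as a by-product, which indeed implies the stated constant since $\mathrm e^{x} \ge 1+x$.
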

\begin{proof}
 \[\E\bigl[(\bm X - \beta \cdot \E[\bm X])^+\bigr] = \Pr\bigl[\bm X > \beta \cdot \E[\bm X]\bigr] \cdot \condexp[\big]{\bm X - \beta \cdot \E[\bm X]}{\bm X > \beta \cdot \E[\bm X]}.\]
 We show the claim by case distinction:
 \begin{itemize}
  \item If $\Pr\bigl[\bm X > \beta \cdot \E[\bm X]\bigr] \le \frac{1}{\delta + \beta}$, then the claim holds because the processing times are $\delta$-NBUE.
  \item If $\Pr\bigl[\bm X > \beta \cdot \E[\bm X]\bigr] > \frac{1}{\delta + \beta}$, then
  \begin{align*}
   &\Pr\bigl[\bm X > \beta \cdot \E[\bm X]\bigr] \cdot \E\bigl[\bm X - \beta \cdot \E[\bm X] \mid \bm X > \beta \cdot \E[\bm X]\bigr] \\
   ={} &\Pr\bigl[\bm X > \beta \cdot \E[\bm X]\bigr] \cdot \bigl(\condexp{\bm X}{\bm X > \beta \cdot \E[\bm X]} - \beta \cdot \E[\bm X]\bigr) \\
   \le{} &\E[\bm X] - \Pr\bigl[\bm X > \beta \cdot \E[\bm X]\bigr] \cdot \beta \cdot \E[\bm X] \\
   <{} &\Bigl(1-\frac{1}{\delta+\beta} \cdot \beta\Bigr) \cdot \E[\bm X] = \frac{\delta}{\delta+\beta} \cdot \E[\bm X],
  \end{align*}
  where the ``$\le$'' follows from the law of total probability and the non-negativity of $\bm X$. \qedhere
 \end{itemize}
\end{proof}

Analogously to \cref{lem:parametric}, together with \cref{lem:bound completion time SOS}, this implies that for each job~$j \in [n]$ and all $A \in (0, 1]^{\N}$ it holds that
\begin{equation}
 \E[\bm C_j^{\mathrm{SOS}(A)}] \le C_j(\alpha_j) + \sum_{k:\alpha_k \le \eta_k} \frac{\delta}{\delta + \eta_k - \alpha_k} \cdot \bar p_k. \label{ineq:upper bound expected completion time SOS(A) delta}
\end{equation}

\begin{theorem} 
 Assume that all processing times are $\delta$-NBUE for some $\delta \ge 1$. Let $f$ be a probability density function on $(0, 1]$, and let $c \in \R$ be such that for every $x \in (0, 1]$ the following two conditions are fulfilled:
 \begin{enumerate}
  \item $\int_0^x \frac{\delta}{\delta+x-\alpha} \cdot f(\alpha)\dif \alpha \le (c-1) \cdot x$,
  \item $\left(1+\int_0^1 \frac{\delta}{\delta+1-\alpha} \cdot f(\alpha)\dif \alpha\right) \cdot \int_{1-x}^1 f(\alpha)\dif \alpha \le c \cdot x$.
 \end{enumerate}
 Then the $\mathrm{RSOS}(f)$ policy satisfies inequality~\eqref{ineq:RSOS(f)}.
\end{theorem}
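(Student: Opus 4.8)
The plan is to transcribe the proof of \cref{thm:RSOS Delta} essentially line by line, with the single-machine estimate coming from \cref{lem:parametric} replaced by its $\delta$-NBUE counterpart~\eqref{ineq:upper bound expected completion time SOS(A) delta}. Fix a job~$j \in [n]$ and recall from that proof the decomposition of $[n] \setminus \{j\}$ into the two disjoint sets $N_1$ and $N_2$, the quantities~$\eta_k$ for $k \in [n]$ and $\mu_k$ for $k \in N_2$, the representation~\eqref{eq:alpha point} of $C_j(\alpha)$, and the inequality $\sum_{k \in N_1} \eta_k \bar p_k \le S_j^{\mathrm{S^p}}$ from~\eqref{ineq:total processing time before S_j}. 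Since for $k \in N_2$ we have $\eta_k = 1$ exactly when $\alpha_j > \mu_k$ and $\eta_k = 0$ otherwise, plugging~\eqref{eq:alpha point} into~\eqref{ineq:upper bound expected completion time SOS(A) delta} gives, for every fixed $A = (\alpha_k)_{k \in \N} \in (0,1]^{\N}$, the analogue of~\eqref{ineq:upper bound expected completion time SOS(A) Delta}:
\[\E[\bm C_j^{\mathrm{SOS}(A)}] \le S_j^{\mathrm{S^p}} + \sum_{\substack{k \in N_1 \\ \alpha_k \le \eta_k}} \frac{\delta}{\delta + \eta_k - \alpha_k}\, \bar p_k + \sum_{\substack{k \in N_2 \\ \alpha_j > \mu_k}} \Bigl(1 + \frac{\delta}{\delta + 1 - \alpha_k}\Bigr) \bar p_k + (1 + \alpha_j)\, \bar p_j,\]
where the summand $k = j$ of~\eqref{ineq:upper bound expected completion time SOS(A) delta} contributes $\bar p_j$ (for $k = j$ one has $\eta_j = \alpha_j$, hence the coefficient $\delta/\delta = 1$) and $C_j(\alpha_j)$ contributes the remaining $\alpha_j \bar p_j$.

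The next step is to integrate this inequality against $f(\alpha_1) \dotsm f(\alpha_n)$, using that the $\bm\alpha_k$ are mutually independent and independent of the processing times. This turns the $N_1$-sum into $\sum_{k \in N_1} \bigl(\int_0^{\eta_k} \frac{\delta}{\delta+\eta_k-\alpha} f(\alpha)\dif\alpha\bigr) \bar p_k$, the $N_2$-sum into $\sum_{k \in N_2} \bigl(1 + \int_0^1 \frac{\delta}{\delta+1-\alpha} f(\alpha)\dif\alpha\bigr) \bigl(\int_{\mu_k}^1 f(\alpha)\dif\alpha\bigr) \bar p_k$, and the last term into $\bigl(1 + \int_0^1 \alpha f(\alpha)\dif\alpha\bigr) \bar p_j$. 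Applying the first condition with $x = \eta_k$ bounds each $N_1$-summand by $(c-1)\eta_k \bar p_k$, and applying the second condition with $x = 1 - \mu_k$ bounds each $N_2$-summand by $c(1-\mu_k) \bar p_k$; combined with $\sum_{k \in N_1} \eta_k \bar p_k \le S_j^{\mathrm{S^p}}$ this gives
\[\E[\bm C_j^{\mathrm{RSOS}(f)}] \le c\, S_j^{\mathrm{S^p}} + c \sum_{k \in N_2} (1 - \mu_k)\, \bar p_k + \Bigl(1 + \int_0^1 \alpha f(\alpha)\dif\alpha\Bigr) \bar p_j.\]
Just as in the proof of \cref{thm:RSOS Delta}, $S_j^{\mathrm{S^p}} + \sum_{k \in N_2}(1-\mu_k)\bar p_k + \frac{\bar p_j}{2} = \int_0^1 C_j(\alpha)\dif\alpha = M_j^{\mathrm{S^p}}$ by~\eqref{eq:alpha point} and~\eqref{eq:mean busy time alpha point}, so the right-hand side is at most $c\bigl(M_j^{\mathrm{S^p}} + \frac{\bar p_j}{2}\bigr)$ once the coefficient $1 + \int_0^1 \alpha f(\alpha)\dif\alpha$ of $\bar p_j$ is shown to be at most~$c$. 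Summing over $j$ with weights $w_j$ and using \cref{lem:lower bound} (which, as noted below its statement, makes $\sum_j w_j(M_j^{\mathrm{S^p}} + \frac{\bar p_j}{2})$ a lower bound on $\E[\sum_j w_j \bm C_j^{\mathrm{OPT}}]$) then yields both inequalities in~\eqref{ineq:RSOS(f)}.

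The only genuinely new ingredient, and the step I expect to require the most care, is the bound $1 + \int_0^1 \alpha f(\alpha)\dif\alpha \le c$ used for the $\bar p_j$-coefficient. I would obtain it from the elementary pointwise inequality $\alpha \le \frac{\delta}{\delta + 1 - \alpha}$, valid for all $\alpha \in (0,1]$ and $\delta \ge 1$ (it is equivalent to $\alpha(1-\alpha) \le \delta(1-\alpha)$), which integrates to $\int_0^1 \alpha f(\alpha)\dif\alpha \le \int_0^1 \frac{\delta}{\delta+1-\alpha} f(\alpha)\dif\alpha$; the second condition specialized to $x = 1$ — where $\int_0^1 f = 1$ because $f$ is a density — then gives $1 + \int_0^1 \alpha f(\alpha)\dif\alpha \le 1 + \int_0^1 \frac{\delta}{\delta+1-\alpha} f(\alpha)\dif\alpha \le c$. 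Everything else is a faithful line-by-line transcription of the proof of \cref{thm:RSOS Delta}, with the coefficient $1 - g(\Delta)(\eta_k - \alpha_k)$ replaced throughout by $\frac{\delta}{\delta+\eta_k-\alpha_k}$ and $2 - g(\Delta)(1-\alpha_k)$ replaced by $1 + \frac{\delta}{\delta+1-\alpha_k}$.
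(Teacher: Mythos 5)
Your proposal is correct and follows the paper's own proof essentially line by line, including the one genuinely new step: the paper likewise bounds the coefficient of $\bar p_j$ via the pointwise inequality $\alpha_j \le \frac{\delta}{\delta+1-\alpha_j}$ for $\delta \ge 1$ and then invokes the second condition (at $x=1$, where $\int_0^1 f = 1$) to conclude that this coefficient is at most $c$. No gaps.
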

\begin{proof}
 The proof proceeds analogously to the proof of \cref{thm:RSOS Delta}. Let $j \in [n]$, and let $N_1$ and $N_2$ as well as $\eta_k$, $k \in [n]$, and $\mu_k$, $k \in N_2$, be as defined in \cref{subsec:refined RSOS} with respect to this job~$j$.
 Substituting \cref{eq:alpha point} in \cref{ineq:upper bound expected completion time SOS(A) delta} results in
 \begin{equation}
  \E[\bm C_j^{\mathrm{SOS}(\mathrm A)}] \le S_j^{\mathrm{S^p}} + \sum_{\substack{k \in N_1 \\ \alpha_k \le \eta_k}} \frac{\delta}{\delta+\eta_k-\alpha_k} \bar p_k + \sum_{\substack{k \in N_2 \\ \alpha_j > \mu_k}} \Bigl(1+\frac{\delta}{\delta + 1 - \alpha_k}\Bigr) \bar p_k + (1+\alpha_j) \bar p_j. \label{ineq:upper bound expected completion time SOS(A) delta proof RSOS}
 \end{equation}
 By the law of total expectation and the independence of $\bm p_j$, $j \in [n]$, and $\bm \alpha_j$, $j \in [n]$,
 \begin{align*}
  &\E[\bm C_j^{\mathrm{RSOS}(f)}] = \int_0^1 \cdots \int_0^1 \E[\bm C_j^{\mathrm{SOS}(\alpha_1,\dotsc,\alpha_n)}] f(\alpha_1) \dotsm f(\alpha_n)\dif \alpha_1 \cdots \dif \alpha_n \\
  \stackrel{\mathclap{\eqref{ineq:upper bound expected completion time SOS(A) delta proof RSOS}}}\le{} &S_j^{\mathrm{S^p}} + \sum_{k \in N_1} \int_0^{\eta_k} \frac{\delta}{\delta+\eta_k-\alpha_k} \cdot f(\alpha_k) \dif \alpha_k \bar p_k\\ &\phantom{S_j^{\mathrm{S^p}}}{} + \sum_{k \in N_2} \biggl(1+\int_0^1 \frac{\delta}{\delta+1-\alpha_k} \cdot f(\alpha_k) \dif \alpha_k\biggr) \cdot \int_{\mu_k}^1 f(\alpha_j)\dif\alpha_j \bar p_k\\ &\phantom{S_j^{\mathrm{S^p}}}{} + \biggl(1 + \int_0^1 \alpha_j \cdot f(\alpha_j)\dif \alpha_j\biggr) \cdot \bar p_j \\
  \le{} &S_j^{\mathrm{S^p}} + \sum_{k \in N_1} (c-1) \eta_k \bar p_k + \sum_{k \in N_2} c (1-\mu_k) \bar p_k \\
  &\phantom{S_j^{\mathrm{S^p}}}{} + \biggl(1+\int_0^1 \frac{\delta}{\delta+1-\alpha_j} \cdot f(\alpha_j)\dif \alpha_j\biggr) \cdot \bar p_j \\
  \stackrel{\mathclap{\eqref{ineq:total processing time before S_j}}}\le{} &c \cdot S_j^{\mathrm{S^p}} + c \cdot \sum_{k \in N_2} (1-\mu_k) \bar p_k + c \cdot \bar p_j \\
  ={} &c \cdot \biggl(\int_0^1 S_j^{\mathrm{S^p}} + \sum_{\substack{k \in N_2\\ \alpha > \mu_k}} \bar p_k + \alpha \bar p_j\dif \alpha + \frac{\bar p_j}{2}\biggr) \\
  \stackrel{\mathclap{\eqref{eq:alpha point}}}={} &c \cdot \biggl(\int_0^1 C_j(\alpha)\dif \alpha + \frac{\bar p_j}{2}\biggr) \stackrel{\eqref{eq:mean busy time alpha point}}= c \cdot \Bigl(M_j^{\mathrm{S^p}} + \frac{\bar p_j}{2}\Bigr),
 \end{align*}
 where we used in the second inequality that $\alpha_j \le \frac{\delta}{\delta+1-\alpha_j}$ for all $\alpha_j \in [0, 1]$ and $\delta \ge 1$. Consequently, by linearity of expectation and \cref{lem:lower bound},
 \[\E\biggl[\,\sum_{j=1}^n w_j \bm C_j^{\mathrm{RSOS}(f)}\biggr] \le c \cdot \sum_{j=1}^n w_j \cdot \Bigl(M_j^{\mathrm{S^p}} + \frac{\bar p_j}{2}\Bigr) \le c \cdot \E\biggl[\,\sum_{j=1}^n w_j \bm C_j^{\mathrm{OPT}}\biggr]. \qedhere\]
\end{proof}
To get the results in \cref{fig:performance single delta}, we numerically computed a good probability density functions and resulting performance guarantees for $\delta$-NBUE processing times for $\delta \in [1, 2]$. This is done by approximating the density using step functions on a discretization of $(0, 1]$ by solving a quadratically constrained program using Gurobi, for details cf.~\cite{diss}. The resulting probability density function for $\delta = 1$ is illustrated in \cref{fig:density delta}.
\begin{figure}
 \centering
 \begin{tikzpicture}[font=\footnotesize]
  \begin{axis}[
    xmin=0, xmax=1.05, ymin=0, ymax=1.39, xlabel={$\alpha$}, ylabel={$f(\alpha)$}
   ]
   \addplot[very thick, jump mark left] table[x expr=\coordindex/1000, y index=0] {density_delta.dat};
  \end{axis} 
 \end{tikzpicture}
 \caption{Probability density function~$f$ on $(0, 1]$ used for NBUE processing times.}
 \label{fig:density delta}
\end{figure}
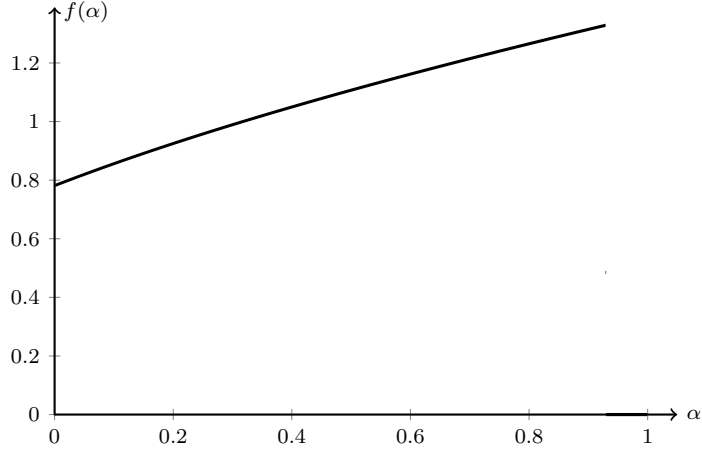

In the same way, the proof of \cref{thm:SOS Delta} can be modified by replacing the application of \cref{lem:parametric} with \cref{ineq:upper bound expected completion time SOS(A) delta}. This yields for $\alpha \in (0, 1]$ the performance guarantee \[c \coloneqq \max\Bigl\{1+\frac 1 \alpha,\ \frac{(2+\alpha) \cdot \delta+1 - \alpha^2}{\delta+1-\alpha}\Bigr\}.\]
For any given $\delta \ge 0$ the first term is monotonically increasing in $\alpha$, and the second term is decreasing. Therefore, the maximum is minimized at the unique solution fo the cubic equation $\alpha^3 = (1+\delta) \cdot (\alpha^2 + \alpha - 1)$ in the interval~$(0, 1]$. The resulting performance guarantees are shown in \cref{fig:performance single delta}.

Note that the parameter $\delta$ cannot be bounded from above in terms of $\Delta$.
This is shown by the simple example of a random variable $\bm X$ that takes the two values $1$ and $N \in \N$ both with probability~$\frac 1 2$. Therefore, the policies and competitiveness bounds for $\delta$-NBUE processing times developed here cannot be used to derive bounds for processing times with bounded coefficients of variation.

\end{document}